\setlist[enumerate]{label=(\roman{*})}
\newsavebox{\measure@tikzpicture}
  \def\tikz@width{#1}%
  \def\tikzscale{1}\begin{lrbox}{\measure@tikzpicture}%
  \edef\tikzscale{\pgfmathresult}%
\newcommand{\localCommand}{} 
\newcommand{\newlengthsettowidth}[2]{\newlength {#1} \settowidth{#1}{#2}}
\newcommand{\newcounterset}      [2]{\newcounter{#1} \setcounter{#1}{#2}}
\newcommand{\ensurecommand}[2]{
  \providecommand{#1}{}
  \renewcommand{#1}{#2}}
\newcommand{\atl}{\geq}                                 
\newcommand{\atm}{\leq}                                 
\ensurecommand{\subsetneq}  {\subsetneqq}               
\newcommand{\angles}  [1]{     \langle #1       \rangle} 
\newcommand{\Ifthenelse}[3]{\ifthenelse{#1}{#2}{#3}}   
\newcommand{\Ifthen}    [2]{\Ifthenelse{#1}{#2}{}}
\newcommand{\Equal}     [2]{\equal{#1}{#2}}            
\newcommand{\Empty}     [1]{\Equal{#1}{}}
\newlength{\posLength}
\newcommand{\pos}[3][c]{\settowidth{\posLength}{#3}\makebox[\posLength][#1]{#2}}
\newlengthsettowidth{\tabLength}{\ \ \ }
\newcommand{\wbox}[2][\tabLength]{\hspace*{#1}\mbox{#2}\hspace*{#1}}
\newcommand{\Paragraph}[1][\baselineskip]{\vspace{#1}}
\newcommand{\End}{\end{document}}                       
\newcommand{\format}{}                                  
\newcommand{\emphasize}[1]{\textbf{#1}}                 
\newcommand{\emphdef}[1]{\emphasize{#1}}                
\newtheorem{DEF}{Definition}
\crefname  {DEF}{Definition}{Definitions}
\Crefname  {DEF}{Definition}{Definitions}
\newtheorem{THE}[DEF]{Theorem}
\crefname  {THE}{Theorem}{Theorems}
\Crefname  {THE}{Theorem}{Theorems}
\newtheorem{LEM}[DEF]{Lemma}
\crefname  {LEM}{Lemma}{Lemmata}
\Crefname  {LEM}{Lemma}{Lemmata}
\newtheorem{COR}[DEF]{Corollary}
\crefname  {COR}{Corollary}{Corollaries}
\Crefname  {COR}{Corollary}{Corollaries}
\crefname  {EXA}{Example}{Examples}
\Crefname  {EXA}{Example}{Examples}
\crefname  {PRO}{Property}{Properties}
\Crefname  {PRO}{Property}{Properties}
\crefname  {figure}{Figure}{Figures}
\Crefname  {figure}{Figure}{Figures}
\crefname  {algorithm}{Algorithm}{Algorithms}
\Crefname  {algorithm}{Algorithm}{Algorithms}
\crefname  {table}{Table}{Tables}
\Crefname  {table}{Table}{Tables}
\crefname  {section}{Section}{Sections}
\crefname  {subsection}{Section}{Sections}
\crefname  {appendix}{Appendix}{Appendices}
\tikzstyle{bwnode}=[] 
\tikzstyle{minstate}=[] 
\tikzstyle{fwnode}=[] 
\tikzstyle{maxstate}=[] 
\newcommand{\SHAREDSTATE}[1]{}
\newcommand{\LOCALSTATE}[1]{
\int_to_symbols:nnn {#1} {18202224}%
{%
{0}{\astate{0}{0}{0}{\labela}}%
{1}{\astate{0}{0}{1}{\labela}}%
{2}{\astate{0}{1}{0}{\labela}}%
{3}{\astate{0}{1}{1}{\labela}}%
{4}{\astate{1}{0}{0}{\labela}}%
{5}{\astate{1}{0}{1}{\labela}}%
{6}{\astate{1}{1}{0}{\labela}}%
{7}{\astate{1}{1}{1}{\labela}}%
{8}{\astate{0}{0}{0}{\labelb}}%
{9}{\astate{0}{0}{1}{\labelb}}%
{10}{\astate{0}{1}{0}{\labelb}}%
{11}{\astate{0}{1}{1}{\labelb}}%
{12}{\astate{1}{0}{0}{\labelb}}%
{13}{\astate{1}{0}{1}{\labelb}}%
{14}{\astate{1}{1}{0}{\labelb}}%
{15}{\astate{1}{1}{1}{\labelb}}%
{16}{\astate{0}{0}{0}{\labelc}}%
{17}{\astate{0}{0}{1}{\labelc}}%
{18}{\astate{0}{1}{0}{\labelc}}%
{19}{\astate{0}{1}{1}{\labelc}}%
{20}{\astate{1}{0}{0}{\labelc}}%
{21}{\astate{1}{0}{1}{\labelc}}%
{22}{\astate{1}{1}{0}{\labelc}}%
{23}{\astate{1}{1}{1}{\labelc}}%
{146}{\labela/(\PREDa\ne\PREDc)\andop(\PREDc\impliesop\negop\PREDb)}
{172022}{\labelc/(\PREDa\ne\PREDc)\andop(\PREDc\impliesop\negop\PREDb)}
{18202223}{\labelc/(\negop\PREDc\andop(\PREDb\orop\PREDa)) \orop (\PREDa \andop \PREDb)}
{182022}{\labelc/\negop\PREDc\andop(\PREDa\orop\PREDb)}
{2022}{\labelc/\PREDa\andop\negop\PREDc}
{2456}{\labela/(\PREDa\andop\negop\PREDb)\orop(\PREDb\andop\negop\PREDc)}
}%
}
\tikzstyle{minsky}=[node distance=3.3cm,auto,font=\small] 
\newcommand{\numIT}{\#_{\mathit{IT}}}
\tikzstyle{trans}=[->,decorate, decoration={snake,amplitude=.3mm,segment length=2mm,post length=1mm},shorten >=1pt]
\tikzstyle{inhib}=[-o,shorten >=1pt]
\tikzstyle{query}=[rectangle,rounded corners=2.2pt,draw=black,thin] 
\tikzstyle{nonblocked}=[]
\tikzstyle{nonquery}=[rectangle,rounded corners=2.2pt,draw=black,thin]
\tikzstyle{processed}=[very thick]
\tikzstyle{pruned}=[]
\tikzstyle{pending}=[]
\tikzstyle{tononblocked}=[]
\tikzstyle{tikztransrel}=[to reversed->]
\tikzset{petri net/.style={scale=1.0,bend angle=45,every transition/.style={fill,minimum width=1mm,minimum height=3.5mm},every place/.style={draw,thick,minimum size=4mm}}}
\tikzstyle{placea}=[place,label=above:{$p_1$}]
\tikzstyle{placeb}=[place,label=above:{$p_2$}]
\tikzstyle{placec}=[place,label=above:{$p_3$}]
\tikzstyle{placed}=[place,label=above:{$p_4$}]
\tikzstyle{thread_trans}=[|->] 
\pgfplotsset{ccomp/.style= {
	footnotesize,
	xmin=1,xmax=\totalbenchmarksnum,ymin=0.1,ymax=1800,
	width=\ccompwidth,height=\ccompheight,
	minor xtick={0,1,2,...,\totalbenchmarksnum},
	xtick={0,5,...,\totalbenchmarksnum},
	xticklabels={0,5,...,\totalbenchmarksnum},
	yticklabels={,,1$s$,10$s$,100$s$,1000$s$},
	legend style={draw=none},
}}
\tikzstyle{momark}=[mark=*]
\tikzstyle{ormark}=[mark=diamond*]
\tikzstyle{rgmark}=[mark=square*]
\tikzstyle{samark}=[mark=asterisk]
\tikzstyle{btcmark}=[mark=*]
\tikzstyle{btcmarkopt}=[mark=10-pointed star]
\tikzstyle{dcdmark}=[mark=diamond*]
\tikzstyle{petmark}=[mark=square*]
\tikzstyle{mbrmark}=[mark=asterisk]
\newcommand{\HLGM}[1]{#1}
\global\long\def\ie{\HLGM{i.e.}} 
\global\long\def\eg{\HLGM{e.g.}} 
\global\long\def\wrt{\HLGM{with respect to}} 
\global\long\def\respectively{\HLGM{respectively}} 
\global\long\def\respectivelyend{\HLGM{respectively}} 
\global\long\def\wqodness{well quasi-orderedness}
\newcommand{\dr}{\HLGM{dual-ref\-er\-ence}}
\newcommand{\DR}{\HLGM{Dual-Ref\-er\-ence}}
\newcommand{\Dr}{\HLGM{Dual-ref\-er\-ence}}
\newcommand{\drabb}{\HLGM{DR}}
\newcommand{\execution}{execution}
\newcommand{\tablesizeb}[1]{\footnotesize{#1}}
\tikzstyle{quader}=[scale=1.2]
\tikzstyle{plainst}=[fill, pattern=north east lines, draw=black,line width=1pt]
\global\long\def\C{\HLGM{\text{C}}}
\global\long\def\mathperiod{\HLGM{\,\text{.}}}
\global\long\def\mathcomma{\HLGM{\,\text{,}}}
\global\long\def\equalsop{\HLGM{=}}
\global\long\def\assignop{\HLGM{\colonequals}}
\global\long\def\plusop{\HLGM{+}}
\global\long\def\st{\HLGM{{\ : \ }}}
\global\long\def\suchthat{\HLGM{\operatorname:}}
\global\long\def\someindex{\HLGM{i}}
\global\long\def\impliesop{\HLGM{\Rightarrow}}
\global\long\def\true{\HLGM{\mathit{true}}}
\newcommand{\booleantext}[1]{\mathtt{#1}}
\global\long\def\truetext{\booleantext T}
\global\long\def\falsetext{\booleantext F}
\global\long\def\truefalsetext{\star}
\global\long\def\andop{\HLGM{\wedge}}
\global\long\def\Andop{\HLGM{\bigwedge}}
\global\long\def\orop{\HLGM{\vee}}
\global\long\def\Orop{\HLGM{\bigvee}}
\global\long\def\iffop{\HLGM{\Leftrightarrow}}
\global\long\def\negop#1{\HLGM{\neg#1}}
\global\long\def\union{\HLGM{\cup}}
\global\long\def\Union{\HLGM{\bigcup}}
\global\long\def\program{\HLGM{\mathcal{P}}}
\global\long\def\satbound{\HLGM{\mathsf{b}}}
\global\long\def\threadcount{n}
\global\long\def\sharedvars{\HLGM{S}}
\global\long\def\localvarstext{L}
\newcommandx\localvars[1][usedefault, addprefix=\global, 1=]{\HLGM{\localvarstext_{#1}}}
\newcommandx\abslocalvars[1][usedefault, addprefix=\global, 1=]{\HLGM{\absof{\localvarstext}_{#1}}}
\newcommandx\exabslocalvars[1][usedefault, addprefix=\global, 1=]{\HLGM{\exabsof{\localvarstext}_{#1}}}
\newcommandx\localvarsp[1][usedefault, addprefix=\global, 1=a]{\HLGM{\localvars[#1]'}}
\newcommand{\lock}{\mathit{lock}}
\global\long\def\abslocalvarspas{\abslocalvars[\pasindex]}
\global\long\def\programvarstext{V}
\newcommandx\programvars[1][usedefault, addprefix=\global, 1=]{\HLGM{\programvarstext_{#1}}}
\newcommandx\absprogramvars[1][usedefault, addprefix=\global, 1=]{\HLGM{\absof{\programvarstext}_{#1}}}
\newcommandx\exabsprogramvars[1][usedefault, addprefix=\global, 1=]{\HLGM{\exabsof{\programvarstext}_{#1}}}
\global\long\def\programctr{\HLGM{\mathtt{pc}}}
\global\long\def\programvar{\HLGM{\mathtt{v}}}
\global\long\def\localboolvartext{\HLGM{b}}
\global\long\def\localboolvar{\HLGM{\mathtt{\localboolvartext}}}
\global\long\def\localboolvarp{\HLGM{\localboolvar'}}
\global\long\def\localboolvarpas{\HLGM{\localboolvar_{\pasindex}}}
\global\long\def\localboolvarpasp{\HLGM{\localboolvarpas'}}
\global\long\def\localvartext{\HLGM{m}}
\global\long\def\localvar{\HLGM{\mathtt{\localvartext}}}
\global\long\def\localvarothertext{\HLGM{l}}
\global\long\def\localvarother{\HLGM{\mathtt{\localvarothertext}}}
\global\long\def\localvarotherpas{\HLGM{\localvarother_{\pasindex}}}
\global\long\def\sharedvartext{\HLGM{s}}
\global\long\def\sharedvar{\HLGM{\mathtt{\sharedvartext}}}
\global\long\def\sharedvarothertext{\HLGM{t}}
\global\long\def\sharedvarother{\HLGM{\mathtt{\sharedvarothertext}}}
\global\long\def\localstate{\HLGM{c}}
\global\long\def\somestate{\HLGM{v}}
\global\long\def\rel{\HLGM{\mathcal{R}}}
\global\long\def\absrel{\HLGM{\absof\rel}}
\global\long\def\exabsrel{\HLGM{\exabsof\rel}}
\global\long\def\covers{\HLGM{\ge}}
\global\long\def\coveredby{\HLGM{\le}}
\global\long\def\strictlycovers{\HLGM{>}}
\global\long\def\strictlycoveredby{\HLGM{<}}
\global\long\def\tool#1{\mathsf{#1}}
\global\long\def\breach{\HLGM{\tool{breach}}}
\global\long\def\slab{\HLGM{\tool{slab}}}
\global\long\def\symmpatext{symmpa}
\global\long\def\symmpa{\HLGM{\mbox{\ensuremath{\tool{\symmpatext}}}}}
\global\long\def\duet{\HLGM{\mbox{\ensuremath{\tool{duet}}}}}
\global\long\def\boom{\HLGM{\tool{boom}}}
\global\long\def\threadertext{cream}
\global\long\def\threader{\HLGM{\tool{\threadertext}}}
\global\long\def\absof#1{\HLGM{\tilde{#1}}}
\global\long\def\exabsof#1{\HLGM{\hat{#1}}}
\global\long\def\preddef{\HLGM{\mathrel{::}}}
\global\long\def\reldef{\HLGM{\coloncolon}}
\global\long\def\predicateidx{\HLGM{c}}
\global\long\def\predicatenum{\HLGM{m}}
\global\long\def\absprogram{\HLGM{\absof{\program}}}
\global\long\def\exabsprogram{\HLGM{\exabsof{\program}}}
\global\long\def\naturals{\HLGM{\mathbb{N}}}
\global\long\def\booleans{\HLGM{\mathbb{B}}}
\global\long\def\parallelinst#1#2{\HLGM{#1^{#2}}}
\global\long\def\pvarsact{\HLGM{\localvars}}
\global\long\def\abspvarsact{\HLGM{\absof\pvarsact}}
\global\long\def\pasindex{\HLGM{P}}
\global\long\def\pvarspas{\HLGM{\localvars[\pasindex]}}
\global\long\def\abspvarspas{\HLGM{\abslocalvars[\pasindex]}}
\global\long\def\initialrel{\HLGM{\mathcal{I}}}
\global\long\def\absinitialrel{\HLGM{\absof\initialrel}}
\global\long\def\exabsinitialrel{\HLGM{\exabsof\initialrel}}
\global\long\def\labelformat#1{\HLGM{#1}}
\global\long\def\somelabel#1{\HLGM{\labelformat{\ell_{#1}}}}
\global\long\def\labela{\HLGM{\labelformat{\somelabel{1}}}}
\global\long\def\labelb{\HLGM{\labelformat{\somelabel{2}}}}
\global\long\def\labelc{\HLGM{\labelformat{\somelabel{3}}}}
\global\long\def\localvarpas{\HLGM{\localvar_{\pasindex}}}
\global\long\def\localvarotherpas{\HLGM{\localvarother_{\pasindex}}}
\global\long\def\labelend{\HLGM{\somelabel{e}}}
\global\long\def\relreset{\HLGM{\mathcal{F}}} 
\global\long\def\pred#1{\HLGM{\predicate[#1]}}
\global\long\def\bpredo{\HLGM{\mathtt{b}}} 
\global\long\def\bpredopas{\bpredo_{\pasindex}} 
\global\long\def\bpredop{\bpredo'} 
\global\long\def\bpredopasp{\bpredopas'} 
\global\long\def\bpred#1{\HLGM{\mathtt{b}[#1]}} 
\global\long\def\bpredpas#1{\HLGM{\bpred{#1}_{\pasindex}}}
\global\long\def\cp{\HLGM{\bpredpas 3}}
\global\long\def\ap{\HLGM{\bpredpas 1}}
\global\long\def\pc{\HLGM{\programctr}}
\global\long\def\pcp{\HLGM{\programctr_{\pasindex}}}
\global\long\def\PREDa{\HLGM{\bpred 1}}
\global\long\def\PREDb{\HLGM{\bpred 2}}
\global\long\def\PREDc{\HLGM{\bpred 3}}
\global\long\def\bp{\HLGM{\PREDb_{\pasindex}}}
\global\long\def\absprogrammon{\HLGM{\absprogram_{m}}}
\global\long\def\absrelmon{\HLGM{\absof{\rel}_{m}}}
\global\long\def\assertion{\HLGM{f}}
\global\long\def\idxact{\HLGM{a}}
\global\long\def\idxpas{\HLGM{p}}
\global\long\def\subst#1#2#3{\HLGM{{{#1}\{{#3}\!\triangleright\!{#2}\}}}}
\global\long\def\substp#1#2#3{\HLGM{{{#1}\{{#3}\!\triangleright\mkern-13.0mu\raisebox{1.5pt}{$\triangleright$}{#2}\}}}}
\global\long\def\MInc#1{\HLGM{#1{\texttt{++}}}}
\global\long\def\MDec#1{\HLGM{#1{\texttt{{-}-}}}}
\global\long\def\eqtestz{\HLGM{{\mathrel{\stackrel{\scriptscriptstyle ?}{=}}}}}
\global\long\def\MTZ#1{\HLGM{#1 \eqtestz 0}} 
\global\long\def\localstateo{\HLGM{\localstate_{1}}}
\global\long\def\localstateoo{\HLGM{\localstate_{2}}}
\global\long\def\sharedstateo{\HLGM{s_{0}}}
\global\long\def\sharedstateoo{\HLGM{s_{1}}}
\global\long\def\sharedstateooo{\HLGM{s_{2}}}
\global\long\def\sharedstateoooo{\HLGM{s_{3}}}
\global\long\def\sharedstateooooo{\HLGM{s_{4}}}
\global\long\def\captionheadperiod#1{\HLGM{.}}
\global\long\def\citenopassivesupport{\HLGM{\cite{DBLP:conf/cav/TorreMP10,DBLP:conf/tacas/DragerKFW10,FarzanK12}}}
\newcommand{\astate}[4]{\HLGM{(#4/{#1}{}{#2}{}{#3})}}
\newcommand{\astateformula}[4]{\HLGM{%
\pc=#4\andop
\ifthenelse{1=#1}{\PREDa}{\negop\PREDa}\andop
\ifthenelse{1=#2}{\PREDb}{\negop\PREDb}\andop
\ifthenelse{1=#3}{\PREDc}{\negop\PREDc}}}
\newcommand{\astateformulapas}[4]{\HLGM{%
\pcp=#4\andop
\ifthenelse{1=#1}{\ap}{\negop\ap}\andop
\ifthenelse{1=#2}{\bp}{\negop\bp}\andop
\ifthenelse{1=#3}{\cp}{\negop\cp}}}
\newcommandx\preda[2][usedefault, addprefix=\global, 1=, 2=\pasindex]{\localvar_{#1}\ne\localvar_{#2}}
\newcommandx\predaneg[2][usedefault, addprefix=\global, 1=, 2=\pasindex]{\localvar_{#1}=\localvar_{#2}}
\newcommandx\predb[1][usedefault, addprefix=\global, 1=]{\sharedvarother>\localvar_{#1}}
\newcommandx\predc[1][usedefault, addprefix=\global, 1=]{\sharedvar=\localvar_{#1}}
\newcommand{\allequal}{\stackrel{\circ}=}
\global\long\def\naturals{\mathbb{N}}
\tikzstyle{covprereledge}=[<-,latex-left hook] 
\tikzstyle{covprereledgewid}=[draw=none]
\tikzstyle{covprereledgewidviz}=[covprereledge,dashed] 
\tikzstyle{notpruned}=[text=black]
\tikzstyle{coversedge}=[densely dotted]
\tikzstyle{bck}=[opacity=0.3]\pgfplotsset{my personal style/.style= {grid=major,font=\large}}
\pgfplotsset{run time profile/.style= {every non boxed x axis/.style={xtick align=center,enlarge x limits=false,x axis line style={-|}},width=10.0cm,height=6.0cm,axis y line=middle,axis x line=middle,ytick=\empty,scaled ticks=false,}}
\pgfplotsset{run time profile back/.style= {scaled ticks=false,hide x axis,xtick=\empty,xticklabels=\empty,}}
\newcommand{\stateset}{V}
\newcommand{\transrel}{\rightarrowtail}
\newcommand{\ws}{well-quasi ordered} 
\renewcommand{\stateset}{\Sigma} 
\tikzstyle{minsky}=[node distance=4.1cm,auto,font=\small] 
\tikzstyle{reachtrees}=[yscale=0.5,rotate=0,font=\scriptsize,scale=\tikzscale]
\tikzstyle{proof construction}=[xscale=2.2,yscale=1.6,font=\small] 
\newcommand{\inlinedef}[1]{\emph{#1}}
\newcommand{\margin}[1]{}
\RenewDocumentCommand\parallelinst{mmo}{%
  \IfNoValueTF{#3}
  {#1^{#2}}
  {#1(#3)^{#2}}%
}
\NewDocumentCommand\subsc{mmo}{%
  \IfNoValueTF{#3}%
  {{#1}_{\mathrlap{#2}}}%
  {{#1}_{\mathrlap{#2:#3}}}%
}
\NewDocumentCommand\Afunc{ooo}{%
  \mathcal{C}
  \IfNoValueTF{#2}%
  {}%
  {_{#2}}%
  \IfNoValueTF{#3}
  {}%
  {(#3)}%
  }
\NewDocumentCommand\Arel{ooo}{%
  \mathcal{D}%
  \IfNoValueTF{#1}
  {}%
  {^{#1}}%
  \IfNoValueTF{#2}
  {}%
  {_{#2}}%
  \IfNoValueTF{#3}
  {}%
  {(#3)}%
  }
\NewDocumentCommand\absfunc{oo}{%
  \alpha
  \IfNoValueTF{#2}%
  {}%
  {(#2)}%
}
\renewcommand{\LOCALSTATE}[1]{
\int_to_symbols:nnn {#1} {18202224}%
{%
{0}{\astate{\falsetext}{\falsetext}{\falsetext}{\labela}}%
{1}{\astate{\falsetext}{\falsetext}{\truetext}{\labela}}%
{2}{\astate{\falsetext}{\truetext}{\falsetext}{\labela}}%
{3}{\astate{\falsetext}{\truetext}{\truetext}{\labela}}%
{4}{\astate{\truetext}{\falsetext}{\falsetext}{\labela}}%
{5}{\astate{\truetext}{\falsetext}{\truetext}{\labela}}%
{6}{\astate{\truetext}{\truetext}{\falsetext}{\labela}}%
{7}{\astate{\truetext}{\truetext}{\truetext}{\labela}}%
{8}{\astate{\falsetext}{\falsetext}{\falsetext}{\labelb}}%
{9}{\astate{\falsetext}{\falsetext}{\truetext}{\labelb}}%
{10}{\astate{\falsetext}{\truetext}{\falsetext}{\labelb}}%
{11}{\astate{\falsetext}{\truetext}{\truetext}{\labelb}}%
{12}{\astate{\truetext}{\falsetext}{\falsetext}{\labelb}}%
{13}{\astate{\truetext}{\falsetext}{\truetext}{\labelb}}%
{14}{\astate{\truetext}{\truetext}{\falsetext}{\labelb}}%
{15}{\astate{\truetext}{\truetext}{\truetext}{\labelb}}%
{16}{\astate{\falsetext}{\falsetext}{\falsetext}{\labelc}}%
{17}{\astate{\falsetext}{\falsetext}{\truetext}{\labelc}}%
{18}{\astate{\falsetext}{\truetext}{\falsetext}{\labelc}}%
{19}{\astate{\falsetext}{\truetext}{\truetext}{\labelc}}%
{20}{\astate{\truetext}{\falsetext}{\falsetext}{\labelc}}%
{21}{\astate{\truetext}{\falsetext}{\truetext}{\labelc}}%
{22}{\astate{\truetext}{\truetext}{\falsetext}{\labelc}}%
{23}{\astate{\truetext}{\truetext}{\truetext}{\labelc}}%
{25}{\astate{\truefalsetext}{\truefalsetext}{\truefalsetext}{\labelc}}%
{26}{end}%
{146}{\labela/(\PREDa\ne\PREDc)\andop(\PREDc\impliesop\negop\PREDb)}
{172022}{\labelc/(\PREDa\ne\PREDc)\andop(\PREDc\impliesop\negop\PREDb)}
{18202223}{\labelc/(\negop\PREDc\andop(\PREDb\orop\PREDa)) \orop (\PREDa \andop \PREDb)}
{182022}{\labelc/\negop\PREDc\andop(\PREDa\orop\PREDb)}
{2022}{\labelc/\PREDa\andop\negop\PREDc}
{2456}{\labela/(\PREDa\andop\negop\PREDb)\orop(\PREDb\andop\negop\PREDc)}
}%
}
\RenewDocumentCommand\astateformula{mmmmo}{%
  \IfNoValueTF{#5}{
    \pc=#4\andop
    \ifthenelse{1=#1}{\bpred1}{\negop {\bpred1}}\andop
    \ifthenelse{1=#2}{\bpred2}{\negop {\bpred2}}\andop
    \ifthenelse{1=#3}{\bpred3}{\negop {\bpred3}}
  }{
    \pc_{#5}=#4\andop
    \ifthenelse{1=#1}{{\bpred1}_{#5}}{\negop {{\bpred1}_{#5}}}\andop
    \ifthenelse{1=#2}{{\bpred2}_{#5}}{\negop {{\bpred2}_{#5}}}\andop
    \ifthenelse{1=#3}{{\bpred3}_{#5}}{\negop {{\bpred3}_{#5}}}
  }%
}
\newcommand{\concref}{HJM04,CCKRT06,CKS07,WBKW07,%
GPR11,DonaldsonFMSD12,FarzanK12,FarzanK13}
\NewDocumentCommand\itembtooldescref{mo}{%
  \IfNoValueTF{#2}
  {\item[\textmd{\em #1:}]}
  {\item[\textmd{\em #1 (#2):}]}%
}
\lstdefinestyle{defcode}
{
tabsize=8,
breaklines=true,
language=C,
morekeywords={fork,wait,signal,broadcast,atomic,cv_broadcast,cv_signal,cv_wait,mtx_unlock,mtx_lock,cv_wait_sig,shared,local,cond},
alsoletter={_},
}
\newcommand{\refcapital}[3]{#1#3} 
\newcommand{\reffull}[2]{#1} 
\renewcommand{\reffull}[2]{#2} 
\newcommand{\appendixref}   [2][!]{\genericref[#1] A a {\reffull{ppendix}   {pp.}} {appendix}   {#2}}
\newcommand{\corollaryref}  [2][!]{\genericref[#1] C c {\reffull{orollary}  {or.}} {corollary}  {#2}}
\newcommand{\definitionref} [2][!]{\genericref[#1] D d {\reffull{efinition} {ef.}} {definition} {#2}}
\newcommand{\figureref}     [2][!]{\genericref[#1] F f {\reffull{igure}     {ig.}} {figure}     {#2}}
\newcommand{\lemmaref}      [2][!]{\genericref[#1] L l {\reffull{emma}      {emma}}{lemma}      {#2}}
\newcommand{\sectionref}    [2][!]{\genericref[#1] S s {\reffull{ection}    {ect.}}{section}    {#2}}
\newcommand{\tableref}      [2][!]{\genericref[#1] T t {\reffull{able}      {able}}{table}      {#2}}
\newcommand{\theoremref}    [2][!]{\genericref[#1] T t {\reffull{heorem}    {hm.}} {theorem}    {#2}}
\newcommand{\Appendixref}   [2][!]{\Genericref[#1]{\reffull{Appendix}   {App.}} {appendix}   {#2}}
\newcommand{\Figureref}     [2][!]{\Genericref[#1]{\reffull{Figure}     {Fig.}} {figure}     {#2}}
\newcommand{\Lemmaref}      [2][!]{\Genericref[#1]{\reffull{Lemma}      {Lemma}}{lemma}      {#2}}
\newcommand{\Tableref}      [2][!]{\Genericref[#1]{\reffull{Table}      {Table}}{table}      {#2}}
\newcommand{\Theoremref}    [2][!]{\Genericref[#1]{\reffull{Theorem}    {Thm.}} {theorem}    {#2}}
\newcommand{\equationref}[2][!]{\Ifthen{\Equal{#1}!}{\refcapital E e {\reffull{quation} {q.}}~}(\ref{equation: #2})}
\newcommand{\Equationref}[2][!]{\Ifthen{\Equal{#1}!}                 {\reffull{Equation}{Eq.}~}(\ref{equation: #2})}
\newcommand{\genericref}   [6][!]{\Ifthen{\Equal{#1}{!}}{\refcapital{#2}{#3}{#4}~}\ref{#5: #6}} 
\newcommand{\Genericref}   [4][!]{\Ifthen{\Equal{#1}{!}}                   {{#2}~}\ref{#3: #4}} 
\tikzstyle{reachtrees}=[yscale=0.7,rotate=0,font=\scriptsize,scale=\tikzscale]
\newcounter{inlinerefcounter}
\crefname{inlinerefcounter}{Hook}{Hooks}
\newcommand{\range}[3][X]{\Ifthen{\Equal{#1}{X}}{\{}#2,\ldots,#3\Ifthen{\Equal{#1}{X}}{\}}} 
\newcommand{\predsym} Q
\renewcommand{\predicateidx} i
\renewcommand{\pred}[2][]{\Ifthenelse{\Empty{#1}}{\predsym[#2]}{\predsym[#2]_{#1}}}
\newcommand{\predact}[1]{\pred[\idxact]{#1}}
\renewcommand{\bpred}[2][]{\Ifthenelse{\Empty{#1}}{b[#2]}{b[#2]_{#1}}}
\newcommand{\bpredact}[1]{\bpred[\idxact]{#1}}
\renewcommand{\bpredpas}[1]{\bpred[\pasindex]{#1}}
\renewcommand{\booleantext}[1]{\mathtt{#1}}
\renewcommand{\truetext} {\booleantext T}
\renewcommand{\falsetext}{\booleantext F}
\renewcommand{\absfunc}{\alpha}
\renewcommand{\parallelinst}[3][]{\Ifthenelse{\Empty{#1}}{{#2}^{#3}}{({#2}_{#1})^{#3}}}
\newcommand{\otherstate}    w
\newcommand{\yetotherstate} u
\newcommand{\slice}[2]{\angles{#1,#2}}
\renewcommand{\labelend}{\ell_{\bot}}
\global\long\def\covers{\succeq}
\global\long\def\coveredby{\preceq}
\global\long\def\strictlycovers{\succ}
\global\long\def\strictlycoveredby{\prec}
\global\long\def\sharedstateo{\HLGM{{0}}}
\global\long\def\sharedstateoo{\HLGM{{1}}}
\global\long\def\sharedstateooo{\HLGM{{2}}}
\global\long\def\sharedstateoooo{\HLGM{{3}}}
\global\long\def\sharedstateooooo{\HLGM{{4}}}
\global\long\def\localvartext{\HLGM{l}}
\global\long\def\localvarothertext{\HLGM{m}}
\global\long\def\localvaryetothertext{\HLGM{k}}
\global\long\def\localvaryetother{\HLGM{\mathtt{\localvaryetothertext}}}
\global\long\def\localvaryetotherpas{\HLGM{\localvaryetother_{\pasindex}}}
\newcommand{\href}[2]{\url{#1}}
\begin{document}

\author{Alexander Kaiser\inst 1 \and Daniel Kroening\inst 1 \and Thomas Wahl\inst 2}
\institute{University of Oxford, United Kingdom \and Northeastern University, Boston, United States}

\title{Lost in Abstraction: \\
  Monotonicity in Multi-Threaded Programs\\
  (Extended Technical Report)%
  \thanks{This work is supported by the Toyota Motor Corporation,
  NSF grant no.~1253331 and ERC project~280053.}}

\maketitle

\begin{abstract}
  \emph{Monotonicity} in concurrent systems stipulates that, in any global
  state, extant system actions remain executable when new processes are added to
  the state. This concept is not only natural and common in
  multi-threaded software, but also useful: if every thread's memory is
  finite, monotonicity often guarantees the decidability of safety property
  verification even when the number of running threads is unknown. In this
  paper, we show that the act of obtaining finite-data thread abstractions
  for model checking can be at odds with monotonicity:
  Predicate-abstracting certain widely used monotone software results in
  non-monotone multi-threaded Boolean programs ---
  the monotonicity is \emph{lost in the abstraction}. As a result,
  well-established sound and complete safety checking algorithms become
  inapplicable; in fact, safety checking turns out to be undecidable for
  the obtained class of unbounded-thread Boolean programs. We demonstrate
  how the abstract programs can be modified into monotone ones, without
  affecting safety properties of the non-monotone abstraction. This
  significantly improves earlier approaches of enforcing monotonicity via
  overapproximations.
\end{abstract}

\section{Introduction}
\label{section: Introduction}

This paper addresses non-recursive procedures executed by multiple threads
(\eg\ dynamically generated, and possibly unbounded in number), which
communicate via shared variables or higher-level mechanisms such as
mutexes. OS-level code, including Windows, UNIX, and Mac OS device drivers,
makes frequent use of such concurrency APIs, whose correct use is therefore
critical to ensure a reliable programming environment.

The utility of \emph{predicate abstraction} as a safety analysis method is
known to depend critically on the choice of predicates: the consequences of
a poor choice range from inferior performance to flat-out unprovability of
certain properties.
We propose in this paper an extension of predicate abstraction to
multi-threaded programs that enables
reasoning about intricate data relationships, namely
\begin{description}

\item[\textbf{shared-variable:}] ``shared variables $\sharedvar$ and
  $\sharedvarother$ are equal'',

\item[\textbf{single-thread:}] ``local variable $\localvar$ of thread
  $\someindex$ is less than shared variable $\sharedvar$'', and

\item[\textbf{inter-thread:}] ``local variable $\localvar$ of thread
  $\someindex$ is less than variable $\localvar$ \emph{in all other
  threads}''.

\end{description}

Why such a rich predicate language? For certain concurrent algorithms such
as the widely used \emph{ticket} busy-wait lock algorithm
\cite{Andrews:1991:CPP:110561} (the default locking mechanism in the Linux
kernel since 2008; see \figureref{ticket algorithm}), the verification of
elementary safety properties \emphasize{requires} single- and inter-thread
relationships. They are needed to express, for instance, that a thread
holds the minimum ticket value, an inter-thread relationship.

In the main part of the paper, we address the problem of full parameterized
(un\-bound\-ed-thread) program verification with respect to our rich
predicate language. Such reasoning requires first that the
$\threadcount$-thread abstract program
$\parallelinst{\exabsprogram}{\threadcount}$, obtained by existential
inter-thread predicate abstraction of the $\threadcount$-thread concrete
program $\parallelinst{\program}{\threadcount}$, is rewritten into a single
template program $\absprogram$ to be executed by (any number of) multiple
threads.
In order to capture the semantics of these programs in the template
$\absprogram$, the template programming language must itself permit
variables that refer to the currently executing or a generic passive
thread; we call such programs \emph{\dr\ (\drabb)}. We describe how to
obtain $\absprogram$, namely essentially as an overapproximation of
$\parallelinst{\exabsprogram}{\satbound}$, for a constant $\satbound$ that
scales linearly with the number of inter-thread predicates used in the
predicate abstraction.

Given the \emph{Boolean} \dr\ program $\absprogram$, we might now expect
the unbounded-thread replicated program
$\parallelinst{\absprogram}{\infty}$ to form a classical \emph{well
  quasi-ordered transition system}
\cite{ACJY96}, enabling the fully automated, algorithmic safety property
verification in the abstract. This turns out not to be the case: the
expressiveness of \dr\ programs renders parameterized program location
reachability undecidable, despite the finite-domain variables. The root
cause is the lack of \emph{monotonicity} of the transition relation
\wrt\ the standard partial order over the space of unbounded thread
counters. That is, adding passive threads to the source state of a valid
transition can invalidate this transition and in fact block the system.
Since the input \C\ programs are, by contrast, perfectly monotone, we say
the monotonicity is \emph{lost in the abstraction}. As a result, our
abstract programs are in fact not well quasi-ordered.

Inspired by earlier work on \emph{monotonic abstractions}
\cite{DBLP:journals/entcs/AbdullaDR08}, we address this problem by
restoring the monotonicity using a simple \emph{closure operator}, which
enriches the transition relation of the abstract program $\absprogram$
such that the obtained program $\absprogrammon$ engenders a monotone (and
thus well quasi-ordered) system.
The
closure operator essentially terminates passive threads that block
transitions allowed by other passive threads. In contrast to those earlier
approaches, which \emph{enforce} (rather than restore) monotonicity in
genuinely non-monotone systems,
we exploit the fact that the input programs are monotone. As a
result, the monotonicity closure $\absprogrammon$ can be shown to be
\emph{safety-equivalent} to
the intermediate
program $\absprogram$.

To summarize, the central contribution of this paper is a predicate
abstraction strategy for unbounded-thread
\C\ programs, with respect to the rich language of inter-thread predicates.
This language allows the abstraction to track properties that are
essentially universally quantified over all passive threads. To this end,
we first develop such a strategy for a fixed number of threads. Second, in
preparation for extending it to the unbounded case, we describe how the
abstract model, obtained by existential predicate abstraction for a given
thread count~$\threadcount$, can be expressed as a template program that
can be multiply instantiated. Third, we show a sound and complete algorithm
for reachability analysis for the obtained parameterized Boolean
\dr\ programs. We overcome the undecidability of the problem by building a
monotone closure that enjoys the same safety properties as the original
abstract \dr\ program.

We omit in this submission practical aspects such as predicate discovery,
the algorithmic construction of the abstract programs, and abstraction
refinement. We provide, however, an extensive appendix, with proofs of all lemmas and theorems.

\begin{figure}[tbp]
  \fbox{%
    \begin{minipage}[c]{.45\textwidth}
      \lstinputlisting{ticket.c}%
  \end{minipage}}
  \hfill
  \begin{minipage}[c]{.4\textwidth}
    \footnotesize
    \emphasize{The ticket algorithm:} Shared variable $\lock$ has two
    integer components: $\sharedvar$~holds the ticket currently served (or,
    if none, the ticket served next), while $\sharedvarother$ holds the
    ticket to be served after all waiting threads have had access.

    To request access to the locked region, a thread atomically retrieves
    the value of $\sharedvarother$ and then increments $\sharedvarother$.
    The thread then busy-waits (``spins'') until local variable $\localvar$
    agrees with shared $\sharedvar$. To unlock, a thread increments
    $\sharedvar$.

    \

    See \appendixref{Inter-thread Predicates are Essential for the Ticket
      Algorithm} for more intuition.
  \end{minipage}
  \hfill
  \mbox{}
  \caption[The ticket algorithm]{Our goal is to verify ``unbounded-thread
    mutual exclusion'': no matter how many threads try to acquire and
    release the lock concurrently, no two of them should simultaneously be
    between the calls to functions \lstinline!spin_lock! and
    \lstinline!spin_unlock!.}
  \label{figure: ticket algorithm}
\end{figure}%

\section{Inter-Thread Predicate Abstraction}
\label{section: Inter-Thread Predicate Abstraction}

In this section we introduce single- and inter-thread predicates, with
respect to which we then formalize existential predicate abstraction.
Except for the extended predicate language, these concepts are mostly
standard and lay the technical foundations for the contributions of this
paper.

\subsection{Input Programs and Predicate Language}

\subsubsection{Asynchronous Programs}
\label{section: Asynchronous Programs}

An \emph{asynchronous program} $\program$ allows only one thread at a time
to change its local state. We model $\program$, designed for execution by
$\threadcount \atl 1$ concurrent threads, as follows. The variable
set $\programvars$ of a program $\program$ is partitioned into sets
$\sharedvars$ and $\pvarsact$. The variables in $\sharedvars$, called
\inlinedef{shared}, are accessible jointly by all threads, and those in
$\localvars$, called \inlinedef{local}, are accessible by the individual
thread that owns the variable. We assume the statements of $\program$ are
given by a {transition formula} $\rel$ over unprimed (current-state) and
primed (next-state) variables, $\programvars$ and $\programvars' =
\{\programvar'\st \programvar \in \programvars\}$. Further, the initial
states are characterized by the initial
formula $\initialrel$
over $\programvars$. We assume $\initialrel$ is expressible in a suitable
logic for which existential quantification is computable (required later
for the abstraction step).

As usual, the computation may be controlled by a local program counter
$\programctr$, and involve non-recursive function calls. When executed by
$\threadcount$ threads, $\program$ gives rise to
\inlinedef{$\threadcount$-thread program states} consisting of the
valuations of the variables in
$\programvars[\threadcount] = \sharedvars \union \localvars[1] \union \ldots \localvars[\threadcount]$,
where $\localvars[\someindex] = \{\localvar_\someindex \st \localvar \in \localvars\}$.
We call a variable set \inlinedef{uniformly indexed} if its variables either
all have no index, or all have the same index. For a formula $\assertion$
and two uniformly-indexed variable sets $X_1$ and $X_2$,
let $\subst f {X_2}{X_1}$ denote $\assertion$ after replacing every
occurrence of a variable in $X_1$ by the variable in $X_2$ with the same
base name, if any; unreplaced if none. We write $\substp f{X_2}{X_1}$ short
for $\subst{\subst f{X_2}{X_1}}{{X_2}'}{{X_1}'}$. As an example, given
$\sharedvars = \{\sharedvar\}$ and $\localvars = \{\localvar\}$, we have
$\substp{(\localvar' = \localvar +
  \sharedvar)}{\localvars[\idxact]}{\localvars} = (\localvar_\idxact' =
\localvar_\idxact + \sharedvar)$. Finally, let $X \allequal X'$ stand for
$\forall x \in X \suchthat x \equalsop x'$.

The \inlinedef{$\threadcount$-thread instantiation}
$\parallelinst{\program}{\threadcount}$ is defined for $\threadcount \atl
1$ as
\begin{equation}
  \parallelinst{\program}{\threadcount} = (\parallelinst{\rel}{\threadcount},\parallelinst{\initialrel}{\threadcount}) =
  \left(
       {\Orop}_{\!\!\idxact=1}^\threadcount \parallelinst[\idxact]{\rel}{\threadcount}, \
       {\Andop}_{\idxact=1}^\threadcount \subst{\initialrel}{\localvars[\idxact]}{\pvarsact}
  \right)
  \label{equation: n thread instantiation}
\end{equation}
where
\format\vspace{-1.5\baselineskip}
\begin{align}
  \parallelinst[\idxact]{\rel}{\threadcount} & \reldef
  \substp{\rel}{\localvars[\idxact]}{\pvarsact}
  \andop
  \subsc{\Andop}{\idxpas}[\idxpas\ne\idxact] \ \localvars[\idxpas] \allequal \localvarsp[\idxpas] \mathperiod
  \label{equation: mtp_exp_rel}
\end{align}
\noindent
Formula $\parallelinst[\idxact]{\rel}{\threadcount}$ asserts that the
shared variables, and the variables of the \emph{active} (executing) thread
$\idxact$ are updated according to $\rel$, while the local variables of
passive threads $\idxpas\ne\idxact$ are not modified ($\idxpas$ ranges over
$\{1,\ldots,n\}$). A state is \inlinedef{initial} if all threads are in a
state satisfying~$\initialrel$. An \inlinedef{$\threadcount$-thread
  \execution} is a sequence of $\threadcount$-thread program states whose
first state satisfies $\parallelinst{\initialrel}{\threadcount}$ and whose
consecutive states are related by $\parallelinst{\rel}{\threadcount}$. We
assume the existence of an error location in $\program$; an \emph{error
  state} is one where some thread resides in the error location. $\program$
is \emph{safe} if no execution exists that ends in an error state. Mutex
conditions can be checked using a ghost semaphore and redirecting threads
to the error location if they try to access the critical section while the
semaphore is set.

\subsubsection{Predicate Language}

We extend the predicate language from \cite{DonaldsonFMSD12} to allow the
use of the \emph{passive-thread variables} $\pvarspas =
\{\localvar_{\pasindex} \st \localvar \in \localvars\}$, each of which
represents a local variable owned by a generic passive thread. The presence
of variables of various categories gives rise to the following predicate
classification.
\begin{DEF}
  A predicate $\predsym$ over $\sharedvars$, $\localvars$ and $\pvarspas$
  is \emphdef{shared} if it contains variables from $\sharedvars$ only,
  \emphdef{local} if it contains variables from $\localvars$ only,
  \emphdef{single-thread} if it contains variables from $\localvars$ but
  not from $\pvarspas$, and \emphdef{inter-thread} if it contains variables
  from $\localvars$ and from $\pvarspas$.
  \label{definition: predicate classification}
\end{DEF}
Single- and inter-thread prediactes may contain variables from
$\sharedvars$. For example, in the ticket algorithm (\figureref{ticket
  algorithm}), with
$\sharedvars=\{\sharedvar,\sharedvarother\}$ and
$\localvars=\{\localvar\}$, examples of shared, local, single- and
inter-thread predicates are: $\sharedvar = \sharedvarother$,
$\localvar\equalsop5$, $\sharedvar=\localvar$ and $\preda$,
\respectivelyend.

\paragraph{Semantics} Let $\range[]{\pred1}{\pred{\predicatenum}}$ be
$\predicatenum$ predicates (any class). Predicate $\pred{\predicateidx}$ is
evaluated in a given $\threadcount$-thread state~$\somestate$
($\threadcount \atl 2$) with respect to a choice of active
thread $\idxact$:
\begin{equation}
  \predact{\predicateidx} \wbox{$\reldef$} \subsc{\Andop}{\idxpas}[\idxpas \not= \idxact] \ \subst{\subst{\pred{\predicateidx}}{\localvars[\idxact]}{\localvars}}{\localvars[\idxpas]}{\pvarspas} \mathperiod
  \label{equation: inter-thread semantics}
\end{equation}
As special cases, for single-thread and shared predicates (no $\pvarspas$
variables), we have $\predact i = \subst{\pred
  i}{\localvars[\idxact]}{\localvars}$ and $\predact i = \pred i$, resp. We
write $v \models \predact i$ if $\predact i$ holds in state $v$. Predicates
$\pred i$ give rise to an abstraction function~$\absfunc$, mapping each
$\threadcount$-thread program state $v$ to an $\predicatenum \times
\threadcount$ bit matrix with entries
\begin{equation}
  \absfunc(v)_{i,a} \ = \
  \begin{dcases*}
    \truetext  & if $v \models \predact i$ \\
    \falsetext & otherwise\mathperiod
  \end{dcases*}
  \label{equation: abstraction function}
\end{equation}
Function $\absfunc$ partitions the $\threadcount$-thread program state
space via $\predicatenum$ predicates into $2^{\predicatenum \times
  \threadcount}$ equivalence classes. As an example, consider the
inter-thread predicates $\localvar\le\localvarpas$,
$\localvar>\localvarpas$, and $\localvar\ne\localvarpas$ for a local
variable $\localvar$, $\threadcount = 4$ and the state $\somestate \reldef
(\localvar_1,\localvar_2,\localvar_3,\localvar_4) = (4,4,5,6)$:
\begin{equation}
  \absfunc(\somestate) =
  \begin{pmatrix}
    \truetext  & \truetext  & \falsetext & \falsetext \\
    \falsetext & \falsetext & \falsetext & \truetext  \\
    \falsetext & \falsetext & \truetext  & \truetext
  \end{pmatrix}
  \mathperiod
  \label{equation: absfunc example}
\end{equation}
In the matrix, row $\predicateidx \in \{1,2,3\}$ lists the truth of
predicate $\pred{\predicateidx}$ for each of the four threads in the active
role. Predicate $\localvar\le\localvarpas$ captures whether a thread owns
the minimum value for local variable $\localvar$ (true for $\idxact=1,2$);
$\localvar>\localvarpas$ tracks whether a thread owns the \emph{unique}
maximum value (true for $\idxact=4$) ; finally $\localvar\ne\localvarpas$
captures the uniqueness of a thread's copy of $\localvar$ (true for
$\idxact=3,4$).

\paragraph{Inter-thread predicates and abstraction} Predicates that reason
universally about threads have
been used successfully as targets in (inductive) invariant generation
procedures \cite{APRXZ01,SSSC12}. In this paper we discuss their role in
abstractions. The use of these fairly expressive and presumably expensive
predicates is not by chance: automated methods that cannot reason about
them \cite{FarzanK13CAV,DonaldsonFMSD12,WBKW07} essentially fail for the
ticket algorithm in \figureref{ticket algorithm}: for a fixed number of
threads that concurrently and repeatedly (\eg\ in an infinite loop) request
and release lock ownership, the inter-thread relationships need to be
``simulated'' via enumeration, incurring very high time and space
requirements, even for a handful of threads. In the unbounded-thread case,
they diverge. This is essentially due to known limits of thread-modular and
Owicki-Gries style proof systems, which do not have access to inter-thread
predicates \cite{malkis10}. \Appendixref{Inter-thread Predicates are
  Essential for the Ticket Algorithm} shows that the number of
\emph{single-thread} predicates needed to prove correctness of the ticket
algorithm depends on $\threadcount$,
from which unprovability in the unbounded case follows.

\subsection{Existential Inter-Thread Predicate Abstraction}
\label{section: Existential Inter-Thread Predicate Abstraction}

Embedded into our formalism, the goal of \emph{existential predicate
  abstraction} \cite{ClarkeGrumbergLong94,GrafSaidi97} is to derive an
abstract program $\parallelinst{\exabsprogram}{\threadcount}$ by treating
the equivalence classes induced by \equationref{abstraction function} as
abstract states. $\parallelinst{\exabsprogram}{\threadcount}$ thus has
$\predicatenum \times \threadcount$ Boolean variables:
\[
  \textstyle
  \exabsprogramvars[\threadcount]
  \ = \ \Union_{\idxact=1}^{\threadcount} \exabslocalvars[a]
  \ = \ \Union_{\idxact=1}^{\threadcount} \{\bpredact{\predicateidx} \suchthat 1 \atm \predicateidx \atm m\} \mathperiod
\]
Variable $\bpredact{\predicateidx}$ tracks the truth of predicate
$\pred{\predicateidx}$ for active thread~$\idxact$. This is formalized in
\equationref[]{existential predicate abstraction definition}, relating
concrete and abstract $\threadcount$-thread states (valuations of
$\programvars[\threadcount]$ and $\exabsprogramvars[\threadcount]$, resp.):
\begin{equation}
  \Arel[\threadcount] \reldef \Andop_{\predicateidx=1}^{\predicatenum} \Andop_{\idxact=1}^{\threadcount} \bpredact{\predicateidx} \iffop \predact{\predicateidx} \mathperiod
  \label{equation: existential predicate abstraction definition}
\end{equation}
For a formula $\assertion$, let $\assertion'$ denote $\assertion$ after
replacing each variable by its primed version. We then have
\begin{math}
  \parallelinst{\exabsprogram}   {\threadcount} = (
  \parallelinst{\exabsrel}       {\threadcount},
  \parallelinst{\exabsinitialrel}{\threadcount}) =
  \left(
       {\Orop}_{\!\!\idxact=1}^\threadcount \parallelinst[\idxact]{\exabsrel}{\threadcount}, \ \parallelinst{\exabsinitialrel}{\threadcount}
  \right)
\end{math}
where
\begin{eqnarray}
  \parallelinst[\idxact]{\exabsrel}       {\threadcount} & \reldef & \exists                                        \programvars[\threadcount] \programvars[\threadcount]'   \suchthat \ \parallelinst[\idxact]{\rel}       {\threadcount} \andop \Arel[\threadcount] \andop (\Arel[\threadcount])', \label{equation: existential transrel} \\[1mm]\format
  \parallelinst         {\exabsinitialrel}{\threadcount} & \reldef & \exists \pos[l]{$\programvars[\threadcount]$}{$\programvars[\threadcount] \programvars[\threadcount]'$} \suchthat \ \parallelinst         {\initialrel}{\threadcount} \andop \Arel[\threadcount]                                \label{equation: existential initialrel}
  \mathperiod
\end{eqnarray}

\newcommandx\predtmp[4][usedefault, addprefix=\global, 1=, 2=\pasindex,3=,4=<]{\localvar_{#1}{#3}#4\localvar_{#2}{#3}}

As an example, consider the decrement operation $\localvar \assignop
\localvar-1$ on a local integer variable~$\localvar$, and the inter-thread
predicate $\predtmp$. Using \equationref{existential transrel} with
$\threadcount=2$, $\idxact=1$, we get 4 abstract transitions, which are
listed in \tableref{dec_exabs}. The table shows that the abstraction is no
longer asynchronous (treating $\bpredo_1$ as belonging to thread 1,
$\bpredo_2$ to thread 2): in the highlighted transition, the executing
thread~1 changes (its pc and hence) its local state, and so does thread~2.
By contrast, on the right we have $\localvar_2 = \localvar_2'$ in all rows.
The loss of asynchrony will become relevant in \sectionref{From Existential
  to Parametric Abstraction}, where we define a suitable abstract Boolean
programming language (which then necessarily must accommodate
non-asynchronous programs).
\renewcommand{\localCommand}[1]{\cellcolor{black!15}{#1}}
\begin{table}[htbp]
  \centering
  \begin{minipage}[t]{1.0\columnwidth}
    \tablesizeb{%
      \begin{tabularx}{1\textwidth}{>{\centering}X >{\centering}X c>{\centering}X >{\centering}X >{\centering}X >{\centering}X >{\centering}X c>{\centering}X >{\centering\arraybackslash}X}
        \toprule
        $\bpredo_1$ & $\bpredo_2$ & \hphantom{aa} & $\bpredo_1'$ & $\bpredo_2'$
        & &
        $\localvar_1$ & $\localvar_2$ & \hphantom{aa} & $\localvar_1'$ & $\localvar_2'$ \\
        \cmidrule{1-5} \cmidrule{7-11}
                      $\falsetext$  &               $\falsetext$ & &               $\truetext$   &               $\falsetext$  & & 1 & 1 &  & 0 & 1 \\
        \localCommand{$\falsetext$} & \localCommand{$\truetext$} & & \localCommand{$\falsetext$} & \localCommand{$\falsetext$} & & 1 & 0 &  & 0 & 0 \\
                      $\falsetext$  &               $\truetext$  & &               $\falsetext$  &               $\truetext$   & & 2 & 0 &  & 1 & 0 \\
                      $\truetext$   &               $\falsetext$ & &               $\truetext$   &               $\falsetext$  & & 1 & 2 &  & 0 & 2 \\
        \bottomrule
      \end{tabularx}}
  \end{minipage}
  \caption{Abstraction $\parallelinst[1]{\exabsrel}{2}$ for stmt.\
    $\localvar \assignop \localvar-1$ against predicate $\predtmp$ (left);
    concrete witness transitions, i.e.\ elements of
    $\parallelinst[1]{\rel}{2}$ (right). The highlighted row indicates
    asynchrony violations}
  \label{table: dec_exabs}
\end{table}

\noindent
\emph{Proving the ticket algorithm (fixed-thread case)} \
As in any existential abstraction, the abstract program
$\parallelinst{\exabsprogram}{\threadcount}$ overapproximates (the set of
executions of) the concrete program
$\parallelinst{\program}{\threadcount}$; the former can therefore be used
to verify safety of the latter.
We illustrate this using the ticket algorithm (\figureref{ticket
  algorithm}). Consider the predicates $\pred1 \preddef
\localvar\ne\localvarpas$, $\pred2 \preddef \sharedvarother >
\max(\localvar,\localvarpas)$, and $\pred3 \preddef \sharedvar=\localvar$.
The first two are inter-thread; the third is single-thread. The predicates
assert the uniqueness of a ticket
($\pred1$),
that the next free ticket is larger than all tickets currently owned by
threads
($\pred2$),
and that a thread's ticket is currently being served
($\pred3$).
The abstract reachability tree for
$\parallelinst{\exabsprogram}{\threadcount}$ and these predicates reveals
that mutual exclusion is satisfied: there is no state with both threads
in location $\labelc$. The tree grows exponentially with
$\threadcount$.

\section{From Existential to Parametric Abstraction}
\label{section: From Existential to Parametric Abstraction}

Classical existential abstraction as described in \sectionref{Existential
  Inter-Thread Predicate Abstraction} obliterates the symmetry pres\-ent in
the concrete concurrent program, which is given as the
$\threadcount$-thread instantiation of a single-thread template $\program$:
the abstraction is instead formulated via predicates over the explicitly
expanded $\threadcount$-thread program $\parallelinst{\rel}{\threadcount}$.
As observed in previous work \cite{DonaldsonFMSD12}, such a
``symmetry-oblivious'' approach suffers from poor scalability for
fixed-thread verification problems. Moreover, \emph{parametric} reasoning
over an unknown number of threads is impossible since the abstraction
\equationref[]{existential transrel} directly depends on~$\threadcount$.

To overcome these problems, we now derive an overapproximation of
$\parallelinst{\exabsprogram}{\threadcount}$ via a generic program template
$\absprogram$ that can be instantiated for any~$\threadcount$. There is,
however, one obstacle: instantiating a program (such as $\program$)
formulated over shared variables and one copy of the thread-local variables
naturally gives rise to asynchronous concurrency. The programs resulting
from inter-thread predicate abstraction are, however, not asynchronous, as
we have seen. As a result, we need a more powerful abstract programming
language.

\subsection{\DR\ Programs}
\label{section: Dual-Reference Programs}

In contrast to asynchronous programs, the variable set $\absprogramvars$ of
a \emph{\dr\ (\drabb)} program $\absprogram$ is partitioned into two sets:
$\abspvarsact$, the local variables of the active thread as before, and
$\abspvarspas = \{\localvar_\pasindex\st \localvar \in \abslocalvars\}$.
The latter set contains passive-thread variables, which, intuitively,
regulate the behavior of non-executing threads. To simplify reasoning about
\drabb\ programs, we exclude classical shared variables from the
description: they can be simulated using the active and passive flavors of
local variables (see \appendixref{Simulating Shared Via Local Variables}).

The statements of $\absprogram$ are given by a transition formula $\absrel$
over $\absprogramvars$ and $\absprogramvars'$, now potentially including
passive-thread variables. Similarly, $\absinitialrel$ may contain variables
from~$\abspvarspas$. The $\threadcount$-thread instantiation
$\parallelinst{\absprogram}{\threadcount}$ of a \drabb\ program
$\absprogram$ is defined for $\threadcount \atl 2$ as
\begin{equation}
  \parallelinst{\absprogram}{\threadcount} = (\parallelinst{\absrel}{\threadcount},\parallelinst{\absinitialrel}{\threadcount}) =
  \left(
       {\Orop}_{\!\!\idxact=1}^\threadcount \parallelinst[\idxact]{\absrel}       {\threadcount}, \
       {\Orop}_{\!\!\idxact=1}^\threadcount \parallelinst[\idxact]{\absinitialrel}{\threadcount}
  \right)
  \label{equation: n thread DR instantiation}
\end{equation}
where
\format\vspace{-1.5\baselineskip}
\begin{align}
  \parallelinst[\idxact]{\absrel}{\threadcount} & \reldef
  \subsc{\Andop}{\idxpas}[\idxpas\ne\idxact]
  \substp{\substp{\absrel}{\abslocalvars[\idxact]}{\abspvarsact}}{\abslocalvars[\idxpas]}{\abspvarspas}
  \label{equation: DR transitions}
  \\
  \parallelinst[\idxact]{\absinitialrel}{\threadcount} & \reldef
  \subsc{\Andop}{\idxpas}[\idxpas\ne\idxact]
  \subst{\subst{\absinitialrel}{\abslocalvars[\idxact]}{\abspvarsact}}{\abslocalvars[\idxpas]}{\abspvarspas}
  \label{equation: DR initial}
\end{align}
Recall that $\substp f{X_2}{X_1}$ denotes index replacement of both
current-state and next-state variables. \Equationref{DR transitions}
encodes the effect of a transition on the active thread~$\idxact$, and
$\threadcount-1$ passive threads~$\idxpas$. The conjunction ensures that
the transition formula $\absrel$ holds no matter which thread
$\idxpas\ne\idxact$ takes the role of the passive thread: transitions that
``work'' only for select passive threads are rejected.

\subsection{Computing an Abstract \DR\ Template}

From the existential abstraction
$\parallelinst{\exabsprogram}{\threadcount}$ we derive a Boolean
\dr\ template program $\absprogram$ such that, for all $\threadcount$, the
$\threadcount$-fold instantiation
$\parallelinst{\absprogram}{\threadcount}$ overapproximates
$\parallelinst{\exabsprogram}{\threadcount}$. The variables of
$\absprogram$ are $\abslocalvars = \{\bpred\predicateidx \st 1 \le
\predicateidx \le \predicatenum\}$ and $\abspvarspas =
\{\bpredpas{\predicateidx} \st 1 \le \predicateidx \le \predicatenum\}$.
Intuitively, the transitions of $\absprogram$ are those that are feasible,
for \emphasize{some} $\threadcount$, in
$\parallelinst{\exabsprogram}{\threadcount}$, given active thread 1 and
passive thread 2. We first compute the set $\absrel(\threadcount)$ of these
transitions for fixed $\threadcount$. Formally, the components of
$\absprogram(\threadcount) =
(\absrel(\threadcount),\absinitialrel(\threadcount))$ are, for
$\threadcount \atl 2$,
\begin{align}
  \absrel(\threadcount) & \reldef
  \exists \exabslocalvars[3],\exabslocalvars[3]',\ldots,\exabslocalvars[\threadcount],\exabslocalvars[\threadcount]' \suchthat \
  \substp{\substp{\parallelinst[1]{\exabsrel}{\threadcount}}{\abslocalvars}{\exabslocalvars[1]}}{\abspvarspas}{\exabslocalvars[2]}
  \label{equation: rpa}
  \\
  \absinitialrel(\threadcount) & \reldef
  \exists \pos[l]{$\exabslocalvars[3],\ldots,\exabslocalvars[\threadcount] \suchthat$}{$\exabslocalvars[3],\exabslocalvars[3]',\ldots,\exabslocalvars[\threadcount],\exabslocalvars[\threadcount]' \suchthat$} \ \
  \subst{\subst{\pos{$\parallelinst{\exabsinitialrel}{\threadcount}$}{$\parallelinst[1]{\exabsrel}{\threadcount}$}}{\abslocalvars}{\exabslocalvars[1]}}{\abspvarspas}{\exabslocalvars[2]}
  \label{equation: ipa}
\end{align}

We apply this strategy to the earlier example of the decrement statement
$\localvar \assignop \localvar-1$. To compute \equationref{rpa} first with
$\threadcount = 2$, we need $\parallelinst[1]{\exabsrel}{2}$, which was
enumerated previously in \tableref{dec_exabs}. Simplification results in a
Boolean \drabb\ program with variables $\localboolvar$ and
$\localboolvarpas$ and transition relation
\begin{equation}
\absrel(2) =
(
\negop\localboolvar\andop
      \localboolvarpas\andop
\negop\localboolvarp
)
\orop
(
\negop\localboolvarpas\andop
      \localboolvarp\andop
\negop\localboolvarpasp
)
\label{equation: mdec-dr-n2-abs}
\mathperiod
\end{equation}
Using \equationref[]{mdec-dr-n2-abs} as the template $\absrel$ in
\equationref[]{DR transitions} generates existential abstractions of many
concrete decrement transitions; for instance, for $\threadcount=2$ and
$\idxact=1$ we get back the transition relation in \tableref{dec_exabs}.
The question is now: does \equationref[]{mdec-dr-n2-abs} suffice as a
template, i.e.\ does $\parallelinst{(\absrel(2))}{\threadcount}$
overapproximate $\parallelinst{\exabsrel}{\threadcount}$ for all
$\threadcount$? The answer is no: the abstract 3-thread transitions shown
in \tableref{dec_exabs_three} are not permitted by
$\parallelinst{(\absrel(2))}{\threadcount}$ for any $\threadcount$, since
neither $\negop\localboolvar \andop \localboolvarpas$ nor $\localboolvarp
\andop \negop\localboolvarpasp$ are satisfied for all choices of passive
threads (violations highlighted in the table).

We thus increase $\threadcount$ to 3, recompute \equationref{rpa}, and
obtain
\begin{equation}
  \absrel(3) \reldef \absrel(2) \orop (\negop\bpredo\andop\negop\bpredopas\andop\negop\bpredop\andop\negop\bpredopasp)\mathperiod
  \label{equation: mdec-dr-n3-abs}
\end{equation}
The new disjunct accommodates the abstract transitions highlighted in
\tableref{dec_exabs_three}, which were missing before.
\begin{table}[htbp]
  \centering
  \begin{minipage}[t]{1.0\columnwidth}
    \tablesizeb{%
      \begin{tabularx}{1\textwidth}{>{\centering}X >{\centering}X >{\centering}X c>{\centering}X >{\centering}X >{\centering}X >{\centering}X >{\centering}X >{\centering}X >{\centering}X c>{\centering}X >{\centering}X >{\centering\arraybackslash}X}
        \toprule
        $\bpredo_1$ & $\bpredo_2$ & $\bpredo_3$ & \hphantom{a} & $\bpredo_1'$ & $\bpredo_2'$ & $\bpredo_3'$ &  & $\localvar_1$ & $\localvar_2$ & $\localvar_3$ & \hphantom{a} & $\localvar_1'$ & $\localvar_2'$ & $\localvar_3'$ \\
        \cmidrule{1-7} \cmidrule{9-15}
        \cellcolor{black!15}{$\falsetext$} & \cellcolor{black!15}{$\falsetext$} & $\falsetext$ &  & \cellcolor{black!15}{$\falsetext$} & \cellcolor{black!15}{$\falsetext$} & $\falsetext$ & & 1 & 0 & 0 &  & 0 & 0 & 0 \\
        \cellcolor{black!15}{$\falsetext$} & \cellcolor{black!15}{$\falsetext$} & $\truetext$  &  & \cellcolor{black!15}{$\falsetext$} & \cellcolor{black!15}{$\falsetext$} & $\falsetext$ & & 1 & 1 & 0 &  & 0 & 1 & 0 \\
        \cellcolor{black!15}{$\falsetext$} & \cellcolor{black!15}{$\falsetext$} & $\truetext$  &  & \cellcolor{black!15}{$\falsetext$} & \cellcolor{black!15}{$\falsetext$} & $\truetext$  & & 2 & 1 & 0 &  & 1 & 1 & 0 \\
        \bottomrule
      \end{tabularx}%
    }
  \end{minipage}
  \caption{Part of the abstraction $\parallelinst[1]{\exabsrel}{3}$ for
    stmt.\ $\localvar \assignop \localvar-1$ against predicate $\predtmp$
    (left); concrete witness transitions (right). The highlighted elements
    are inconsistent with \equationref[]{mdec-dr-n2-abs} as a template}
  \label{table: dec_exabs_three}
\end{table}

Does $\parallelinst{(\absrel(3))}{\threadcount}$ overapproximate
$\parallelinst{\exabsrel}{\threadcount}$ for all $\threadcount$? When does
the process of increasing $\threadcount$ stop? To answer these questions,
we first state the following diagonalization lemma, which helps us prove
the overapproximation property for the template program.
\newcounterset{lemmaMPAisConservative}{\theDEF}
\newcommand{\lemmaMPAisConservative}[1][]{%
  $\parallelinst{(\absprogram(\threadcount))}{\threadcount}$
  overapproximates $\parallelinst{\exabsprogram}{\threadcount}$: For every
  $\threadcount \atl 2$ and every $\idxact$,
  $\parallelinst[\idxact]{\exabsrel}       {\threadcount} \impliesop \parallelinst[\idxact]{\absrel       (\threadcount)}{\threadcount}$ and
  $\parallelinst         {\exabsinitialrel}{\threadcount} \impliesop \parallelinst[\idxact]{\absinitialrel(\threadcount)}{\threadcount}$.%
}
\begin{LEM}
  \lemmaMPAisConservative
  \label{lemma: mpa is conservative}
\end{LEM}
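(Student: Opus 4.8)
The plan is to prove the two per-$\idxact$ implications directly; they suffice because $\parallelinst{\exabsrel}{\threadcount}=\Orop_{\idxact=1}^{\threadcount}\parallelinst[\idxact]{\exabsrel}{\threadcount}$ and, by \equationref{n thread DR instantiation}, $\parallelinst{(\absprogram(\threadcount))}{\threadcount}$ has transition relation $\Orop_{\idxact=1}^{\threadcount}\parallelinst[\idxact]{\absrel(\threadcount)}{\threadcount}$ (and analogously for the initial formulas), so the overapproximation follows by taking disjunctions. Fix $\threadcount\atl 2$. The engine of the proof is the \emph{thread-symmetry} of existential abstraction: by \equationref{existential transrel}, $\parallelinst[\idxact]{\exabsrel}{\threadcount}$ is essentially the image under the abstraction map $\absfunc$ (\equationref{abstraction function}) of the concrete $\threadcount$-thread relation $\parallelinst[\idxact]{\rel}{\threadcount}$, and both $\parallelinst{\rel}{\threadcount}$ and $\absfunc$ are invariant under permuting thread indices --- the former because $\program$ is a single-thread template, the latter because the inter-thread predicate semantics \equationref{inter-thread semantics} conjoins symmetrically over \emph{all} passive threads. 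So I would first establish: for every permutation $\threadindexpermutation$ of $\{1,\dots,\threadcount\}$, if an abstract state $w$ (a valuation of the Boolean variables of all $\threadcount$ threads, primed and unprimed) satisfies $\parallelinst[\idxact]{\exabsrel}{\threadcount}$, then the permuted state $\threadindexpermutation\cdot w$ --- which gives thread $k$ the values that $w$ gives thread $\threadindexpermutation^{-1}(k)$ --- satisfies $\parallelinst[\threadindexpermutation(\idxact)]{\exabsrel}{\threadcount}$.

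Granting this, here is how the transition implication goes. Assume $w\models\parallelinst[\idxact]{\exabsrel}{\threadcount}$. By \equationref{DR transitions} instantiated with the template $\absrel:=\absrel(\threadcount)$, it suffices to show, for each passive thread $\idxpas\ne\idxact$, that $w$ satisfies $\absrel(\threadcount)$ once thread $\idxact$'s variables occupy the active role and thread $\idxpas$'s the passive role. I would pick a permutation $\threadindexpermutation$ with $\threadindexpermutation(\idxact)=1$ and $\threadindexpermutation(\idxpas)=2$ (possible as $\idxact\ne\idxpas$); then, by the symmetry property, $u:=\threadindexpermutation\cdot w$ satisfies $\parallelinst[1]{\exabsrel}{\threadcount}$, with $u$'s thread-$1$ variables equal to $w$'s thread-$\idxact$ variables, $u$'s thread-$2$ variables equal to $w$'s thread-$\idxpas$ variables, and $u$'s thread-$k$ variables (for $3\le k\le\threadcount$) equal to $w$'s thread-$\threadindexpermutation^{-1}(k)$ variables. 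Unfolding the definition \equationref{rpa} of $\absrel(\threadcount)$: $w$ satisfies it in the roles $\idxact,\idxpas$ precisely when the Boolean variables of threads $3,\dots,\threadcount$ in $\parallelinst[1]{\exabsrel}{\threadcount}$ admit an assignment making that formula true with $w$'s thread-$\idxact$ values in slot $1$ and $w$'s thread-$\idxpas$ values in slot $2$ --- and $u$ is exactly such an assignment. That settles the transition part. The initial-state implication is the same argument via \equationref{ipa} and \equationref{DR initial}, and is even easier, as $\parallelinst{\exabsinitialrel}{\threadcount}$ has no distinguished active thread (hence is invariant under every $\threadindexpermutation$) and contains no primed variables.

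The step I expect to require actual work is pinning down the symmetry property. This means fixing an explicit convention for the permutation action on concrete and abstract states, verifying from \equationref{mtp_exp_rel} and \equationref{inter-thread semantics} that permuting the concrete state permutes the induced Boolean matrix in the same way and carries the active-thread index along, and then transporting this through the existential quantifier over concrete variables in \equationref{existential transrel} (which needs only a benign renaming of those bound variables). Everything after that --- choosing $\threadindexpermutation$ and reading off the witnesses for the threads quantified out in \equationref{rpa} and \equationref{ipa} --- is routine bookkeeping.
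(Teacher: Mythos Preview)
Your proposal is correct and follows essentially the same approach as the paper's (partial) proof: both unfold the definitions \equationref[]{DR transitions}/\equationref[]{DR initial} and \equationref[]{rpa}/\equationref[]{ipa} to reduce the claim, for each pair $(\idxact,\idxpas)$, to exhibiting witnesses for the existentially quantified threads $3,\ldots,\threadcount$, and both supply those witnesses via thread symmetry. The paper argues the initial-state case and dismisses the transition case as ``similar but more involved''; you do the reverse, and you make the permutation action explicit where the paper simply asserts that ``replacing thread ids $1$ and $2$ by thread ids $\idxact$ and $\idxpas$ does not falsify the formula.'' Your explicit treatment of the symmetry step (permuting concrete states, checking that $\absfunc$ commutes with the permutation via \equationref[]{inter-thread semantics}, and transporting through the existential in \equationref[]{existential transrel}) is exactly the work the paper hides behind that phrase.
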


We finally give a
saturation bound for the sequence $(\absprogram(\threadcount))$. Along with
the diagonalization lemma, this allows us to obtain a template program
$\absprogram$ independent of $\threadcount$, and enable parametric
reasoning in the abstract.
\newcounterset{theoremSatbound}{\theDEF}
\newcommand{\theoremSatbound}[1][]{%
  Let $\numIT$ be the number of inter-thread predicates among the
  $\pred{\predicateidx}$. Then the sequence $(\absprogram(\threadcount))$
  stabilizes at $\satbound = 4\times\numIT+2$, i.e.\ for $\threadcount \atl
  \satbound$, $\absprogram(\threadcount) = \absprogram(\satbound)$.}
\begin{THE}
  \theoremSatbound
  \label{theorem: satbound}
\end{THE}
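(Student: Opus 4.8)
The plan is to fix $\satbound = 4\numIT + 2$ and show, for every $\threadcount \atl \satbound$, both $\absrel(\threadcount) \impliesop \absrel(\satbound)$ and $\absrel(\satbound) \impliesop \absrel(\threadcount)$, together with the two analogous implications for $\absinitialrel$; this gives $\absprogram(\threadcount) = \absprogram(\satbound)$.

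First I would unfold \equationref{rpa} and \equationref{existential transrel} into a concrete description of $\absrel(\threadcount)$: a valuation of the $4\predicatenum$ Boolean variables $\abslocalvars \union \abspvarspas$ and their primes belongs to $\absrel(\threadcount)$ iff there is a concrete transition $\parallelinst[1]{\rel}{\threadcount}$ --- thread~$1$ active, threads $2,\ldots,\threadcount$ passive and hence unchanged --- whose source and target states, run through the abstraction function $\absfunc$ of \equationref{abstraction function} and read off for threads $1$ and $2$ only, yield exactly that valuation (the Boolean columns of threads $3,\ldots,\threadcount$ are forced by the concrete state and then existentially discarded). By the inter-thread semantics \equationref{inter-thread semantics}, the active-role value of an inter-thread predicate $\pred{\predicateidx}$ at thread~$1$ in a state is a conjunction with one conjunct per passive thread $j \in \{2,\ldots,\threadcount\}$ --- the instance of $\pred{\predicateidx}$ relating thread~$1$ to thread~$j$ --- and similarly for the active role of thread~$2$ and for the primed (target-state) versions of both. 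Two observations then confine the influence of threads $3,\ldots,\threadcount$: local, shared and single-thread predicates read only threads $1,2$ and the shared variables; and since the passive-flavor variables $\abspvarspas$ all carry the single index $\pasindex$, each conjunct above involves exactly one passive thread. Hence the only quantities that depend on $\threadcount$ are the at most $4\numIT$ \emph{obligations}, namely the active-role value of each inter-thread predicate $\pred{\predicateidx}$ at each of threads $1,2$ in each of the source and target states: a $\truetext$ obligation demands that its conjunct hold for every passive thread, whereas a $\falsetext$ obligation demands only one passive thread falsifying its conjunct (possibly thread~$2$ itself).

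The core step is then a surgery on the passive threads $3,\ldots,\threadcount$ of a concrete witness, leaving the active $\rel$-step --- and therefore the validity of the concrete transition --- untouched, because enlarging or shrinking the set of passive (unchanged) threads preserves the form of $\parallelinst[1]{\rel}{\threadcount}$. For $\absrel(\satbound) \impliesop \absrel(\threadcount)$, take a $\satbound$-thread witness and append $\threadcount - \satbound$ fresh passive threads, each a verbatim copy of some passive thread $j_0 \in \{3,\ldots,\satbound\}$ (available since $\satbound > 2$ when $\numIT \atl 1$; when $\numIT = 0$ there is nothing to prove, since passive threads then occur in no predicate instance): every copy satisfies each $\truetext$-obligation's conjunct (as $j_0$ did) and deletes no falsifier, so all obligations keep their value. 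For the converse, take a $\threadcount$-thread witness, keep threads $1,2$ plus one falsifying passive thread for each $\falsetext$-obligation not already falsified by thread~$2$ (ensuring at least one passive thread is kept when $\numIT \atl 1$) --- at most $4\numIT$ passive threads --- discard the rest, and duplicate a kept passive thread until the total is exactly $\satbound$: restricting to a sub-multiset of passive threads merely shortens true conjunctions, so $\truetext$-obligations survive, the kept falsifiers preserve the $\falsetext$-obligations, and the duplicates copy a thread that met every $\truetext$-obligation's conjunct and so do no harm. The identical surgery applies to $\absinitialrel$, which carries only the two source-state obligations per inter-thread predicate and thus already stabilizes at $2\numIT + 2 \atm \satbound$; putting the pieces together yields $\absprogram(\threadcount) = \absprogram(\satbound)$ for all $\threadcount \atl \satbound$.

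I expect the main difficulty to be bookkeeping rather than conceptual: one must check that the modified witness reproduces the abstract values of threads $1$ and $2$ \emph{simultaneously} and in \emph{both} source and target states, and especially that adding copies of a passive thread can never turn a $\truetext$ inter-thread value into $\falsetext$ --- which is precisely where the conjunctive semantics is used, a $\truetext$ value at thread~$1$ or~$2$ forcing the relation to hold between that thread and \emph{all} others, copies included --- and that the (at most $4\numIT$) falsifiers chosen for the distinct $\falsetext$-obligations never clash, since each obligation needs only a single false conjunct. Two routine points remain to be recorded: program counters and other purely thread-local control data behave exactly like single-thread predicates and so do not enter the count; and the degenerate case $\numIT = 0$, where $\satbound = 2$ and the passive threads $3,\ldots$ occur in no predicate instance, is dispatched by the observation above.
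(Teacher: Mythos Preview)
Your proposal is correct and follows essentially the same idea as the paper: the only data about passive threads $3,\ldots,\threadcount$ that survives the projection to threads $1$ and $2$ is, for each inter-thread predicate and each of the four active-role slots (thread~$1$/thread~$2$, source/target), whether some passive thread falsifies the corresponding conjunct --- hence at most $4\numIT$ witness threads beyond $1,2$ suffice. The paper packages this as an induction on the number of inter-thread predicates (adding four witness threads per predicate) and only argues the direction $\absrel_\infty \impliesop \absrel_\satbound$ explicitly, tacitly using monotonicity of the sequence for the other; your direct counting of all $4\numIT$ obligations at once, together with the explicit ``copy a passive thread'' argument for $\absrel(\satbound) \impliesop \absrel(\threadcount)$, is slightly more self-contained but not a different route.
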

\begin{COR}[from L.~{\lemmaref[]{mpa is conservative}},T.~{\theoremref[]{satbound}}]
  Let $\absprogram := \absprogram(\satbound)$, for $\satbound$ as in
  \theoremref{satbound}. The components of $\absprogram$ are thus
  $(\absrel,\absinitialrel) =
  (\absrel(\satbound),\absinitialrel(\satbound))$. Then, for $\threadcount
  \atl 2$, $\absprogram^n$ overapproximates
  $\parallelinst{\exabsprogram}{\threadcount}$.
  \label{corollary: saturation}
\end{COR}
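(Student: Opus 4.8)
Unfolding \equationref{n thread DR instantiation}, the statement ``$\absprogram^{\threadcount}$ overapproximates $\parallelinst{\exabsprogram}{\threadcount}$'' is exactly the pair of implications $\parallelinst{\exabsrel}{\threadcount}\impliesop\parallelinst{\absrel}{\threadcount}$ and $\parallelinst{\exabsinitialrel}{\threadcount}\impliesop\parallelinst{\absinitialrel}{\threadcount}$, recalling that $\parallelinst{\absrel}{\threadcount}=\Orop_{\idxact}\parallelinst[\idxact]{\absrel}{\threadcount}$ and similarly for the initial-state formula. I would split on $\threadcount$ versus $\satbound$. For $\threadcount\atl\satbound$, \theoremref{satbound} gives $\absprogram(\threadcount)=\absprogram(\satbound)=\absprogram$ literally, so $\parallelinst{\absprogram}{\threadcount}=\parallelinst{(\absprogram(\threadcount))}{\threadcount}$ and the claim is just \lemmaref{mpa is conservative} after disjoining its per-$\idxact$ implications over $\idxact$ (the initial-state disjunction being immediate, since that lemma's left-hand side does not depend on $\idxact$). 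The real content therefore lies in the finitely many cases $2\atm\threadcount<\satbound$, where the task becomes: route the template $\absrel(\threadcount)$ --- which by \lemmaref{mpa is conservative} already overapproximates $\parallelinst{\exabsrel}{\threadcount}$ when instantiated for $\threadcount$ threads --- through $\absrel(\satbound)=\absrel$.

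For this I would record two monotonicity facts. \emph{(i) Instantiation is monotone in the template.} By \equationref{DR transitions} the map $\absrel\mapsto\parallelinst[\idxact]{\absrel}{\threadcount}$ is a composition of variable-for-variable renamings (which preserve implication in both directions) with a conjunction over the passive threads (monotone); disjoining over $\idxact$ as in \equationref{n thread DR instantiation} is monotone too, and the same holds for the initial-state formulas. Hence, for any template formulas, $\absrel_{1}\impliesop\absrel_{2}$ and $\absinitialrel_{1}\impliesop\absinitialrel_{2}$ entail $\parallelinst{\absrel_{1}}{\threadcount}\impliesop\parallelinst{\absrel_{2}}{\threadcount}$ and $\parallelinst{\absinitialrel_{1}}{\threadcount}\impliesop\parallelinst{\absinitialrel_{2}}{\threadcount}$. \emph{(ii) The sequence $(\absprogram(\threadcount))_{\threadcount\atl2}$ is non-decreasing} under implication of components: $\absrel(\threadcount)\impliesop\absrel(\threadcount+1)$ and $\absinitialrel(\threadcount)\impliesop\absinitialrel(\threadcount+1)$. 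This is precisely what makes ``stabilizes at $\satbound$'' in \theoremref{satbound} meaningful, and combined with that theorem it yields $\absrel(\threadcount)\impliesop\absrel(\threadcount+1)\impliesop\cdots\impliesop\absrel(\satbound)=\absrel$ for $\threadcount<\satbound$, together with $\absrel(\threadcount)=\absrel$ for $\threadcount\atl\satbound$, and likewise for $\absinitialrel$.

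Granting (i) and (ii), the corollary is a short chase. \lemmaref{mpa is conservative}, disjoined over $\idxact$, gives $\parallelinst{\exabsrel}{\threadcount}\impliesop\parallelinst{(\absrel(\threadcount))}{\threadcount}$ and $\parallelinst{\exabsinitialrel}{\threadcount}\impliesop\parallelinst{(\absinitialrel(\threadcount))}{\threadcount}$. Applying (i) to the implications $\absrel(\threadcount)\impliesop\absrel$ and $\absinitialrel(\threadcount)\impliesop\absinitialrel$ from (ii) upgrades the right-hand sides to $\parallelinst{\absrel}{\threadcount}$ and $\parallelinst{\absinitialrel}{\threadcount}$; chaining the implications gives the two required facts, \ie\ $\absprogram^{\threadcount}$ overapproximates $\parallelinst{\exabsprogram}{\threadcount}$ for every $\threadcount\atl2$.

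The main obstacle is fact (ii), and within it the base step $\absrel(2)\impliesop\absrel(3)$ (with its initial-state twin). For $\threadcount\atl3$ it is routine: take a concrete witness realizing the right-hand side of \equationref{rpa} for $\threadcount$ threads and duplicate one of the passive threads \emph{other than} the reference thread of \equationref{rpa}; since that thread's local state is unchanged across the step, the copy merely repeats an already-present conjunct in every thread's evaluation of every predicate, so no abstract bit moves, and the extra thread is absorbed by the existential quantifier (the argument for \equationref{ipa} is identical). The case $\threadcount=2$ has no spare passive thread to duplicate, and an arbitrary third thread can flip an anti-reflexive inter-thread bit (\eg\ the one tracking $\localvar\ne\localvarpas$); one must instead re-realize the recorded active/reference abstract-state pair over three threads. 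This is the same small-model re-realization that the proof of \theoremref{satbound} performs to bound the saturation index at $4\numIT+2$, so I would lift it from there, or --- if \theoremref{satbound} is only invoked as a black box --- reprove $\absrel(2)\impliesop\absrel(3)$ and $\absinitialrel(2)\impliesop\absinitialrel(3)$ by the same technique.
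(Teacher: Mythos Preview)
Your argument is correct, and since the paper offers no proof of the corollary at all (it is stated as immediate from the two cited results), what you have written is in fact more than the paper provides. Your decomposition into fact~(i) (monotonicity of instantiation in the template) and fact~(ii) (the sequence $(\absprogram(\threadcount))$ is non-decreasing) is exactly the right way to make the corollary honest, and you correctly flag the $\threadcount=2\to 3$ step as the only place where work is needed.

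One simplification is available, however, and it lets you avoid the awkward base case entirely. The paper's proof of \theoremref{satbound} (in the appendix) does not actually proceed by establishing monotonicity of the sequence and then finding where it stabilizes. Instead it introduces $\absrel_\infty$ as (a finite encoding of) $\bigvee_{\threadcount\geq 2}\absrel(\threadcount)$ and proves directly that $\absrel_\infty\impliesop\absrel(\satbound)$. From this you get $\absrel(\threadcount)\impliesop\absrel_\infty\impliesop\absrel(\satbound)=\absrel$ for \emph{every} $\threadcount\geq 2$ in one stroke, with the first implication being trivial. Combined with your fact~(i) and \lemmaref{mpa is conservative}, the corollary follows without ever touching step-by-step monotonicity or the $\threadcount=2$ re-realization. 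So if you are willing to cite the content of the proof of \theoremref{satbound} rather than only its statement, your ``main obstacle'' disappears; if you insist on using the theorem as a black box, your handling of it is sound.
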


Building a template \drabb\ program thus requires instantiating the
existentially abstracted transition relation for a number $\satbound$ of
threads that is linear in the number of inter-thread predicates with
respect to which to abstraction is built.

As a consequence of losing asynchrony in the abstraction, many existing model
checkers for concurrent software become inapplicable%
~\citenopassivesupport.
For a fixed thread count $\threadcount$, the problem can be circumvented by
forgoing the replicated nature of the concurrent programs, as done in
\cite{DonaldsonFMSD12} for $\boom$ tool: it proves the ticket algorithm
correct up to $\threadcount=3$, but takes a disappointing 30 minutes. The
goal of the following section is to design an efficient and, more
importantly, fully parametric solution.

\section{Unbounded-Thread Dual-Reference Programs}
\label{section: Unbounded-Thread Dual-Reference Programs}

The multi-threaded Boolean dual-reference programs
$\parallelinst{\absprogram}{\threadcount}$ resulting from
pred\-i\-cate-ab\-stract\-ing asynchronous programs against inter-thread
predicates
are
symmetric and free of recursion.
The symmetry can be exploited using classical methods that ``counterize''
the state space \cite{GS92}: a global state is encoded as a vector of
local-state counters, each of which records the number of threads currently
occupying a particular local state.

These methods are applicable to
unbounded thread numbers as well, in which case the local state counters
range over unbounded natural numbers $[0,\infty[$.
The fact that the abstract program executed by each thread is finite-state
now might suggest that the resulting infinite-state counter systems can be
modeled as vector addition systems (as done in \cite{GS92}) or, more
generally, as
\emph{well quasi-ordered transition systems} \cite{FS01,Abdulla10} (defined
below). This would give rise to sound and complete
algorithms for local-state reachability in such programs.

This strategy turns out to be wrong: the full class of Boolean
\drabb\ programs is expressive enough to render safety
checking for an unbounded number of threads undecidable, despite the
finite-domain variables: \newcounterset{theoremUndecidability}{\theDEF}
\newcommand{\theoremUndecidability}[1][]{%
  Program location reachability for Boolean \drabb\ programs run by an
  unbounded number of threads is undecidable.}
\begin{THE}
  \theoremUndecidability
  \label{theorem: undecidability}
\end{THE}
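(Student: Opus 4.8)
The plan is to reduce from the (undecidable) halting problem for deterministic two-counter Minsky machines. Given such a machine $M$ with counters $c_1,c_2$, finitely many locations, and a distinguished halting location $\ell_h$, I will construct a Boolean \drabb\ template $\absprogram$ such that some thread of some instantiation $\parallelinst{\absprogram}{\threadcount}$ reaches location $\ell_h$ exactly when $M$ halts. A thread's local state carries a \emph{control} component ranging over the finite set $\{\bot,c_1,c_2\}\cup\{\langle\ell\rangle,\langle\ell,\mathrm{wait}\rangle : \ell \text{ a location of } M\}$, together with a \emph{mode} flag $f$ over $\{\mathrm{idle},\mathrm{inc}_i,\mathrm{dec}_i,\mathrm{done}\}$. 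Counter $c_i$ is represented by the number of threads whose control component equals $c_i$; the current location of $M$ is held by the unique thread in a $\langle\cdot\rangle$-state, the \emph{leader}; and $\absinitialrel$ puts every thread in $(\bot,\mathrm{idle})$.

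The key observation is that, because \equationref{DR transitions} instantiates a transition formula by \emph{conjoining} it over all passive threads, $\absrel$ can (i) \emph{universally guard} a move by a condition required of every passive thread, and (ii) \emph{broadcast} an update to all passive threads at once (ordinary asynchronous moves being the special case where the passive next-state equals the passive current-state). I use these as follows. \emph{Leader election:} a $\bot$-thread moves to $\langle\ell_0\rangle$ provided no passive thread is in any $\langle\cdot\rangle$-state; after the first election this guard fails for every other thread, so at most one leader ever exists, and no transition ever removes it or creates a second one. \emph{Zero test} (the $M$-instruction at $\ell$ that branches to $\ell'$ when $c_i=0$): the leader moves $\langle\ell\rangle\to\langle\ell'\rangle$ guarded by ``the control component of every passive thread differs from $c_i$''; since the leader is never a $c_i$-unit, this guard holds precisely when $c_i=0$. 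This move is the archetypal non-monotone transition --- adding a passive $c_i$-unit to its source state invalidates it --- so the resulting counter system is genuinely not well quasi-ordered, and this is exactly what makes an \emph{exact} zero test, impossible in a monotone system, available here.

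Increment and decrement each require two threads to change state, which a single asynchronous step cannot achieve; I implement them by a three-step handshake driven by a broadcast on $f$. For an increment of $c_i$ with successor $\ell'$: the leader moves $\langle\ell\rangle\to\langle\ell,\mathrm{wait}\rangle$ and broadcasts $f:=\mathrm{inc}_i$ to all passive threads; then some $\bot$-thread that now sees $f=\mathrm{inc}_i$ in its own state moves $\bot\to c_i$ and broadcasts $f:=\mathrm{done}$, which disables every other would-be responder; finally the leader, seeing $f=\mathrm{done}$, moves $\langle\ell,\mathrm{wait}\rangle\to\langle\ell'\rangle$ and broadcasts $f:=\mathrm{idle}$. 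Decrement is symmetric, with a $c_i$-unit responding by moving $c_i\to\bot$; if no $c_i$-unit exists the handshake simply stalls, which is harmless. While the leader sits in a $\langle\cdot,\mathrm{wait}\rangle$-state neither another leader move nor an election is enabled, so handshakes never interleave, and the $f$-broadcast serialises responders so that exactly one counter unit is created or removed per simulated instruction. The template is finite-state and each move is expressible as a transition formula over $\abspvarsact$ and $\abspvarspas$, so $\absprogram$ is indeed a Boolean \drabb\ program.

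For the forward direction, if $M$ halts its run attains some maximal total counter value $N$; instantiating $\absprogram$ with $\threadcount=N+2$ threads and scheduling the election, handshakes, and zero tests dictated by $M$ reaches a state of $\parallelinst{\absprogram}{\threadcount}$ with a thread in $\langle\ell_h\rangle$. For the converse I prove, by induction on run length in $\parallelinst{\absprogram}{\threadcount}$, the invariant that at most one thread occupies a $\langle\cdot\rangle$-state, that whenever a leader exists in a plain $\langle\ell\rangle$-state with all flags $\mathrm{idle}$ the triple (leader location, number of $c_1$-threads, number of $c_2$-threads) is a reachable configuration of $M$, and that an initiated handshake can be finished only in the one way that advances $M$ by a single instruction; exactness of the zero-test guard then rules out spurious branches, so a reachable state with a thread in $\langle\ell_h\rangle$ forces $M$ to halt. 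The main obstacle is precisely this faithfulness argument for the converse: one must verify that under arbitrary thread interleavings the handshake cannot leak --- no stale $f$ triggers an extra increment, no two $\bot$-threads both respond, and no half-finished handshake is interleaved with another instruction --- so that $\parallelinst{\absprogram}{\threadcount}$ is an exact simulation of $M$ rather than merely an overapproximation. Well-formedness of $\absprogram$ and the forward simulation are routine.
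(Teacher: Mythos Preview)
Your reduction is correct and follows the same overall scheme as the paper --- a reduction from two-counter Minsky machines in which counter values are encoded as the number of threads occupying designated local states, and the zero test is realised by a universal guard over the passive-thread program counter. Where you diverge is in the encoding of the machine's control state and, as a consequence, in how increment and decrement are simulated.

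The paper stores the Minsky control state in a \emph{local} variable $\localvar$ that is kept synchronised across all threads by every transition imposing $\localvar' = \localvarpas' = (\text{next control state})$; this exploits exactly the shared-variable simulation of \appendixref{Simulating Shared Via Local Variables}. With the control state effectively shared, an increment of $c_i$ is a \emph{single} \drabb\ transition: the active thread moves $d_0 \to d_i$ while the broadcast updates everyone's $\localvar$. Decrement is symmetric ($d_i \to d_0$), and the zero test is a move by a $d_0$-thread guarded by $\pcp \ne d_i$. No leader, no handshake, no mode flag; the faithfulness argument is then almost immediate, and the program has just three locations plus one $k$-valued local variable.

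Your choice to vest the control state in a distinguished leader thread forces increment/decrement to touch two threads (leader and counter unit), which \drabb\ transitions cannot do atomically --- hence your three-phase protocol with the broadcast flag $f$. The protocol is sound as you describe it, but it inflates both the state space and the correctness burden: you must argue uniqueness of the leader, atomicity of the handshake under interleaving, absence of stale flags, and pool non-emptiness for the forward direction. None of this is wrong, and you identify the delicate points yourself, but it is machinery the paper avoids entirely by letting \emph{every} thread carry the control state and updating it by broadcast.
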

The proof reduces the halting problem for 2-counter machines to a
reachability problem for a \drabb\ program~$\absprogram$. Counter values
$c_i$ are reduced to numbers of threads in program locations $d_i$
of~$\absprogram$. A zero-test for counter $c_i$ is reduced to testing the
\emph{absence of any thread} in location $d_i$. This condition can be
expressed using passive-thread variables, but not using traditional
single-thread local variables. (Details of the proof in \appendixref{Proof
  of theorem: undecidability}.)

\Theoremref{undecidability} implies that the unbounded-counter systems
obtained from asynchronous programs are in fact \emph{not} well
quasi-ordered. How come? Can this problem be fixed, in order to permit a
complete verification method? If so, at what cost?

\subsection{Monotonicity in Dual-Reference Programs}
\label{section: Monotonicity in Dual-Reference Programs}

For a transition system $(\stateset,\transrel)$ to be \ws, we need two
conditions to be in place \cite{FS01,Abdulla10,ACJY96}:
\begin{description}

\item[{\bfseries \wqodness:}] there exists a reflexive and transitive
  binary relation $\coveredby$ on $\stateset$ such that for every infinite
  sequence $\somestate,\otherstate,\ldots$ of states in $\stateset$ there
  exist $i,j$ with $i<j$ and $\somestate_i \coveredby \somestate_j$.

\item[{\bfseries monotonicity:}] for any
  $\somestate,\somestate',\otherstate$ with
  $\somestate\transrel\somestate'$ and $\somestate \coveredby \otherstate$
  there exists $\otherstate'$ such that $\otherstate\transrel\otherstate'$
  and $\somestate' \coveredby \otherstate'$.

\end{description}
We apply this definition to the case of dual-reference programs.
Representing global states of the abstract system
$\parallelinst{\absprogram}{\threadcount}$ defined in \sectionref{From
  Existential to Parametric Abstraction} as counter tuples, we can define
$\coveredby$ as
\[
  (\range[]{n_1}{n_k}) \coveredby (\range[]{n'_1}{n'_k}) \ \reldef \ \forall i=1..k: n_i \atm n'_i
\]
where $k$ is the number of thread-local states.
We can now characterize monotonicity of \drabb\ programs as follows:
\newcounterset{lemmaMonotoneDRPrograms}{\theDEF}
\newcommand{\lemmaMonotoneDRPrograms}[1][]{%
  Let $\absrel$ be the transition relation of a DR program. Then the
  infinite-state transition system $\union_{\threadcount=1}^\infty
  \parallelinst{\absrel}{\threadcount}$ is monotone (with respect to
  $\coveredby$) exactly if, for all $k \atl 2$:
  \begin{equation}
    \label{equation: monotone absrel-n#1}
    (\somestate,\somestate') \in \parallelinst{\absrel} k \wbox{$\impliesop$} \forall l_{k+1} \ \exists l_{k+1}', \pi \suchthat \left(\slice{\somestate}{l_{k+1}},\pi(\slice{\somestate'}{l_{k+1}'})\right) \in \parallelinst{\absrel}{k+1} \ .
  \end{equation}
}
\begin{LEM}
  \lemmaMonotoneDRPrograms
  \label{lemma: monotone DR programs}
\end{LEM}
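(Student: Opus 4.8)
I would establish the equivalence in two directions, after settling the reading of the statement. Each instantiation $\parallelinst{\absrel}{\threadcount}$ is viewed as a relation on $\threadcount$-tuples of thread-local states that is \emph{symmetric} under permutations of the thread indices --- immediate from \equationref{DR transitions}, whose conjunction ranges uniformly over all choices $\idxpas\ne\idxact$ of the passive role --- so $\parallelinst{\absrel}{\threadcount}$ descends to the quotient by the symmetric group $S_\threadcount$, i.e.\ to the space of local-state counter tuples on which $\coveredby$ is defined. Two facts are used repeatedly: a transition of a \drabb\ program preserves the number of threads, and $\parallelinst{\absrel}{\threadcount}$ is defined only for $\threadcount\atl 2$; hence every transition of the union system $\union_{\threadcount=1}^\infty\parallelinst{\absrel}{\threadcount}$ whose source has $k$ threads lies in $\parallelinst{\absrel}{k}$ with $k\atl 2$, and I read \equationref{monotone absrel-n} as quantifying over all such $k$ and all pairs $(\somestate,\somestate')\in\parallelinst{\absrel}{k}$.

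For the ``$\Leftarrow$'' direction, assume \equationref{monotone absrel-n} for all $k\atl 2$ and take counter states $\somestate\transrel\somestate'$ with $\somestate\coveredby\otherstate$. Lift $\somestate\transrel\somestate'$ to an indexed witness in $\parallelinst{\absrel}{k}$ ($k\atl 2$), and write the multiset of local states of $\otherstate$ as that of $\somestate$ together with $m$ further states $a_1,\ldots,a_m$. The plan is an induction on $m$, the base case $m=0$ being witnessed by $\otherstate'=\somestate'$; at step $i$ I apply \equationref{monotone absrel-n} with thread count $k+i$ to the current $(k+i)$-tuple transition and added local state $a_{i+1}$, obtaining a passive local state and a permutation that together yield a $(k+i+1)$-tuple transition whose source multiset has grown by $a_{i+1}$ and whose target multiset has grown by one further local state. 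After $m$ steps the source equals $\otherstate$ as a counter tuple and the target, call it $\otherstate'$, dominates $\somestate'$ in every local-state count, i.e.\ $\somestate'\coveredby\otherstate'$ --- which is exactly the monotonicity requirement. The conceptual point that makes the induction sound is that $\coveredby$ is generated by one-thread extensions, so the hypothesis for \emph{one} added thread in \emph{one} arbitrary local state is all that is needed.

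For the ``$\Rightarrow$'' direction, assume the union system is monotone, fix $k\atl 2$, a pair $(\somestate,\somestate')\in\parallelinst{\absrel}{k}$, and an arbitrary local state $l_{k+1}$. Instantiate monotonicity with source $\somestate$, target $\somestate'$, and the $\coveredby$-larger counter state $\slice{\somestate}{l_{k+1}}$ (which merely adds one thread in state $l_{k+1}$); this yields a $(k+1)$-thread state $\otherstate'$ with $\slice{\somestate}{l_{k+1}}\transrel\otherstate'$ and $\somestate'\coveredby\otherstate'$. The remaining work is to rewrite $\slice{\somestate}{l_{k+1}}\transrel\otherstate'$ into the precise syntactic form of \equationref{monotone absrel-n}: pick an indexed witness $(p,q)\in\parallelinst{\absrel}{k+1}$ of this counter transition; since $p$ has the same multiset of local states as $\slice{\somestate}{l_{k+1}}$, some permutation $\sigma$ sends $p$ to $\slice{\somestate}{l_{k+1}}$, and symmetry of $\parallelinst{\absrel}{k+1}$ gives $(\slice{\somestate}{l_{k+1}},\sigma(q))\in\parallelinst{\absrel}{k+1}$; finally, $\somestate'\coveredby\otherstate'$ together with the fact that $\sigma(q)$ has exactly one thread more than $\somestate'$ forces the multiset of $\sigma(q)$ to be that of $\somestate'$ plus a single local state $l_{k+1}'$, whence $\sigma(q)=\pi(\slice{\somestate'}{l_{k+1}'})$ for a suitable permutation $\pi$ --- which is the conclusion. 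The substitution manipulations behind ``symmetric'' and ``thread-count preserving'' are routine; I expect this last relabeling step --- isolating the permutation $\pi$ and the single separated local state $l_{k+1}'$ from the abstract monotonicity witness --- to be the fiddliest part.
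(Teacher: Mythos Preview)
Your proposal is correct and follows essentially the same route as the paper's proof: both directions proceed exactly as you describe, with the paper handling the ``$\Leftarrow$'' side by treating the cases of $k$, $k{+}1$, and ``more by induction'' (your explicit induction on $m$), and the ``$\Rightarrow$'' side by applying monotonicity to $\slice{\somestate}{l_{k+1}}$ and then extracting $l_{k+1}'$ and the permutation from the resulting $(k{+}1)$-thread successor. Your explicit treatment of the counter-versus-tuple quotient and the symmetry of $\parallelinst{\absrel}{\threadcount}$ is slightly more careful than the paper's, but the argument is the same.
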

\noindent
In \equationref[]{monotone absrel-n}, the expression $\forall l_{k+1}
\exists l_{k+1}' \ldots$ quantifies over valuations of the local variables
of thread $k+1$. The notation $\slice{\somestate}{l_{k+1}}$ denotes a
({\format\small$k+1$})-thread state that agrees with $\somestate$ in the
first $k$ local states and whose last local state is $l_{k+1}$; similarly
$\slice{\somestate'}{l_{k+1}'}$. Symbol $\pi$ denotes a permutation on
$\{1,\ldots,k+1\}$ that acts on states by acting on thread indices, which
effectively reorders thread local states.

Asynchronous programs are trivially monotone (and DR):
\equationref{monotone absrel-n} is satisfied by choosing $l_{k+1}' :=
l_{k+1}$ and $\pi$ the identity.
\Tableref{non-monotone} shows instructions found in
\emph{non-}asyn\-chro\-nous programs that destroy monotonicity, and why.
For example, the swap instruction in the first row gives rise to a
\drabb\ program with a 2-thread transition $(0,0,0,0) \in \absrel^2$.
Choosing $l_3 = 1$ in \equationref[]{monotone absrel-n} requires the
existence of a transition in $\absrel^3$ of the form
$(\localvar_1,\localvar_2,\localvar_3,\localvar_1',\localvar_2',\localvar_3')
= (0,0,1,\pi(0,0,\localvar_3'))$, which is impossible: by
equations \equationref[]{n thread DR instantiation} and \equationref[]{DR
  transitions}, there must exist $\idxact \in \{1,2,3\}$ such that for
$\{p,q\} = \{1,2,3\} \setminus \{\idxact\}$, both ``$\idxact$ swaps with
$p$'' and ``$\idxact$ swaps with $q$'' hold, i.e.
\[
  \localvar_p'=\localvar_\idxact \andop \localvar_\idxact'=\localvar_p \wbox{$\andop$} \localvar_q'=\localvar_\idxact \andop \localvar_\idxact'=\localvar_q \mathcomma
\]
which is equivalent to $\localvar_\idxact' = \localvar_p = \localvar_q
\andop \localvar_\idxact = \localvar_p' = \localvar_q'$. It is easy to see
that this formula is inconsistent with the
partial assignment
$(0,0,1,\pi(0,0,\localvar_3'))$, no matter what $\localvar_3'$.
\begin{table}[htbp]
  \begin{centering}
    \tablesizeb{%
      \begin{tabular*}{1\columnwidth}{@{\extracolsep{\fill}}@{\extracolsep{\fill}}@{\extracolsep{\fill}}@{\extracolsep{\fill}}@{\extracolsep{\fill}}@{\extracolsep{\fill}}@{\extracolsep{\fill}}ccccc}
        \toprule
        \multicolumn{2}{c}{\Dr\ program} &  & \multicolumn{2}{c}{Monotonicity}\tabularnewline
        \cmidrule{1-2} \cmidrule{4-5}
        instruction & variables & \hphantom{a} & mon.? & assgn.\ violating \equationref[]{monotone absrel}\tabularnewline
        \midrule
        $\localvar,\localvarpas\assignop\localvarpas,\localvar$ & $\localvar\in\booleans$ &  & \hphantom{a}no\hphantom{a} & $\localvar=0$, $\localvar'=1$ \tabularnewline
        $\localvar,\localvarpas\assignop\localvar+1,\localvarpas-1$ & $\localvar\in\naturals$ &  & yes & \tabularnewline
        $\localvarpas\assignop\localvarpas\plusop\localvar$ & $\localvar\in\naturals$ &  & yes & \tabularnewline
        $\localvar\assignop\localvar\plusop\localvarpas$ & $\localvar\in\naturals$ &  & no & $\localvar=\localvar'=1$ \tabularnewline
        $\localvarpas\assignop c$ & $\localvar,c\in\naturals$ &  & yes & \tabularnewline
        \bottomrule
      \end{tabular*}}
  \end{centering}
  \caption[Examples of monotonicity, and violations of it]{Each row shows a
    single-instruction program, whether the program gives rise to a
    monotone system and, if not,
    an assignment that violates \equationref{monotone absrel}. (Some of
    these programs are not finite-state.)}
  \label{table: non-monotone}
\end{table}

More interesting for us is the fact that asynchronous programs (= our input
language) are monotone, while their parametric predicate abstractions may
not be; this demonstrates that the monotonicity is in fact \emph{lost in
  the abstraction}. Consider again the decrement instruction $\localvar
\assignop \localvar-1$, but this time abstracted against the inter-thread
predicate $\predsym \reldef \localvar = \localvarpas$. Parametric
abstraction results in the two-thread and three-thread template
instantiations
\begin{eqnarray*}
  \parallelinst{\absrel} 2 & = & \left(\negop \localboolvar_1 \orop \negop \localboolvarp_1\right) \ \andop \ \localboolvar_1 = \localboolvar_2                   \ \andop \ \localboolvarp_1 = \localboolvarp_2 \\
  \parallelinst{\absrel} 3 & = & \left(\negop \localboolvar_1 \orop \negop \localboolvarp_1\right) \ \andop \ \localboolvar_1 = \localboolvar_2 = \localboolvar_3 \ \andop \ \localboolvarp_1 = \localboolvarp_2 = \localboolvarp_3 \mathperiod
\end{eqnarray*}
Consider the transition $(0,0) \rightarrow (1,1) \in \parallelinst{\absrel}
2$ and the three-thread state $\otherstate = (0,0,1) \strictlycovers
(0,0)$~: $\otherstate$ clearly has no successor in $\parallelinst{\absrel}
3$ (it is in fact inconsistent),
violating monotonicity. We discuss in \sectionref{Restoring Monotonicity in
  the Abstraction} what happens to the decrement instruction with respect
to predicate $\localvar < \localvarpas$.

\subsection{Restoring Monotonicity in the Abstraction}
\label{section: Restoring Monotonicity in the Abstraction}

Our goal is now to restore the monotonicity that was lost in the parametric
abstraction. The standard covering relation $\coveredby$ defined over local
state counter tuples
turns \emphasize{monotone} and \emphasize{Boolean} \drabb\ programs into
instances of well quasi-ordered transition systems. Program location
reachability is then decidable, even for unbounded threads.

In order to do so, we first derive a sufficient condition for monotonicity
that can be checked \emphasize{locally} over $\absrel$, as follows.
\newcounterset{theoremMonotoneAbsrel}{\theDEF}
\newcommand{\theoremMonotoneAbsrel}[1][]{%
  Let $\absrel$ be the transition relation of a DR program. Then the
  infinite-state transition system $\union_{\threadcount=1}^\infty
  \parallelinst{\absrel}{\threadcount}$ is monotone if the following
  formula over $\abslocalvars \times \abslocalvars'$ is valid:
  \begin{equation}
    \exists \abspvarspas \abspvarspas' \suchthat \absrel \wbox{$\impliesop$} \forall \abspvarspas \exists \abspvarspas' \suchthat \absrel \mathperiod
    \label{equation: monotone absrel#1}
  \end{equation}}
\begin{THE}
  \theoremMonotoneAbsrel
  \label{theorem: monotone absrel}
\end{THE}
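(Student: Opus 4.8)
The plan is to reduce the claim to the \emph{exact} monotonicity criterion of \lemmaref[]{monotone DR programs}: assuming the local validity \equationref[]{monotone absrel}, I verify \equationref[]{monotone absrel-n} for every $k \atl 2$. So fix $k \atl 2$ and an arbitrary transition $(\somestate,\somestate') \in \parallelinst{\absrel}{k}$. By \equationref[]{n thread DR instantiation} there is an active index $\idxact \in \{1,\dots,k\}$ with $(\somestate,\somestate') \in \parallelinst[\idxact]{\absrel}{k}$, which by \equationref[]{DR transitions} means: $\absrel$ is satisfied whenever its active-thread variables (unprimed and primed) are instantiated by thread $\idxact$'s current/next local state and its passive-thread variables by the current/next local state of \emph{any} other thread $\idxpas \le k$ with $\idxpas \ne \idxact$. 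Because $k \atl 2$, at least one such $\idxpas$ exists; hence, with the active-thread variables \emph{held fixed} to thread $\idxact$'s current/next state, the antecedent $\exists\,\abspvarspas\,\abspvarspas'\suchthat\absrel$ of \equationref[]{monotone absrel} is true.

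Next I invoke the hypothesis. Since \equationref[]{monotone absrel} is valid over $\abslocalvars \times \abslocalvars'$, it holds in particular for the active-thread valuation just fixed, so $\forall\abspvarspas\,\exists\abspvarspas'\suchthat\absrel$ holds under that valuation. Now extend $\somestate$ by an arbitrary local state $l_{k+1}$ for a fresh thread $k+1$; reading $l_{k+1}$ as the passive-variable valuation of this new thread and instantiating the universal quantifier, we obtain a next-state $l_{k+1}'$ for thread $k+1$ such that $\absrel$ holds with active-thread variables $=$ thread $\idxact$'s current/next state and passive-thread variables $=(l_{k+1},l_{k+1}')$.

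It remains to check that, keeping $\idxact$ active, the pair $(\slice{\somestate}{l_{k+1}},\slice{\somestate'}{l_{k+1}'})$ lies in $\parallelinst[\idxact]{\absrel}{k+1}$. By \equationref[]{DR transitions} this requires, for every passive index $\idxpas \in \{1,\dots,k+1\}\setminus\{\idxact\}$, that $\absrel$ hold with active $=\idxact$ and passive $=\idxpas$. For $\idxpas \le k$ the relevant conjunct involves only the (unchanged) current/next states of threads $\idxact$ and $\idxpas$, so it holds because $(\somestate,\somestate') \in \parallelinst[\idxact]{\absrel}{k}$; for $\idxpas = k+1$ it holds by the choice of $l_{k+1}'$. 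Thus $(\slice{\somestate}{l_{k+1}},\slice{\somestate'}{l_{k+1}'}) \in \parallelinst[\idxact]{\absrel}{k+1} \subseteq \parallelinst{\absrel}{k+1}$, which establishes \equationref[]{monotone absrel-n} with $\pi$ the identity permutation. As $k$ and $l_{k+1}$ were arbitrary, \lemmaref[]{monotone DR programs} gives monotonicity of $\union_{\threadcount=1}^\infty \parallelinst{\absrel}{\threadcount}$.

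The step I would be most careful about is the index-substitution bookkeeping of \equationref[]{DR transitions}: one must make sure that ``$\absrel$ holds for active thread $\idxact$ against passive thread $\idxpas$'' is \emph{literally} $\absrel$ with the active variables renamed to thread $\idxact$'s and the passive variables renamed to thread $\idxpas$'s (both current and next), so that (i) the conjuncts for $\idxpas \le k$ in the $(k+1)$-thread instance coincide with those already witnessed in the $k$-thread instance, and (ii) the new thread $k+1$ can be slotted in purely as an additional passive thread without disturbing the active thread's already-fixed current/next valuation --- which is exactly what lets us reuse the single witness produced by the hypothesis. Everything else is immediate; note in passing that no nontrivial permutation is needed, and that \equationref[]{monotone absrel} is only sufficient, since \lemmaref[]{monotone DR programs} additionally permits reshuffling threads and choosing a different active thread in the larger instance.
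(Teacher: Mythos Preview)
Your proof is correct and follows essentially the same route as the paper's: both reduce to \lemmaref[]{monotone DR programs}, pick the active index $\idxact$ witnessing $(\somestate,\somestate')\in\parallelinst[\idxact]{\absrel}{k}$, use $k\atl 2$ to ensure the antecedent of \equationref[]{monotone absrel} holds for the fixed active valuation, instantiate the $\forall\abspvarspas\,\exists\abspvarspas'$ conclusion at $l_{k+1}$ to obtain $l_{k+1}'$, and then merge the old passive conjuncts with the new one to land in $\parallelinst[\idxact]{\absrel}{k+1}$ with $\pi$ the identity. Your added remarks on the substitution bookkeeping and on sufficiency-only are accurate and do not diverge from the paper's argument.
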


Unlike the monotonicity characterization given in \lemmaref{monotone DR
  programs}, \equationref{monotone absrel} is formulated only about the
template program $\absrel$. It suggests that, if $\absrel$ holds for some
valuation of its passive-thread variables, then no matter how we replace
the current-state passive-thread variables $\abspvarspas$, we can find
next-state passive-thread variables $\abspvarspas'$ such that $\absrel$
still holds. This is true for asynchronous programs, since here
$\abslocalvars[\pasindex] = \emptyset$. It fails for the swap instruction
in the first row of \tableref{non-monotone}: the instruction gives rise to
the \drabb\ program $\absrel \reldef \localvar' = \localvarpas \andop
\localvarpas' = \localvar$. The assignment on the right in the table
satisfies $\absrel$, but if $\localvarpas$ is changed to 0, $\absrel$ is
violated no matter what value is assigned to $\localvarpas'$.

\Paragraph

We are now ready to modify the possibly non-monotone abstract
\drabb\ program $\absprogram$ into a new, monotone abstraction
$\absprogrammon$. Our solution is similar in spirit to, but different in
effect from, earlier work on \emph{monotonic abstractions}
\cite{DBLP:journals/entcs/AbdullaDR08}, which proposes to delete processes
that violate universal guards and thus block a transition. This results in
an overappoximation of the original system and thus possibly spuriously
reachable error states. By contrast, exploiting the monotonicity of the
\emph{concrete} program $\program$, we can build a monotone program
$\absprogrammon$ that is safe exactly when $\absprogram$ is,
thus fully preserving soundness and precision of the abstraction
$\absprogram$.
\begin{DEF}
  The \emphdef{non-monotone fragment (NMF)} of a \drabb\ program with
  transition relation $\absrel$ is the formula over $\abslocalvars \times
  \abslocalvars[P] \times \abslocalvars'$:
  \begin{equation}
    \relreset(\absrel) \wbox{\reldef} \neg \exists \abslocalvars[P]' \st \absrel \ \ \andop \ \ \exists \abslocalvars[P] \abslocalvars[P]' \st \absrel \mathperiod
    \label{equation: non-monotone fragment}
  \end{equation}
  \label{definition: non-monotone fragment}
\end{DEF}
The NMF encodes partial assignments $(\localvar,\localvarpas,\localvar')$
that cannot be extended, via any $\localvarpas'$, to a full assignment
satisfying $\absrel$, but can be extended for some valuation of
$\abslocalvars[P]$ other than $\localvarpas$.
We revisit the two non-monotone instructions from \tableref{non-monotone}.
The NMF of $\localvar,\localvarpas \assignop \localvarpas,\localvar$ is
$\localvar' \not= \localvarpas$: this clearly cannot be extended to an
assignment satisfying $\absrel$, but when $\localvarpas$ is changed to
$\localvar'$, we can choose $\localvarpas' = \localvar$ to satisfy
$\absrel$. The non-monotone fragment of $\localvar \assignop \localvar
\plusop \localvarpas$ is $\localvar' \atl \localvar \andop \localvar' \not=
\localvar \plusop \localvarpas$.

\Equationref{non-monotone fragment} is slightly stronger than the negation
of \equationref[]{monotone absrel}: the NMF binds the values of the
$\abslocalvarspas$ variables for which a violation of $\absrel$ is
possible. It can be used to ``repair'' $\absrel$:
\newcounterset{lemmaAddingNonMonotoneFragment}{\theDEF}
\newcommand{\lemmaAddingNonMonotoneFragment}[1][]{%
  For a \drabb\ program with transition relation $\absrel$, the program
  with transition relation $\absrel \orop \relreset(\absrel)$ is monotone.}
\begin{LEM}
  \lemmaAddingNonMonotoneFragment
  \label{lemma: adding non-monotone fragment}
\end{LEM}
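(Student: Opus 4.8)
The plan is to derive monotonicity of the repaired program from the template-level sufficient condition of \theoremref{monotone absrel}, rather than from the exact characterization of \lemmaref{monotone DR programs}. Writing $\absrel^+$ for $\absrel \orop \relreset(\absrel)$, I would show that
\[
  \exists \abspvarspas \abspvarspas' \suchthat \absrel^+ \ \ \impliesop \ \ \forall \abspvarspas \exists \abspvarspas' \suchthat \absrel^+
\]
is valid over $\abslocalvars \times \abslocalvars'$; monotonicity of $\union_{\threadcount=1}^{\infty} \parallelinst{(\absrel^+)}{\threadcount}$, that is, of the program with transition relation $\absrel^+$, then follows at once from \theoremref{monotone absrel}. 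So I would fix a valuation $(\localvar,\localvarp)$ of the active variables $\abslocalvars$ and $\abslocalvars'$ and assume the antecedent holds there.

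The first step is to observe that the antecedent already entails $\exists \abspvarspas \abspvarspas' \suchthat \absrel$ at $(\localvar,\localvarp)$: a witness of the antecedent either satisfies $\absrel$ itself, or it satisfies $\relreset(\absrel)$, and in the latter case the second conjunct of \equationref{non-monotone fragment}, namely $\exists \abslocalvars[P] \abslocalvars[P]' \st \absrel$ (whose only free variables are those of $\abslocalvars \times \abslocalvars'$, here frozen to $\localvar,\localvarp$), supplies the witness. Call this fact $(\ast)$. Next I would fix an arbitrary valuation $\localvarpas$ of $\abspvarspas$ and split into two cases. If some $\abspvarspas'$ makes $\absrel$ true at $(\localvar,\localvarpas,\localvarp)$, then that same $\abspvarspas'$ witnesses $\absrel^+$ and we are done. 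Otherwise $\negop{\exists \abslocalvars[P]' \st \absrel}$ holds at $(\localvar,\localvarpas,\localvarp)$; together with $(\ast)$ this makes $(\localvar,\localvarpas,\localvarp)$ satisfy the non-monotone fragment $\relreset(\absrel)$ by \definitionref{non-monotone fragment}. Since $\relreset(\absrel)$ is a formula over $\abslocalvars \times \abslocalvars[P] \times \abslocalvars'$ and hence imposes no constraint on $\abspvarspas'$, the tuple $(\localvar,\localvarpas,\localvarp,\localvarpasp)$ satisfies $\relreset(\absrel)$, hence $\absrel^+$, for \emph{every} value $\localvarpasp$. In either case the required $\abspvarspas'$ exists, which establishes the consequent and therefore the lemma.

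The argument is essentially bookkeeping of the quantifier nesting over the four variable groups $\abslocalvars$, $\abspvarspas$, $\abslocalvars'$, $\abspvarspas'$; the one delicate point is the case split, and it closes precisely because $\relreset(\absrel)$ was crafted to capture exactly the current valuations for which \emph{some} passive valuation admits a move (its second conjunct) while the one at hand does not (its first conjunct), and to leave $\abspvarspas'$ free on them --- which is what makes $\absrel^+$ indifferent to the next-state passive variables on those valuations, the local form of ``asynchrony'' that \theoremref{monotone absrel} requires. I would also note that $\absrel^+$ is still a well-formed \drabb\ transition formula over $\absprogramvars$ and $\absprogramvars'$ (reading $\relreset(\absrel)$ over the full signature, with $\abspvarspas'$ unconstrained), so \theoremref{monotone absrel} applies verbatim, and that the alternative route through the exact criterion of \lemmaref{monotone DR programs} would force one to juggle permutations $\pi$ and slices and be longer. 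I do not foresee a genuine obstacle beyond carrying out the quantifier manipulation carefully.
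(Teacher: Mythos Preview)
Your proposal is correct and follows essentially the same approach as the paper: both reduce the lemma to the template-level sufficient condition of \theoremref{monotone absrel} applied to $\absrel \orop \relreset(\absrel)$, and both hinge on the observation that the antecedent collapses to $\exists \abslocalvars[P]\abslocalvars[P]'\,.\,\absrel$. The only cosmetic difference is presentation: the paper simplifies the consequent algebraically down to $\exists \abslocalvars[P]\abslocalvars[P]'\,.\,\absrel$ and then notes the implication is trivial, whereas you fix valuations and do the equivalent two-case split semantically.
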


\Lemmaref{adding non-monotone fragment} suggests to add artificial
transitions to $\absprogram$ that allow arbitrary passive-thread changes in
states of the non-monotone fragment, thus lifting the blockade previously
caused by some passive threads. While this technique restores monotonicity,
the problem is of course that such arbitrary changes will generally modify
the program behavior; in particular, an added transition may lead a thread
directly into an error state that used to be unreachable.

In order to instead obtain a \emph{safety-equivalent} program,
we prevent passive threads that block a transition in
$\parallelinst{\absprogram}{\threadcount}$ from affecting the future
{\execution}. This can be realized by redirecting them to an auxiliary
sink state.
Let $\labelend$ be a fresh program label.
\begin{DEF}
  The \emphdef{monotone closure} of \drabb\ program $\absprogram =
  (\absrel,\absinitialrel)$ is the \drabb\ program $\absprogrammon =
  (\absrelmon,\absinitialrel)$ with the transition relation $\absrelmon
  \reldef \absrel \orop (\relreset(\absrel) \andop (\pcp' = \labelend))$ .
  \label{definition: monotone closure}
\end{DEF}
\noindent This extension of the transition relation has the following
effects:
\begin{enumerate*}

\item for any program state, if any passive thread can make a move, so can
  all, ensuring monotonicity,

\item the added moves do not affect the safety of the program, and

\item transitions that were previously possible are retained, so no
  behavior is removed.

\end{enumerate*}
The following theorem summarizes these claims:
\newcounterset{theoremSafetyEquivalence}{\theDEF}
\newcommand{\theoremSafetyEquivalence}[1][]{%
  Let $\program$ be an asynchronous program,
  and $\absof\program$ its parametric
  abstraction. The monotone closure $\absprogrammon$ of $\absprogram$ is
  monotone. Further, $\parallelinst{(\absprogrammon)}{\threadcount}$ is
  safe exactly if $\parallelinst{\absprogram}{\threadcount}$ is.}
\begin{THE}
  \theoremSafetyEquivalence
  \label{theorem: safety equivalence}
\end{THE}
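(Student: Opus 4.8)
The statement has two independent halves --- that $\absprogrammon$ is monotone, and that $\parallelinst{(\absprogrammon)}{\threadcount}$ and $\parallelinst{\absprogram}{\threadcount}$ are equi-safe --- and I would prove them in turn. For monotonicity it suffices, by \theoremref{monotone absrel}, to verify the local criterion \equationref[]{monotone absrel} for the closed relation $\absrelmon = \absrel \orop (\relreset(\absrel) \andop (\pcp' = \labelend))$. Fix a valuation of $\abslocalvars \times \abslocalvars'$ and suppose $\exists \abspvarspas \abspvarspas' \st \absrelmon$. Whichever disjunct the witness satisfies, $\exists \abspvarspas \abspvarspas' \st \absrel$ must hold --- directly if it satisfies $\absrel$, and otherwise because, by \definitionref{non-monotone fragment}, $\relreset(\absrel)$ contains $\exists \abslocalvars[P] \abslocalvars[P]' \st \absrel$ as a conjunct. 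Now take an arbitrary $\abspvarspas$: if some $\abspvarspas'$ extends it to a model of $\absrel$, that same $\abspvarspas'$ works for $\absrelmon$; otherwise $(\abslocalvars,\abspvarspas,\abslocalvars')$ lies in $\relreset(\absrel)$, so any $\abspvarspas'$ with $\pcp' = \labelend$ satisfies the second disjunct of $\absrelmon$. Either way the consequent of \equationref[]{monotone absrel} holds, so $\union_{\threadcount=1}^\infty \parallelinst{\absrelmon}{\threadcount}$ is monotone; this is a sharpening of the reasoning behind \lemmaref{adding non-monotone fragment}, which added the unrestricted $\relreset(\absrel)$.

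One direction of the equi-safety is immediate. Since $\absrel \impliesop \absrelmon$ and \equationref[]{DR transitions} forms a conjunction over the passive threads (hence is monotone in the template), $\parallelinst[\idxact]{\absrel}{\threadcount} \impliesop \parallelinst[\idxact]{\absrelmon}{\threadcount}$ for every $\idxact$, while $\absinitialrel$ is unchanged. Thus every execution of $\parallelinst{\absprogram}{\threadcount}$ is also an execution of $\parallelinst{(\absprogrammon)}{\threadcount}$, and as $\labelend$ is fresh (in particular not the error location) the two instantiations have the same error states; hence safety of $\parallelinst{(\absprogrammon)}{\threadcount}$ implies safety of $\parallelinst{\absprogram}{\threadcount}$. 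This is exactly point~(iii) stated just before the theorem.

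The substantial direction is $\parallelinst{\absprogram}{\threadcount}$ safe $\impliesop \parallelinst{(\absprogrammon)}{\threadcount}$ safe, which I would prove contrapositively: from an execution $\sigma$ of $\parallelinst{(\absprogrammon)}{\threadcount}$ ending in an error state, extract an error-reaching execution of $\parallelinst{\absprogram}{\threadcount'}$ for some $\threadcount' \atm \threadcount$, refuting the (parametric) safety of $\absprogram$ --- I read the claim over all $\threadcount$, which is the only notion of safety of the template program the paper uses. The key structural fact is that $\labelend$ is a genuine sink: a thread parked there is never the active thread of an $\absrelmon$-step, and in a passive role it either stays at $\labelend$ via $\absrel$ or is re-redirected there by the closure disjunct --- so it never leaves $\labelend$ and never reaches the error location; in particular the thread that is erroneous at the end of $\sigma$ is never redirected. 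Let $D$ be the set of threads sitting at $\labelend$ at the end of $\sigma$ (equivalently, ever redirected) and $S = \{1,\dots,\threadcount\}\setminus D$; I would project each state of $\sigma$ onto the $S$-components and delete stutter steps, claiming the result is an execution of $\parallelinst{\absprogram}{|S|}$ that still reaches the error location. For a step of $\sigma$ with active thread in $S$ this is routine: every passive thread outside $D$ cooperates (satisfies $\absrel$ in its passive role --- a redirected thread would be in $D$), and because \equationref[]{DR transitions} is a conjunction over passive threads, dropping the $D$-threads only weakens it, so the $S$-projection is a legal $\parallelinst[\idxact]{\absrel}{|S|}$ step; $\absinitialrel$ is unchanged, so the first state still satisfies $\parallelinst{\absinitialrel}{|S|}$.

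The step I expect to be the main obstacle is the remaining case: a step of $\sigma$ whose active thread lies in $D$ (active now, parked later) while some surviving passive thread changes its local state --- this genuinely arises for abstractions of asynchronous programs, since an active move can flip inter-thread predicate bits of a passive thread (see \tableref{dec_exabs}). For such a step the $S$-projection would be a ``transition'' with no active thread, which is admissible only if it is actually a stutter, so the heart of the argument is to ensure that it is. I would attack this by normalizing $\sigma$ so that each redirection is deferred as far as possible: a thread is redirected precisely because, as a passive thread, its local state is incompatible with the move it blocks, which should let one reschedule that thread's earlier activity behind the blocking move, so that a $D$-thread performs no state change on a surviving thread before it is parked. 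This is the point at which the monotonicity of the concrete program $\program$ --- lost in $\absprogram$, but deliberately exploited in the design of the closure --- is brought to bear. Once the normalization is established, the pruned sequence is an error-reaching execution of $\parallelinst{\absprogram}{|S|}$, refuting parametric safety of $\absprogram$ and completing the contrapositive, hence the theorem.
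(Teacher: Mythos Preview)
Your monotonicity argument and the direction ``$\absprogrammon$ safe $\Rightarrow$ $\absprogram$ safe'' are correct and essentially coincide with the paper's. The divergence is entirely in the hard direction, where your route and the paper's are genuinely different.

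The paper does \emph{not} try to project the $\absprogrammon$-trace onto a fixed surviving set $S$. Instead it chops the trace at the closure steps into $\absrel$-segments and carries a semantic invariant across the cuts: \emph{state safety is $\strictlycoveredby$-downward closed in $\absprogram$}. This is obtained from the monotonicity of the concrete program $\program$ together with conservativeness of the abstraction --- safety of an abstract state forces safety of its concretization, monotonicity of $\program$ closes that set downward, and the concretization of any $s \strictlycoveredby r$ lies inside that closure. At a closure step one removes exactly the freshly redirected threads from $r_i$, obtaining $r_i' \strictlycoveredby r_i$; downward-closedness keeps $r_i'$ safe, and $(r_i',t_{i+1}')$ is a bona fide $\absrel$-step on fewer threads, so the next segment cannot hit an error either. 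Induction over the segments finishes, and no single fixed-$\threadcount'$ witness execution is ever constructed.

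Your projection scheme runs into exactly the obstacle you flag, and the ``rescheduling'' you sketch does not close it: a $D$-thread that is \emph{active} at some step and only later parked cannot have that activity pushed behind the blocking move, because after that move it sits at $\labelend$ and can never be active again; nor is there any reason an $S$-thread can replay its role (the step is a genuine $\absrel$-move of an outsider that alters the inter-thread bits of $S$-threads, cf.\ \tableref{dec_exabs}). So as written the normalization is a real gap. Your instinct that the concrete monotonicity of $\program$ is the crux is correct, but in the paper it enters as $\strictlycoveredby$-downward closure of state safety in $\absprogram$, not as a trace-rescheduling device; that is the missing idea.
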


\Theoremref{safety equivalence} justifies our strategy for reachability
analysis of an asynchronous program~$\program$: form its parametric
predicate abstraction $\absprogram$ described in
Sections \sectionref[]{Inter-Thread Predicate Abstraction} and
\sectionref[]{From Existential to Parametric Abstraction}, build the
monotone closure $\absprogrammon$, and analyze $(\absprogrammon)^\infty$
using any technique for monotone systems.

\paragraph{Proving the parameterized ticket algorithm}
Applying this strategy to the ticket algorithm yields a well quasi-ordered
transition system for which the backward reachability method described in
\cite{Abdulla10} returns ``uncoverable'', confirming that the ticket
algorithm guarantees mutual exclusion, this time \emph{for arbitrary thread
  counts}.
We remind the reader that the ticket algorithm is challenging for existing
techniques: $\threader$ \cite{GPR11}, $\slab$
\cite{DBLP:conf/tacas/DragerKFW10} and $\symmpa$ \cite{DonaldsonFMSD12}
handle only a fixed number of threads, and the resource requirements of
these algorithms grow rapidly; none of them can handle even a handful of
threads. The recent approach from \cite{FarzanK13} generates
polynomial-size proofs, but again only for fixed thread counts.

\section{Comparison with Related Work}
\label{section: Comparison with Related Work}

Existing approaches for verifying asynchronous shared-memory programs
typically do not exploit the monotone structure that source-level
multi-threaded programs often naturally exhibit~\cite{\concref}.
For example, the constraint-based approach in~\cite{GPR11}, implemented in
$\threader$, generates Owi\-cki-Gries and rely-guarantee type proofs.
It uses predicate abstraction in a CEGAR loop to generate
environment invariants for fixed thread counts, whereas our approach directly
checks the interleaved state space and exploits monotonicity.
Whenever possible, $\threader$
generates thread-modular proofs
by prioritizing
predicates that do not refer to the local variables of other threads. 

A CEGAR approach for fixed-thread symmetric concurrent programs has been
implemented in $\symmpa$~\cite{DonaldsonFMSD12}. It uses predicate
abstraction to generate a Boolean Broadcast program (a special case of
\drabb\ program).
Their approach cannot reason about relationships between local variables
across threads, which is crucial for verifying
algorithms such as the ticket lock. Nevertheless, even the restricted
predicate language of \cite{DonaldsonFMSD12} can give rise to
non-asynchronous programs.
As a result, their technique cannot be extended to unbounded thread counts
with well quasi-ordered systems technology.

Recent work on data flow graph representations of fixed-thread concurrent
programs has been applied to safety property verification~\cite{FarzanK13}.
The inductive data flow graphs can serve as succinct correctness proofs for
safety properties;
for the ticket example they generate correctness proofs of size quadratic
in $\threadcount$.
Similar to \cite{FarzanK13}, the technique in \cite{FarzanK12} uses data flow graphs to
compute invariants of concurrent programs with unbounded threads
(implemented in $\duet$). In contrast to 
our approach,
which uses an
expressive predicate language, $\duet$ constructs proofs from relationships
between either solely shared or solely local variables.
These are insufficient for many benchmarks such as the parameterized ticket algorithm. 

Predicates that, like our inter-thread predicates, reason over all
participating processes/threads have been used extensively in invariant
generation methods \cite{APRXZ01,DBLP:conf/popl/FlanaganQ02,LB04}. As a
recent example, an approach that relies on abstract interpretation instead
of model checking is \cite{SSSC12}. Starting with a set of candidate
invariants (assertions), the approach builds a \emph{reflective
  abstraction},
from which invariants of the concrete system are obtained in a fixed point
process. These approaches and ours share the insight that complex
relationships over all threads may be required to prove easy-to-state
properties such as mutual exclusion. They differ fundamentally in the way
these relationships are used: abstraction with respect to a given set
$\mathcal Q$ of quantified predicates determines the strongest invariant
expressible as a Boolean formula over the set $\mathcal Q$; the result is
unlikely to be expressible in the language that defines $\mathcal Q$.
Future work will investigate how invariant generation procedures can be
used towards \emph{predicate discovery} in our technique.

The idea of ``making'' systems monotone, in order to enable wqo-based
reasoning, was pioneered in earlier
work \cite{BH05,%
DBLP:journals/entcs/AbdullaDR08%
}.
Bingham and Hu deal with guards that require universal quantification over
thread indices, by transforming such systems into Broadcast protocols.
This is achieved by replacing conjunctively guarded
actions
by
transitions that, instead of checking a universal condition,
execute it assuming that any thread not satisfying it ``resigns''. This
happens via a designated local state that isolates such threads from
participation in future the computation.
The same idea was
further developed
by Abdulla et al.\
in the context of
\emph{monotonic abstractions}.
Our solution to the loss of monotonicity was in some way inspired by these
works, but differs in two crucial aspects: first, our concrete input
systems are asynchronous and thus monotone, so our incentive to
\emph{preserve} monotonicity in the abstract is strong. Second, exploiting
the input monotonicity, we can achieve a monotonic abstraction that is
safety-equivalent to the non-monotone abstraction and thus not merely an
error-preserving approximation. This is essential, to avoid spurious
counterexamples in addition to those unavoidably introduced by the
predicate abstraction.

\section{Concluding Remarks}

We have presented in this paper a comprehensive verification method for
arbitrarily-threaded asynchronous shared-variable programs.
Our method is based on predicate abstraction and permits expressive
universally quantified \emph{inter-thread} predicates, which track
relationships such as ``my ticket number is the smallest, among all
threads''. Such predicates are required to verify, via predicate
abstraction, some widely used algorithms like the ticket lock. We found
that the abstractions with respect to these predicates result in
non-monotone finite-data replicated programs, for which reachability is in
fact undecidable. To fix this problem, we strengthened the earlier method
of monotonic abstractions such that it does not introduce spurious errors
into the abstraction.

We view the treatment of monotonicity as the major contribution of this
work. Program design often naturally gives rise to ``monotone
concurrency'', where adding components cannot disable existing actions, up
to component symmetry. Abstractions that interfere with this feature are
limited in usefulness. Our paper shows how the feature can be inexpensively
restored, allowing such abstraction methods and powerful infinite-state
verification methods to coexist peacefully.

\bibliographystyle{abbrv}

\clearpage

\appendix

\section*{Supplemental Material}

\section{Inter-thread Predicates are Essential for the Ticket Algorithm}
\label{appendix: Inter-thread Predicates are Essential for the Ticket Algorithm}

\begin{LEM}
  Consider the parameterized ticket algorithm where threads call
  \lstinline!spin_lock! and \lstinline!spin_unlock! arbitrarily often. No
  Hoare/Floyd-style correctness proof over \emphdef{single-thread}
  predicates exists.
  \label{lemma: IT needed}
\end{LEM}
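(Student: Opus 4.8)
The plan is to pin down what a ``Hoare/Floyd-style correctness proof over single-thread predicates'' is and then show that no valid one can exist. Such a proof is an assertion network for the template program $\program$: it attaches to each control location $\ell$ an assertion $\theta_\ell$ that is a Boolean combination of single-thread predicates --- formulas over $\sharedvars \union \localvars$ (optionally extended by \emph{shared} auxiliary variables) that mention no local variable, nor the index, of any other thread. For each $\threadcount \atl 2$ it induces a candidate invariant $\Theta_\threadcount$ of $\parallelinst{\program}{\threadcount}$, obtained by conjoining over all threads the assertion attached to each thread's current location, with $\localvars$ renamed to that thread's local copy. The proof is \emph{valid} if, for all $\threadcount$, $\Theta_\threadcount$ holds initially, is preserved by every thread's own step and by every interference step of the others, and implies mutual exclusion --- no state with two threads in the critical location $\labelc$ satisfies $\Theta_\threadcount$.

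The main step is elementary, and $\threadcount = 2$ already suffices. Being an annotation of the template, the network attaches the \emph{same} assertion $\theta_{\labelc}$ to $\labelc$ for every thread. Then ``$\Theta_2$ implies mutual exclusion'' says exactly that $\subst{\theta_{\labelc}}{\localvars[1]}{\localvars} \andop \subst{\theta_{\labelc}}{\localvars[2]}{\localvars}$ --- whose models are precisely the two-thread states with both threads in $\labelc$ --- is unsatisfiable. But its two conjuncts share only the shared variables $\sharedvar,\sharedvarother$ and constrain $\localvar_1$ and $\localvar_2$ independently, so instantiating $\localvar_1 = \localvar_2$ forces $\theta_{\labelc}$ itself to be unsatisfiable. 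Yet a lone thread can acquire the lock while a second thread idles at its initial location, so $\labelc$ is reachable in $\parallelinst{\program}{2}$; by initiation and consecution $\Theta_2$ holds in that reachable state, whence $\theta_{\labelc}$ is satisfiable --- a contradiction. (For a hypothetical fixed-$\threadcount$ proof that attaches thread-specific assertions, let $\Pi_i$ be the projection of thread $i$'s $\labelc$-assertion onto $\sharedvar,\sharedvarother$; validity makes $\Pi_i$ contain every shared valuation reachable with thread $i$ in $\labelc$, thread symmetry of $\parallelinst{\program}{\threadcount}$ makes these sets coincide, and nonemptiness of the common set makes the two $\labelc$-assertions jointly satisfiable, so the same contradiction arises.) What this isolates is that excluding two threads from $\labelc$ needs the fact that their ticket numbers differ, $\localvar_i \ne \localvar_j$ --- inherently inter-thread, hence invisible to single-thread predicates; this is an instance of the known incompleteness of thread-modular reasoning \cite{malkis10}.

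To connect this with the quantitative wording in the body (``the number of single-thread predicates needed depends on $\threadcount$''), I would observe that the relevant \emph{weaker} question --- predicate-abstracting the $\threadcount$-thread product as a single sequential program, where thread identities \emph{may} be used --- does admit a correctness proof, but one that must distinguish the reachable ``queue offsets'' $\localvar_i - \sharedvar \in \{0,1,\dots,\threadcount-1\}$ and so needs $\Omega(\threadcount)$ single-thread predicates. The hypothesis that threads call \lstinline!spin_lock! and \lstinline!spin_unlock! arbitrarily often is precisely what keeps $\sharedvar$ and $\sharedvarother$ unbounded, making a state in which all $\threadcount$ offsets occur at once reachable; under a bounded loop only boundedly many offsets appear. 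Since a finite predicate pool induces only finitely many Boolean-distinguishable classes of per-thread states, no fixed pool works for all $\threadcount$, so $\threadcount \to \infty$ also yields the lemma.

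I expect the subtle points to be: (a) arguing that a \emph{shared} auxiliary variable cannot rescue the network by smuggling in thread identities --- it cannot, since a shared value is seen identically by all threads and the projection argument is unaffected, but this must be stated carefully, especially if mutual exclusion is instead encoded via a ghost semaphore and an error location, in which case one runs the argument along the interference chain from $\labelc$ back to the spin-loop exit and again hits a forced inter-thread comparison of tickets; and (b), for the quantitative version above, making the $\Omega(\threadcount)$ counting and the choice of a reachable witness family precise. The $\threadcount = 2$ argument of the second paragraph is self-contained and already settles the lemma as stated, so (b) is needed only to recover the stronger phrasing.
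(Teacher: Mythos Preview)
Your argument is correct under the model you set up, but it is genuinely different from---and strictly narrower than---what the paper proves. You refute \emph{symmetric Owicki--Gries} proofs: a single template assertion $\theta_\ell$ per location, instantiated identically for every thread, so that the global invariant is the conjunction $\bigwedge_i \theta_{\pc_i}[\localvars_i]$. In that model your observation is valid and, as you note, entirely generic: for \emph{any} mutual-exclusion protocol, $\theta_{\labelc}(\sharedvars,\localvar_1)\wedge\theta_{\labelc}(\sharedvars,\localvar_2)$ unsatisfiable together with $\theta_{\labelc}$ satisfiable is already a contradiction (set $\localvar_1=\localvar_2$). Nothing about tickets, and nothing about the ``arbitrarily often'' hypothesis, is used.

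The paper instead targets a larger class of candidate proofs: an \emph{arbitrary} Boolean combination $I$ of single-thread predicates drawn from $\bigcup_i E^{\sharedvar,\sharedvarother,\localvar_i}$. This admits invariants such as $(\localvar_1{=}\sharedvar\wedge\localvar_2{=}\sharedvar{+}1)\vee(\localvar_1{=}\sharedvar{+}1\wedge\localvar_2{=}\sharedvar)$, which \emph{does} entail $\localvar_1\neq\localvar_2$ and is not decomposable into a conjunction of per-thread assertions---so your diagonal argument at $\labelc$ and your projection argument for thread-specific assertions both fail against it. To rule out such invariants the paper works one step earlier, at the trying location $\labelb$: it argues that any inductive $I$ strong enough for mutex must force $\localvar_i\neq\localvar_j$ whenever both threads are at $\labelb$, then exploits that threads cycle through lock/unlock \emph{arbitrarily often} to produce infinitely many reachable $(\sharedvar,\localvar_1,\localvar_2)=(c,c,c{+}1)$ and $(c,c{+}1,c)$. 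A pigeonhole over the finitely many DNF cubes of $I$ yields a single cube satisfied by both patterns for some $c$; splitting that cube along the disjoint variable sets $\{\sharedvar,\sharedvarother,\localvar_1\}$ and $\{\sharedvar,\sharedvarother,\localvar_2\}$ then shows $(c,c,c)$ also satisfies it, contradicting $\localvar_1\neq\localvar_2$. This is where the ticket-specific structure and the unboundedness hypothesis actually do work; your third paragraph gestures at this but does not carry it out.

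In short: your second paragraph proves a weaker statement cleanly; the paper's DNF-pigeonhole-and-split argument is what is needed once disjunctive single-thread invariants are allowed.
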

\begin{proof}
  We write $\pc_i$ for the pc of thread $i$, $1 \atm i \atm \threadcount$.
  We first state some easy-to-prove invariants of the ticket algorithm:
  \begin{eqnarray}
    & & s \atm t \atm s+\threadcount \\
    & & \pc_i = \ell_1 \impliesop l_i = 0 \\
    & & \pc_i = \ell_2 \impliesop s < l_i < t \\
    & & \pc_i = \ell_3 \impliesop l_i = s \\
    & & \#(\pc = \ell_2) + \#(\pc = \ell_3) = t - s \label{equation: t-s}
  \end{eqnarray}
  We can think of $\ell_1$, $\ell_2$, and $\ell_3$ as the non-critical,
  trying, and locked region of a standard mutex lock. The total number of
  threads in the trying and locked regions is $t-s$ (\equationref{t-s}). If
  all threads are ``non-critical'', we have $s=t$, and the $l_i$ are all
  zero.

  Let now
  \begin{equation}
    \textstyle
    E = \Union_{i=1}^\threadcount E^{s,t,l_i}
  \end{equation}
  be the \emphasize{disjoint} union of sets of predicates formulated over
  the shared variables $s$ and $t$ and any one of the $l_i$ ; in
  particular, no predicate may refer to several of the $l_i$. Suppose $I$
  is an invariant expressible over $E$ that is strong enough to prove
  mutual exclusion. Then
  \begin{equation}
    \forall i,j \st i \not= j \st \ I \andop \pc_i = \pc_j = \ell_2 \wbox{$\impliesop$} l_i \not= l_j \mathcomma
    \label{equation: invariant}
  \end{equation}
  since otherwise threads $i$ and $j$ can, once $s$ reaches the value $l_i$
  ($=l_j$), escape the busy-wait loop and simultaneously proceed to the
  critical section.

  For any $c$, there exists a \emph{reachable} global state satisfying
  $\pc_1=\pc_2=\ell_2$ and $(s,l_1,l_2) = (c,c,c+1)$ (that is, thread 1
  proceeds to the trying region first, then thread 2), and a
  \emph{reachable} global state satisfying $\pc_1=\pc_2=\ell_2$ and
  $(s,l_1,l_2) = (c,c+1,c)$ (vice versa). Since $c$ is unbounded, there
  thus exist infinitely many such assignments that satisfy invariant $I$.

  Let now $\{I_1,\ldots,I_w\}$ be the cubes in the DNF representation of
  $I$. Since this set is finite, there exists a single cube $I_k$ that
  satisfies both $(s,l_1,l_2) = (c,c,c+1)$ and $(s,l_1,l_2) = (c,c+1,c)$,
  for some $c$. We split $I_k$ into the sub-cubes
  that belong to $E^{s,t,l_1}$, and those that belong to $E^{s,t,l_2}$:
  $I_k = I_k^1 \andop I_k^2$; note that these sub-cube sets are disjoint
  (sub-cubes that refer to neither $l_1$ nor $l_2$ are apportioned to
  either side). Then $(s,l_1,l_2) = (c,c,c+1)$ satisfies $I_k^1$, which
  does not contain $l_2$, so in fact $(s,l_1) = (c,c)$ satisfies $I_k^1$.
  Symmetrically, one obtains that $(s,l_2) = (c,c)$ satisfies $I_k^2$.
  Hence $(s,l_1,l_2) = (c,c,c)$ satisfies $I_k^1 \andop I_k^2 = I_k$ and
  hence satisfies $I$, which contradicts \equationref{invariant}.

\end{proof}

\noindent

\section{Simulating Shared Via Local Variables}
\label{appendix: Simulating Shared Via Local Variables}

We have excluded shared variables from the description of \dr\ programs to
simplify the notation. This is not a restriction, as such variables can be
simulated via active- and passive-thread local variables, as follows. To
eliminate shared variable $\sharedvar$, we instead introduce a fresh local
variable $\localvar \in \abspvarsact$, and replace a statement like
$\sharedvar := 5$ by the atomic statement $\localvar := 5, \localvarpas :=
5$. That is, each thread keeps a local copy of what used to be the shared
variable; the semantics of passive-thread variables ensures that the values
are synchronized across all threads.

\section{Proof of \lemmaref{mpa is conservative}}
\label{appendix: Proof of lemma mpa is conservative}

\setcounter{DEF}{\thelemmaMPAisConservative}
\begin{LEM}
  \lemmaMPAisConservative[ repeat]
\end{LEM}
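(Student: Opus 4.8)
The plan is to prove both implications directly and semantically, by reading off the witnesses that the right-hand sides require from the given abstract state itself, after first recording a symmetry property of the existential abstraction. Throughout, let a permutation $\rho$ of $\{1,\ldots,\threadcount\}$ act on abstract $\threadcount$-thread states by permuting thread indices, and write $\bar v|_\iota$ for the projection of an abstract state $\bar v$ onto thread~$\iota$. The symmetry property I would establish is: $(\bar v,\bar v')\models\parallelinst[\idxact]{\exabsrel}{\threadcount}$ iff $(\rho\bar v,\rho\bar v')\models\parallelinst[\rho(\idxact)]{\exabsrel}{\threadcount}$, and $\bar v\models\parallelinst{\exabsinitialrel}{\threadcount}$ iff $\rho\bar v\models\parallelinst{\exabsinitialrel}{\threadcount}$, for every $\rho$. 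This follows by pushing $\rho$ through the defining equations: relabelling the concrete variables turns $\parallelinst[\idxact]{\rel}{\threadcount}$ of \equationref{mtp_exp_rel} into $\parallelinst[\rho(\idxact)]{\rel}{\threadcount}$ and turns $\pred[\idxact]{\predicateidx}$ of \equationref{inter-thread semantics} into $\pred[\rho(\idxact)]{\predicateidx}$ (re-index the conjunction over passive threads), so $\Arel[\threadcount]$ of \equationref{existential predicate abstraction definition} is invariant when concrete and abstract thread indices are permuted \emph{together}; since the concrete variables are existentially bound in \equationref{existential transrel} and \equationref{existential initialrel}, the claim follows. As $\parallelinst{\exabsinitialrel}{\threadcount}$ has no distinguished active thread, it is in fact invariant under \emph{every} $\rho$.

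For the transition-relation implication, I would first unfold the right-hand side via \equationref{DR transitions} and \equationref{rpa}: $(\bar v,\bar v')\models\parallelinst[\idxact]{\absrel(\threadcount)}{\threadcount}$ holds exactly if, for every passive thread $\idxpas\ne\idxact$, there is an abstract $\threadcount$-thread pair $(\bar u,\bar u')$ with $\bar u|_1=\bar v|_\idxact$, $\bar u|_2=\bar v|_\idxpas$ (and likewise $\bar u'|_1=\bar v'|_\idxact$, $\bar u'|_2=\bar v'|_\idxpas$) satisfying $(\bar u,\bar u')\models\parallelinst[1]{\exabsrel}{\threadcount}$ --- here threads $3,\ldots,\threadcount$ of $\bar u$ and $\bar u'$ supply the witnesses for the $\threadcount-2$ slots that \equationref{rpa} existentially quantifies. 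So assume $(\bar v,\bar v')\models\parallelinst[\idxact]{\exabsrel}{\threadcount}$ and fix $\idxpas\ne\idxact$. Pick any $\rho$ with $\rho(\idxact)=1$ and $\rho(\idxpas)=2$ (possible since $\idxact\ne\idxpas$) and let $(\bar u,\bar u')=(\rho\bar v,\rho\bar v')$. Then $\bar u|_1=\bar v|_\idxact$ and $\bar u|_2=\bar v|_\idxpas$ hold by construction (and similarly for the primed parts), and by the symmetry property $(\bar u,\bar u')\models\parallelinst[\rho(\idxact)]{\exabsrel}{\threadcount}=\parallelinst[1]{\exabsrel}{\threadcount}$. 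As $\idxpas$ was arbitrary, $(\bar v,\bar v')\models\parallelinst[\idxact]{\absrel(\threadcount)}{\threadcount}$. The initial-state implication is the same argument with no next-state components: by \equationref{DR initial} and \equationref{ipa}, $\bar v\models\parallelinst[\idxact]{\absinitialrel(\threadcount)}{\threadcount}$ iff for each $\idxpas\ne\idxact$ there is a $\bar u$ with $\bar u|_1=\bar v|_\idxact$, $\bar u|_2=\bar v|_\idxpas$ and $\bar u\models\parallelinst{\exabsinitialrel}{\threadcount}$, and one again takes $\bar u=\rho\bar v$ for $\rho$ as above, using that $\parallelinst{\exabsinitialrel}{\threadcount}$ is symmetric under every $\rho$.

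I expect the main obstacle to be bookkeeping rather than ideas. Unfolding \equationref{DR transitions} and \equationref{rpa} must be done carefully: one has to keep straight the un-indexed template variables $\abspvarsact,\abspvarspas$ versus the per-thread bits $\bpred[\idxact]{\predicateidx}$ appearing in $\parallelinst[1]{\exabsrel}{\threadcount}$ and its instantiations, track the primed copies carried by $\substp{\cdot}{\cdot}{\cdot}$, and --- the one genuine subtlety --- ensure that the $\threadcount-2$ existentially quantified slots of \equationref{rpa} are taken disjoint from the observed threads $1$ and $\idxpas$, so that substituting the remaining components of $\rho\bar v$ for them is legitimate. The symmetry property itself is visually clear from the replicated structure of $\parallelinst{\rel}{\threadcount}$ and the uniform definition of $\absfunc$, but still needs to be checked with care, chiefly the invariance of $\Arel[\threadcount]$ under simultaneous permutation of concrete and abstract thread indices.
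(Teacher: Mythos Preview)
Your argument is correct and follows the same route as the paper: both proofs reduce the implication to a symmetry fact about the existential abstraction under permutation of thread indices. The paper states this tersely (``replacing thread ids 1 and 2 by thread ids $\idxact$ and $\idxpas$ does not falsify the formula'') and only spells out the initial-state case, marking the proof as partial; you make the permutation argument explicit, treat both cases, and flag the capture-avoidance subtlety in the existential slots of \equationref{rpa}, which the paper leaves implicit.
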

\begin{proof}[partial]
  For the initial states, by equations \equationref[]{existential
    initialrel}, \equationref[]{DR initial} and \equationref[]{ipa} the
  implication amounts to
  \begin{eqnarray*}
    \parallelinst{\exabsinitialrel}{\threadcount} & \impliesop &
    \subsc{\Andop}{\idxpas}[\idxpas\ne\idxact]
    \subst{\subst{\absinitialrel(\threadcount)}{\abslocalvars[\idxact]}{\abspvarsact}}{\abslocalvars[\idxpas]}{\abspvarspas} \\
                                                  & = &
    \subsc{\Andop}{\idxpas}[\idxpas\ne\idxact]
    \exists \exabslocalvars[3],\ldots,\exabslocalvars[\threadcount] \suchthat \
    \parallelinst{\exabsinitialrel}{\threadcount}\{\exabslocalvars[1] \triangleright \abslocalvars\}\{\exabslocalvars[2] \triangleright \abslocalvars[P]\}\{\abslocalvars \triangleright \abslocalvars[\idxact]\}\{\abslocalvars[P] \triangleright \abslocalvars[\idxpas]\} \\
                                                  & = &
    \subsc{\Andop}{\idxpas}[\idxpas\ne\idxact]
    \exists \exabslocalvars[3],\ldots,\exabslocalvars[\threadcount] \suchthat \
    \parallelinst{\exabsinitialrel}{\threadcount}\{\exabslocalvars[1] \triangleright \abslocalvars[\idxact]\}\{\exabslocalvars[2] \triangleright \abslocalvars[\idxpas]\} \mathperiod
  \end{eqnarray*}
  The implication holds since the initial condition $\initialrel$ is
  identical for all threads \equationref[]{n thread instantiation}, so
  replacing thread ids 1 and 2 by thread ids $\idxact$ and $\idxpas$ does
  not falsify the formula. The case of the transition relation is similar
  but more involved.
\end{proof}

\section{Proof of \theoremref{satbound}}
\label{appendix: Proof of theorem: satbound}

\setcounter{DEF}{\thetheoremSatbound}
\begin{THE}
  \theoremSatbound[ repeat]
\end{THE}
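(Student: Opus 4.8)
I plan to regard $\absrel(\threadcount)$ as a Boolean relation over the \emph{fixed} variable set $\abslocalvars \cup \abspvarspas$ (together with primed copies), independent of $\threadcount$ as a domain, and to show that this relation stops changing once $\threadcount$ is large enough; the argument for $\absinitialrel(\threadcount)$ will be analogous and strictly easier. Unwinding the definition of $\absrel(\threadcount)$ in \equationref[]{rpa} together with \equationref[]{existential transrel}, a valuation $(\bpredo,\bpredopas,\bpredop,\bpredopasp)$ lies in $\absrel(\threadcount)$ exactly when there is a concrete $\threadcount$-thread transition of $\parallelinst[1]{\rel}{\threadcount}$ --- thread $1$ active, threads $2,\dots,\threadcount$ passive, shared and thread-$1$-local variables updated per $\rel$ --- in which the predicates evaluated with thread $1$ in the active role take the truth values $\bpredo$ before and $\bpredop$ after the step, and those evaluated with thread $2$ in the active role take $\bpredopas$ before and $\bpredopasp$ after; the bits of threads $3,\dots,\threadcount$ are existentially projected away. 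The observation that makes everything work is that an ``extra'' passive thread $\someindex\in\{3,\dots,\threadcount\}$ influences \emph{only} the values $\bpredo,\bpredop,\bpredopas,\bpredopasp$ restricted to the $\numIT$ inter-thread predicates: each such bit is a conjunction over the passive threads of a binary predicate on the active and passive local states, and thread $\someindex$ contributes precisely one conjunct to it. Single-thread and shared predicates contain no passive-thread variable, so the corresponding bits are unaffected when a passive thread of index $\atl 3$ is added or removed. Hence thread $\someindex$ is ``relevant'' to at most $4\numIT$ conjuncts: $\numIT$ predicates, times the choice of active role (thread $1$ or thread $2$), times the choice of time point (before or after the step).

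\textbf{Step 1 (reduction).}
I claim $\absrel(\threadcount) \subseteq \absrel(\threadcount-1)$ whenever $\threadcount \atl 4\numIT+3$. Fix a concrete $\threadcount$-thread witness for some tuple in $\absrel(\threadcount)$. Call a passive thread $\someindex\atl 3$ \emph{essential} if, for one of the at most $4\numIT$ false inter-thread bits, thread $\someindex$ is the \emph{unique} passive thread whose conjunct in that bit is false. A bit has at most one unique falsifier, so assigning each essential thread to such a bit is an injection into a set of size $\leq 4\numIT$; since $\threadcount-2 > 4\numIT$, some thread $\someindex_0\atl 3$ is non-essential. Delete $\someindex_0$ and reindex the remaining passive threads (the abstraction is symmetric in the passive indices and threads $1$ and $2$ are kept fixed). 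The result is a concrete $(\threadcount-1)$-thread transition of $\parallelinst[1]{\rel}{\threadcount-1}$ realizing the \emph{same} tuple: a true bit stays true (only conjuncts of an all-true conjunction were dropped), a false bit stays false (as $\someindex_0$ was not its unique falsifier, another false conjunct survives), and the non-inter-thread bits and the validity of the transition on threads $1,2$ are untouched. Hence the tuple lies in $\absrel(\threadcount-1)$.

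\textbf{Step 2 (extension) and conclusion.}
Conversely, $\absrel(\threadcount-1)\subseteq\absrel(\threadcount)$ whenever $\threadcount-1\atl 3$: given a $(\threadcount-1)$-thread witness, add a thread $\threadcount$ whose local state copies that of thread $3$; every affected bit merely gains the conjunct already contributed by thread $3$, so no conjunction changes value and the transition remains valid. Combining the two steps, $\absrel(\threadcount)=\absrel(\threadcount-1)$ for every $\threadcount\atl 4\numIT+3$, and chaining these equalities downwards gives $\absrel(\threadcount)=\absrel(4\numIT+2)=\absrel(\satbound)$ for all $\threadcount\atl\satbound$. Running the same argument on the initial relation via \equationref[]{ipa}, only the bits $\bpredo$ and $\bpredopas$ are relevant (there are no next-state bits), so $\absinitialrel$ stabilizes already by $2\numIT+2\leq\satbound$. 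In the degenerate case $\numIT=0$ no bit depends on any passive thread, hence $\absrel(\threadcount)=\absrel(2)$ and $\absinitialrel(\threadcount)=\absinitialrel(2)$ for all $\threadcount\atl 2=\satbound$. Assembling the two components yields $\absprogram(\threadcount)=\absprogram(\satbound)$ for all $\threadcount\atl\satbound$.

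\textbf{Main obstacle.}
The delicate point is making the count in Step 1 tight, i.e.\ obtaining $4\numIT$ and not a larger bound. This requires a precise enumeration of exactly which conjuncts an extra passive thread can falsify --- one per inter-thread predicate, for each of the two active roles (thread $1$ vs.\ thread $2$) and each of the two time points --- and the realization that thread $2$, which we are not permitted to delete, never has to be counted, because ``essentiality'' is only ever attributed to threads of index $\atl 3$ (thread $2$ feeds the same conjunctions, but its removal is simply never invoked). The remaining checks --- that deleting and copying passive threads respect \equationref[]{n thread DR instantiation} and \equationref[]{DR transitions}, given that passive threads merely satisfy $\localvars[\idxpas]\allequal\localvarsp[\idxpas]$ and the construction is symmetric in them --- are routine.
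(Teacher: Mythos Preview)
Your proof is correct and rests on the same core insight as the paper's: an extra passive thread $\someindex\atl 3$ can only matter insofar as it is the \emph{sole} falsifying conjunct of one of the $4\numIT$ inter-thread bits (two active roles, two time points), so beyond $4\numIT+2$ threads one passive thread is always redundant. The paper packages this as an induction on the number of predicates, building a selection map $\pi$ that picks threads $1$, $2$, and up to four falsifying witnesses per inter-thread predicate; you package the same count as a pigeonhole argument that finds a \emph{removable} thread and iterates downward. These are two presentations of the same combinatorics.

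One point where your write-up is actually more explicit than the paper: the paper asserts ``$\sometransition$ satisfies $\absrel_q$ (and thus $\absrel_6$)'' without justifying the parenthetical, i.e.\ without arguing the extension direction $\absrel(q)\subseteq\absrel(6)$. Your Step~2 supplies exactly this, via the clean observation that copying an existing passive thread (index $\atl 3$) adds only duplicate conjuncts to the bits of threads $1$ and $2$. Your separate handling of the degenerate case $\numIT=0$ is also a welcome addition. Overall the approaches coincide; your decomposition into ``reduction'' and ``extension'' is if anything a tidier exposition of the same argument.
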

\begin{proof}
\global\long\def\sometransition{t}
\global\long\def\somevaluation{v}
\newcommandx\stateperm[1][usedefault, addprefix=\global, 1=]{\pi_{#1}}
\global\long\def\ArelT{{\Arel[\threadcount][\idxact]}}
\global\long\def\ArelTT{{\Arel[\threadcount]}}

Let $\pred 1,\ldots,\pred{\predicatenum}$ be a list of predicates $\numIT$
of which are inter-thread, and let $\absrel_{\infty}$ denote the formula
characterizing $\Orop_{\threadcount=1}^{\infty}\absrel_{\threadcount}$ (the
existence of a finite encoding is guaranteed). We show that stabilization
occurs at $\satbound=2+4\times\numIT$, i.e.,
$\absrel_{\infty}\impliesop\absrel_{\satbound}$. The proof for the
stabilization of $\absinitialrel$ is analogous (factor 4 then reduces to
2). We first show that stabilization occurs for the special case that all
predicates are inter-thread, and then argue that this value is insensitive
to the number of single-thread predicates.

The proof is by induction over $\predicatenum$. Let first
$\predicatenum=\numIT=1$ and
$\sometransition=(\bpredo_{1}\bpredo_{2}\bpredo_{1}'\bpredo_{2}')\in\booleans^{4}$
be a transition in $\absrel_{\infty}$, and $\ArelT\reldef(\bpredact 1
\iffop \predact 1)$. Then by definition of $\absrel_{\infty}$ and
\equationref{rpa} there exists a thread number $\threadcount\ge2$, and a
valuation $\somevaluation$ of variables $\absprogramvars[\threadcount]$,
$\absprogramvars[\threadcount]'$, $\programvars[\threadcount]$ and
$\programvars[\threadcount]'$ such that $\somevaluation$ satisfies
$\parallelinst[1]{\rel}{\threadcount}\andop\Andop_{\idxact=1}^{2}(\bpredo_{\idxact}\iffop\predact
1 \andop\bpredo_{\idxact}'\iffop\predact
1')\andop\Andop_{\idxact=3}^{\threadcount}\ArelT\andop\ArelT'$. Let
\[
  \somevaluation=(\bpredo_{1}\ldots\bpredo_{\threadcount}\bpredo_{1}'\ldots\bpredo_{\threadcount}'\sharedvar\localvar_{1}\ldots\localvar_{\threadcount}\sharedvar'\localvar_{1}'\ldots\localvar_{\threadcount}')
\]
be that valuation. Then there exists a number $q\in[2,6]$, and a map
$\stateperm:\,\{1,\ldots,q\}\to\{1,\ldots,\threadcount\}$ such that
\[
  (\bpredo_{\stateperm[1]}\ldots\bpredo_{\stateperm[\threadcount]}\allowbreak\bpredo_{\stateperm[1]}'\ldots\bpredo_{\stateperm[\threadcount]}'\allowbreak\sharedvar\localvar_{\stateperm[1]}\ldots\localvar_{\stateperm[\threadcount]}\allowbreak\sharedvar'\localvar_{\stateperm[1]}'\ldots\localvar_{\stateperm[\threadcount]}')
\]
satisfies
$\parallelinst[1]{\rel}{\threadcount}\wedge\ArelTT\wedge\ArelTT'$, namely
by defining $\stateperm[1]=1$, $\stateperm[2]=2$, and letting
$\stateperm[3],\ldots,\stateperm[q]$ identify passive threads that falsify
a conjunct in each of the expanded $\pred[1] 1$, $\pred[2] 1$, $\pred[1]
1'$, and $\pred[2] 1'$ (if any)\footnote{Recall that according to
  eq.~\equationref[]{inter-thread semantics}, $\predact 1$ evaluates to
  false whenever the following holds: \\ \format
  $\exists\idxpas\ne\idxact\st\negop{\subst{\subst{\predact
        1}{\localvars[\idxact]}{\localvars}}{\localvars[\idxpas]}{\pvarspas}}$.}.
Then by \equationref{rpa} $\sometransition$ satisfies $\absrel_{q}$ (and
thus $\absrel_{6}$).

For the inductive step from $\predicatenum$ to $\predicatenum+1$ predicates
(all inter-thread), we extend $\stateperm$ by (at most) 4 elements. It
follows that stabilization occurs at $\satbound=2+4\times\numIT$ for any
$\predicatenum=\numIT\ge1$.

It remains to show that stabilization is not thwarted by the existence of
single-thread predicates. By \equationref{inter-thread semantics} it
follows that the truth of such predicates depends only on the variables in
$\programvars[\threadcount]$ that are visible by the thread it is evaluated
over, hence on variables $\programvars[2]$ and $\programvars[2]'$ for any
transition in $\absrel_{\infty}$. Now observe that these values are
maintained in the permutation $(\bpredo_{\stateperm[1]}\ldots)$ defined
above ($\stateperm[j]=j$ for $j \atm 2$), which gives the desired result.
\end{proof}

\section{Proof of \theoremref{undecidability}}
\label{appendix: Proof of theorem: undecidability}

\setcounter{DEF}{\thetheoremUndecidability}
\begin{THE}
  \theoremUndecidability[ repeat]
\end{THE}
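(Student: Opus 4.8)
The plan is to reduce the halting problem for Minsky two-counter machines \cite{Minsky:1967:CFI:1095587}, which is undecidable, to program-location reachability for an unbounded-thread Boolean \drabb\ program. Fix a two-counter machine $M$ with counters $c_1,c_2$, control locations $q_0,\ldots,q_h$ (with $q_h$ halting) and the usual instruction set ``$q$: $\MInc{c_i}$; goto $q'$'' and ``$q$: if $\MTZ{c_i}$ goto $q_a$ else $\MDec{c_i}$; goto $q_b$''. I build a \drabb\ program $\absprogram$ whose local program counter ranges over a finite set consisting of an initial location $\labelini$, the control locations $q_0,\ldots,q_h$ of $M$ together with one auxiliary handshake location per instruction, and three ``slot'' locations $d_1,d_2,P$; in addition there are two finite-domain coordination variables $\mathit{mode}$ and $\mathit{elected}$ that are treated as shared, i.e.\ simulated by synchronized active/passive local copies as in \appendixref{Simulating Shared Via Local Variables}. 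The intended representation of a configuration of $M$ is: exactly one thread --- the \emph{controller} --- carries a pc among the $q$'s and auxiliary locations and simulates the finite control of $M$; every other thread sits in $d_1$, $d_2$ or the free pool $P$, and the value of counter $c_i$ equals the number of threads whose pc is $d_i$.

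Next I specify the transition relation $\absrel$. Electing the controller: initially all threads are at $\labelini$ with $\mathit{elected}$ unset; a thread at $\labelini$ may atomically set $\mathit{elected}$ and move to $q_0$, while any further thread that observes $\mathit{elected}$ set moves to $P$. Since exactly one thread is active per step (the disjunction over $\idxact$ in \equationref{n thread DR instantiation}), this elects a unique controller. An increment ``$\MInc{c_i}$; goto $q'$'' is simulated by the controller publishing $\mathit{mode}:=\mathrm{inc}_i$ and moving to the corresponding auxiliary location; a pool thread --- necessarily a single one per step --- then moves $P\to d_i$ and resets $\mathit{mode}:=\mathrm{idle}$, whereupon the controller, seeing $\mathit{mode}=\mathrm{idle}$, advances to $q'$. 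A decrement is symmetric, with a $d_i$-thread moving $d_i\to P$ in response to $\mathit{mode}:=\mathrm{dec}_i$. The crux is the zero-test: by \equationref{DR transitions}, the controller's transition from ``if $\MTZ{c_i}$'' to the zero-branch $q_a$ is conjoined over \emph{all} passive threads and is guarded by ``the passive thread's pc differs from $d_i$''; instantiated for $\threadcount$ threads this is enabled precisely when no thread whatsoever resides in $d_i$, i.e.\ when $c_i=0$ --- and because the controller itself never occupies $d_i$, this detects emptiness faithfully. Dually, when $c_i>0$ the zero-branch is blocked and only the decrement branch can fire. This universal quantification over passive threads is exactly the expressive feature that plain single-thread local variables cannot provide.

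The correctness argument has two directions. For soundness I would establish, for every $\threadcount\atl 1$, an invariant on the reachable global states of $\parallelinst{\absprogram}{\threadcount}$: once election is complete there is exactly one controller, $\mathit{mode}=\mathrm{idle}$ whenever the controller sits at a genuine control location, each pending handshake can be completed by exactly one partner thread, and the triple consisting of the controller's location and the numbers of threads in $d_1$ and in $d_2$ is reachable in $M$ from $(q_0,0,0)$. Preservation of this invariant under each of the finitely many transition schemata is routine; it yields that if some thread ever reaches $q_h$ in any instantiation, then $M$ halts. For completeness, given a halting run of $M$ whose counters never exceed $c^{\ast}$, I take $\threadcount=c^{\ast}+2$ and exhibit the obvious schedule (complete the election first, then replay each instruction of $M$ step by step) that drives the controller to $q_h$; any surplus pool threads remain quiescent and do not interfere. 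Hence $M$ halts iff $\parallelinst{\absprogram}{\threadcount}$ reaches $q_h$ for some $\threadcount$, and undecidability of the halting problem transfers.

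I expect the main obstacle to be the bookkeeping that makes $\absrel$ leak-proof: guaranteeing that (a) exactly one controller is elected, (b) each increment/decrement handshake is consumed by exactly one thread --- never double-spent, never ignored --- and (c) no transition can spuriously pass a zero-test. Point (c) is the reason a \emph{dedicated} controller, one that never resides in $d_1$ or $d_2$, is essential: otherwise the active thread could itself be the sole occupant of $d_i$ and, since the DR guard quantifies only over \emph{passive} threads, wrongly observe ``$d_i$ empty''. Once the transition schemata are pinned down with these safeguards, the invariant proof is a mechanical case analysis, and the remaining work --- packaging $\absrel$ as a single Boolean transition formula over $\abslocalvars,\abspvarspas$ and their primed copies, and eliminating the simulated shared variables via \appendixref{Simulating Shared Via Local Variables} --- is purely notational.
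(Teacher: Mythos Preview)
Your reduction is correct and rests on the same key idea as the paper's: encode counter $c_i$ as the number of threads in a designated location $d_i$, and implement the zero-test via the universal quantification over passive threads that \drabb\ transitions provide. Where you differ is in the encoding of the Minsky machine's finite control. You introduce a dedicated \emph{controller} thread whose pc carries the control state, elect it explicitly, and have it communicate with worker threads through a shared $\mathit{mode}$ variable and per-instruction handshake locations. The paper instead stores the control state directly in a simulated shared variable (synchronized local copies, as in \appendixref{Simulating Shared Via Local Variables}); there is no distinguished controller, no election, and no handshake protocol. An increment is simply a thread moving from the pool $d_0$ to $d_i$ while all threads' control-state copies advance in lockstep; a decrement is the reverse; a zero-test is performed by any pool thread asserting $\pcp \neq d_i$. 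Your architecture is sound but heavier: the election, the $\mathit{mode}$ flag, the auxiliary handshake locations, and the associated invariant clauses (one controller, idle mode at genuine control locations, single-consumer handshakes) are all extra machinery that the paper's encoding sidesteps. On the other hand, your explicit controller cleanly handles the issue you flag in point~(c) --- the active thread must not itself occupy $d_i$ during a zero-test --- whereas the paper achieves this by requiring the testing thread to reside in the pool $d_0$. Both routes arrive at the same undecidability conclusion; the paper's is shorter, yours makes the soundness invariant more explicit.
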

\begin{proof}[sketch]
  By reducing the halting problem for the Turing-powerful deterministic
  2-counter Minsky machines \cite{Minsky:1967:CFI:1095587} with $k$ control
  states, to the reachability problem in \drabb\ programs with 3 program
  locations and a
local variable with $k$ values. We demonstrate the reduction using a
deterministic Minsky machine that enumerates pairs in $\naturals^{2}$
(\figureref{Minsky machine and DR program}; the
formalism is from~\cite{DBLP:conf/rp/Schnoebelen10}).

The machine consists of five control states
$\sharedstateo,\ldots,\sharedstateooooo$ ($\sharedstateo$ = initial), two
natural-number counters
$\localstateo$ and $\localstateoo$
(initially $0$), and increment, decrement, and zero-test operations,
denoted by $\MInc{c_i}$, $\MDec{c_i}$ and $\smash{\MTZ{c_i}}$,
\respectivelyend. Each operation changes the control state and counter
value as indicated in the figure (the decrement and zero-test operation
block if $\localstate$ is zero and non-zero, \respectively).

Control states are encoded in local variables of $\absprogram$ ranging over
$\{0,1,2,3,4\}$; as can be seen from the figure, these local variables are
synchronized across the threads, so they simulate a single shared variable
that tracks the control state (see \appendixref{Simulating Shared Via Local
  Variables}). Counters are encoded in program locations $\{d_0,d_1,d_2\}$
of the \drabb\ program $\absprogram$ such that the counter value $c_i$
equals the number of threads in location $d_i$, for $i \in \{1,2\}$.
Location $d_0$ is the single initial program location, thus with an
unbounded number of threads; it merely serves as thread-pool. Control state
changes turn into synchronized local variable updates, together with the
following program counter modifications: for $\MInc{c_i}$ and $\MDec{c_i}$
a thread moves from $d_0$ to $d_i$ and vice versa, respectively, and for
$\smash{\MTZ{c_i}}$ a thread in $d_0$ tests for the absence of a passive
thread in location $d_i$.

Let finally $\ell_e$ be a special program location of $\absprogram$ that is
reached if and only if a local variable has the value that encodes the
Minsky machine's halting state. The machine halts if and only if program
location $\ell_e$ is reached in $\absprogram$.
\global\long\def\somelocalll{*}
\end{proof}
\begin{figure*}
\centering
\begin{adjustbox}{max size={1.0\textwidth}{1.0\textheight}}
\begin{tikzpicture}[minsky]

\node[state] (l2) {$\sharedstateooo$};
\node[accepting,state] (l0) [above left of=l2] {$\sharedstateo$};
\node[state] (l1) [above right of=l2] {$\sharedstateoo$};
\node[state] (l3) [left of=l0] {$\sharedstateoooo$};
\node[state] (l4) [right of=l1] {$\sharedstateooooo$};

\path[->,bend left] (l2) edge node[below left, align=right] {
	$\localvar=\localvar_\pasindex=\sharedstateooo\andop\localvar'=\localvar_\pasindex'=\sharedstateo\andop$\\
	$\pc=d_0\andop\pc'=d_1\andop$\\
	$\pcp'=\pcp$
	} node[above right] {$\MInc{\localstateo}$} (l0);

\path[->,bend left] (l0) edge node[below] {$\ldots$} node[above] {$\MDec{\localstateo}$} (l3);

\path[->,bend left] (l3) edge node[above] {$\ldots$} node[below] {$\MInc{\localstateoo}$} (l0);

\path[->] (l0) edge node[above, align=center] {
	$\localvar=\localvar_\pasindex=\sharedstateo\andop\localvar'=\localvar_\pasindex'=\sharedstateoo\andop$\\
	$\pc=\pc'\andop\pcp=\pcp'\andop$\\
	$\pcp\ne d_1$
} node[below] {$\MTZ{\localstateo}$} (l1);

\path[->,bend right] (l1) edge node[below, align=center] {
	$\localvar=\localvar_\pasindex=\sharedstateoo\andop\localvar'=\localvar_\pasindex'=\sharedstateooooo\andop$\\
	$\pc=d_2\andop\pc'=d_0\andop$\\
	$\pcp'=\pcp$
} node[above] {$\MDec{\localstateoo}$} (l4);

\path[->,bend right] (l4) edge node[above] {$\ldots$} node[below] {$\MInc{\localstateo}$} (l1);

\path[->,bend left] (l1) edge node[below right, align=center] {$\ldots$} node[above left] {$\MTZ{\localstateoo}$}(l2);

\end{tikzpicture}
\end{adjustbox}
\caption{Minsky machine and (part of) its \drabb\ program encoding,
shown as labels of control transitions.
The initial state $\absinitialrel$ of the \drabb\ encoding is
$\localvar=\localvar_\pasindex=\sharedstateo\andop\pc=\pcp=d_0$
}
\label{figure: Minsky machine and DR program}
\end{figure*}

Note how, in the reduction, the zero test affects the passive threads: in
the transition from $\sharedstateo$ to $\sharedstateoo$ in
\figureref{Minsky machine and DR program}, the test on variable
$\localstateo$ is simulated by asserting the absence of a passive thread in
location~$d_1$.

\section{Proof of \lemmaref{monotone DR programs}}
\label{appendix: Proof of lemma: monotone DR programs}

\setcounter{DEF}{\thelemmaMonotoneDRPrograms}
\begin{LEM}
  \lemmaMonotoneDRPrograms[ repeat]
\end{LEM}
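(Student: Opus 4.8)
The statement is an equivalence, so the plan is to prove both implications; in each, the crux is a dictionary between the counter (multiset) semantics --- in which $\coveredby$ and monotonicity are formulated --- and the thread-indexed transition formula $\parallelinst{\absrel}{\threadcount}$ appearing on the right. Two observations set this up. First, every $\parallelinst{\absrel}{\threadcount}$ is \emph{symmetric}: applying one permutation of $\{1,\ldots,\threadcount\}$ simultaneously to the thread indices of source and target maps transitions to transitions; this is immediate from \equationref{n thread DR instantiation} and \equationref{DR transitions}, where we disjoin over the active thread $\idxact$ and, for each, conjoin over \emph{all} passive threads $\idxpas\ne\idxact$. Second, a transition never changes the thread count, and $\parallelinst{\absrel}{\threadcount}$ is defined only for $\threadcount\atl 2$, so any state carrying an outgoing transition has at least two threads. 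Consequently a $\threadcount$-thread state is identified with the counter tuple it induces; $\coveredby$ depends only on that tuple; $\parallelinst{\absrel}{\threadcount}$ descends to a relation on tuples; and, by symmetry, a transition between tuples lifts to a transition between any prescribed indexed representative of the source and a suitable indexed representative of the target. I also use that $\somestate\coveredby\otherstate$ holds precisely when $\otherstate$ is $\somestate$ with finitely many threads added in arbitrary local states, and that $\somestate\coveredby\otherstate$ with equal thread counts forces $\somestate=\otherstate$.

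For ``\equationref{monotone absrel-n} for all $k\atl 2$ $\impliesop$ monotonicity'', take $\somestate\transrel\somestate'$ with $\somestate\coveredby\otherstate$, say $\somestate$ and $\somestate'$ have $k\atl 2$ threads and $\otherstate$ has $k+d$. Induct on $d$. For $d=0$, $\otherstate=\somestate$ and $\otherstate':=\somestate'$ works. For $d\atl 1$, peel off one of the added threads: pick $\yetotherstate$ with $\somestate\coveredby\yetotherstate\coveredby\otherstate$ and $\yetotherstate$ having $k+1$ threads, so $\yetotherstate=\slice{\somestate}{l_{k+1}}$ where $l_{k+1}$ is the local state of that thread. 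Apply \equationref{monotone absrel-n} at level $k$ to $(\somestate,\somestate')\in\parallelinst{\absrel}{k}$: it returns $l_{k+1}'$ and a permutation $\pi$ with $(\slice{\somestate}{l_{k+1}},\pi(\slice{\somestate'}{l_{k+1}'}))\in\parallelinst{\absrel}{k+1}$, i.e.\ $\yetotherstate\transrel\yetotherstate'$ for $\yetotherstate':=\pi(\slice{\somestate'}{l_{k+1}'})$, whose counter tuple is $\somestate'$ with one added thread, so $\somestate'\coveredby\yetotherstate'$. Since $\yetotherstate\transrel\yetotherstate'$, $\yetotherstate\coveredby\otherstate$, and $\otherstate$ has $d-1$ more threads than $\yetotherstate$, the induction hypothesis yields $\otherstate'$ with $\otherstate\transrel\otherstate'$ and $\yetotherstate'\coveredby\otherstate'$; transitivity of $\coveredby$ gives $\somestate'\coveredby\otherstate'$, establishing monotonicity.

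For the converse, assume monotonicity, fix $k\atl 2$, $(\somestate,\somestate')\in\parallelinst{\absrel}{k}$, and an arbitrary local state $l_{k+1}$. Put $\otherstate:=\slice{\somestate}{l_{k+1}}$: its counter tuple is that of $\somestate$ with the $l_{k+1}$-entry incremented, so $\somestate\coveredby\otherstate$. Monotonicity gives $\otherstate'$ with $\otherstate\transrel\otherstate'$, hence $(\otherstate,\otherstate')\in\parallelinst{\absrel}{k+1}$ after choosing representatives (using symmetry, so that the source representative is exactly $\slice{\somestate}{l_{k+1}}$), and $\somestate'\coveredby\otherstate'$. Since transitions preserve the thread count, $\otherstate'$ has $k+1$ threads; as $\somestate'$ has $k$, the relation $\somestate'\coveredby\otherstate'$ forces the counter tuple of $\otherstate'$ to be that of $\somestate'$ with exactly one entry raised by one --- say the entry for a local state $l_{k+1}'$ --- so the target representative equals $\pi(\slice{\somestate'}{l_{k+1}'})$ for some permutation $\pi$ of $\{1,\ldots,k+1\}$. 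Thus $(\slice{\somestate}{l_{k+1}},\pi(\slice{\somestate'}{l_{k+1}'}))\in\parallelinst{\absrel}{k+1}$, which is precisely \equationref{monotone absrel-n} for this $l_{k+1}$.

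The symmetry of $\parallelinst{\absrel}{\threadcount}$ and the counter-tuple arithmetic are routine. I expect the only delicate point to be the inductive peeling in the first direction: one must choose the intermediate state $\yetotherstate$ so that $\somestate\coveredby\yetotherstate$ adds exactly one thread while keeping $\yetotherstate\coveredby\otherstate$, and then carry the permutations along when chaining \equationref{monotone absrel-n} with the induction hypothesis --- bookkeeping rather than a real difficulty.
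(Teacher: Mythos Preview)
Your proposal is correct and follows essentially the same approach as the paper's proof: both directions hinge on symmetry of $\parallelinst{\absrel}{\threadcount}$, the ``$\Rightarrow$'' direction applies monotonicity with $\otherstate=\slice{\somestate}{l_{k+1}}$ and reads off $l_{k+1}'$ and $\pi$ from the resulting $\otherstate'$, and the ``$\Leftarrow$'' direction handles one added thread via \equationref{monotone absrel-n} and larger gaps by induction. Your explicit passage to counter tuples (so that $d=0$ gives literal equality rather than mere symmetry equivalence) is a slight streamlining of the paper's presentation, but not a different argument.
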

\begin{proof}

  \ 

  ``$\Rightarrow$'': suppose $\union_{\threadcount=1}^\infty
  \parallelinst{\absrel}{\threadcount}$ is monotone. Let $\somestate =
  (l_1,\ldots,l_k)$, $\somestate' = (l_1',\ldots,l_k')$ with
  $(\somestate,\somestate') \in \absrel^k$, and $\otherstate =
  \slice{\somestate}{l_{k+1}}$. We have $\somestate \coveredby
  \otherstate$, hence by the monotonicity of
  $\union_{\threadcount=1}^\infty \absrel^\threadcount$ there exists
  $\otherstate'$ such that (a) $(\otherstate,\otherstate') \in
  \union_{\threadcount=1}^\infty \absrel^\threadcount$ and (b) $\somestate'
  \coveredby \otherstate'$. From (a) we conclude that in fact
  $(\otherstate,\otherstate') \in \absrel^{k+1}$. From (b) we conclude that
  $\otherstate'$ contains $k$ threads in local states as in $\somestate'$.
  Let $l_{k+1}'$ be the local state of the additional thread (not
  necessarily the {\format\small $k+1$}st) in $\otherstate'$, and $\sigma$
  be a permutation such that $(\range[]{l_1'}{l_{k+1}'}) =
  \sigma(\otherstate')$. That is, $\sigma$ reorders the local states of
  $\otherstate'$ such that the $k$ local states in $\somestate'$ come
  first, $l_{k+1}'$ comes last. With $\pi := \sigma^{-1}$, we then have
  \begin{eqnarray*}
    \left(\slice{\somestate}{l_{k+1}},\pi(\slice{\somestate'}{l_{k+1}'})\right) & = & \left(\slice{\somestate}{l_{k+1}},\sigma^{-1}(\slice{\somestate'}{l_{k+1}'})\right) \\
                                                                                & = & (\otherstate,\otherstate') \quad \in \quad \parallelinst{\absrel}{k+1} \ .
  \end{eqnarray*}

  ``$\Leftarrow$'': suppose $(\somestate,\somestate') \in
  \union_{\threadcount=1}^\infty \absrel^\threadcount$, say
  $(\somestate,\somestate') \in \absrel^k$, so we write
  $\somestate=(l_1,\ldots,l_k)$ and $\somestate'=(l_1',\ldots,l_k')$. Let
  further $\somestate \coveredby \otherstate$. If $\otherstate$ has $k$
  threads, like $\somestate$, then $\somestate \coveredby \otherstate$
  implies $\somestate \covers \otherstate$: the states are symmetry
  equivalent, say $\otherstate = \pi(\somestate)$, for a permutation $\pi$
  on $\range 1 k$. In this case $\otherstate' := \pi(\somestate')$
  satisfies the monotonicity conditions.

  If $\otherstate$ has $k+1$ threads, then observe that $\otherstate$
  contains $k$ threads in local states as in $\somestate$; let $l_{k+1}$ be
  the local state of the additional thread (not necessarily the
  {\format\small $k+1$}st). Let further $l_{k+1}'$ and $\pi$ be as provided
  in \equationref[]{monotone absrel-n repeat}. With $\yetotherstate =
  \slice{\somestate}{l_{k+1}}$ and $\yetotherstate' =
  \pi(\slice{\somestate'}{l_{k+1}'})$, we get
  $(\yetotherstate,\yetotherstate') \in \absrel^{k+1}$ by
  \equationref[]{monotone absrel-n repeat}. Since $\yetotherstate$ and
  $\otherstate$ contain the same local states, let $\sigma$ be a
  permutation such that $\sigma(\yetotherstate) = \otherstate$. Define
  $\otherstate' = \sigma(\yetotherstate')$. Then $\otherstate' \sim
  \yetotherstate' = \pi(\slice{\somestate'}{l_{k+1}'}) \covers
  \somestate'$, where $\sim$ is symmetry equivalence. Further,
  $(\yetotherstate,\yetotherstate') \in \absrel^{k+1}$ implies
  $(\sigma(\yetotherstate),\sigma(\yetotherstate')) \in \absrel^{k+1}$ by
  symmetry, so $(\otherstate,\otherstate') \in \absrel^{k+1} \subseteq
  \union_{\threadcount=1}^\infty \absrel^\threadcount$, demonstrating that
  the monotonicity conditions are satisfied.

  The case that $\otherstate$ has more than $k+1$ threads follows by
  induction.
\end{proof}

\section{Proof of \theoremref{monotone absrel}}
\label{appendix: Proof of theorem: monotone absrel}

\setcounter{DEF}{\thetheoremMonotoneAbsrel}
\begin{THE}
  \theoremMonotoneAbsrel[ repeat]
\end{THE}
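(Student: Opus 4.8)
The plan is to appeal to the monotonicity characterization of \lemmaref{monotone DR programs}. That lemma states that $\union_{\threadcount=1}^\infty \parallelinst{\absrel}{\threadcount}$ is monotone exactly when \equationref{monotone absrel-n} holds for every $k \atl 2$, so it suffices to show that validity of \equationref{monotone absrel} implies \equationref{monotone absrel-n}. Accordingly, I would fix $k \atl 2$, a transition $(\somestate,\somestate') \in \parallelinst{\absrel}{k}$ with $\somestate = (l_1,\ldots,l_k)$ and $\somestate' = (l_1',\ldots,l_k')$, and an arbitrary local valuation $l_{k+1}$ of the extra thread; the task is to produce $l_{k+1}'$ and a permutation $\pi$ with $\left(\slice{\somestate}{l_{k+1}},\pi(\slice{\somestate'}{l_{k+1}'})\right) \in \parallelinst{\absrel}{k+1}$. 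I expect to take $\pi$ to be the identity in all cases.

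The first step is to unfold $\parallelinst{\absrel}{k}$ via \equationref{n thread DR instantiation} and \equationref{DR transitions}: there is an active index $\idxact$ such that for every passive index $\idxpas \in \{1,\ldots,k\}\setminus\{\idxact\}$, $\absrel$ holds once its active-thread variables are bound to $(l_\idxact,l_\idxact')$ and its passive-thread variables to $(l_\idxpas,l_\idxpas')$. Since $k \atl 2$, at least one such $\idxpas$ exists, so the active-variable valuation $(l_\idxact,l_\idxact')$ satisfies the antecedent $\exists \abspvarspas \abspvarspas' \suchthat \absrel$ of \equationref{monotone absrel}, with $l_\idxpas,l_\idxpas'$ serving as witnesses. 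Reading \equationref{monotone absrel} as universally closed over the active-thread variables and instantiating it at $(l_\idxact,l_\idxact')$, we obtain $\forall \abspvarspas \exists \abspvarspas' \suchthat \absrel$ at that point; instantiating its universal $\abspvarspas$ with the given $l_{k+1}$ yields the desired $l_{k+1}'$ such that $\absrel$ holds for active variables $(l_\idxact,l_\idxact')$ and passive variables $(l_{k+1},l_{k+1}')$.

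It then remains to reassemble a $(k+1)$-thread transition with $\idxact$ again taking the active role. For each passive index $\idxpas \atm k$ with $\idxpas \ne \idxact$, the conjunct of \equationref{DR transitions} required of the $(k+1)$-thread state is precisely the one already witnessed by $(\somestate,\somestate') \in \parallelinst{\absrel}{k}$; for the new passive index $k+1$ it is the conjunct obtained in the previous step. Hence $\left(\slice{\somestate}{l_{k+1}},\slice{\somestate'}{l_{k+1}'}\right) \in \parallelinst{\absrel}{k+1}$, which is \equationref{monotone absrel-n} with $\pi$ the identity, and \lemmaref{monotone DR programs} delivers monotonicity of $\union_{\threadcount=1}^\infty \parallelinst{\absrel}{\threadcount}$.

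I do not anticipate a deep obstacle; the care needed is in the bookkeeping. One must keep the active thread fixed when passing from $k$ to $k+1$ threads, since it is the active-variable valuation of \emph{that} thread at which \equationref{monotone absrel} must be applied, and one must not conflate the ``for some $\threadcount$'' existential buried in the definition of $\absrel$ as a DR template with the universal quantifier over passive valuations in \equationref{monotone absrel}. Everything else is handled by \lemmaref{monotone DR programs}: it reduces monotonicity to a single added thread, so no nontrivial permutation is ever needed here, and the case of several added threads is dispatched inside that lemma rather than in this argument.
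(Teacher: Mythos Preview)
Your proposal is correct and follows essentially the same route as the paper: both invoke \lemmaref{monotone DR programs}, extract the active index $\idxact$ from \equationref{DR transitions}, use any passive $\idxpas\neq\idxact$ (which exists since $k\atl 2$) to discharge the antecedent of \equationref{monotone absrel}, instantiate the consequent at $l_{k+1}$ to obtain $l_{k+1}'$, and reassemble the $(k{+}1)$-thread transition with the same active index and the identity permutation.
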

\begin{proof}
  We show monotonicity using \lemmaref{monotone DR programs}. Suppose
  $(\somestate,\somestate') \in \absrel^k$, and let $l_{k+1}$ be given. By
  \equationref[]{n thread DR instantiation}, there exists $\idxact \in
  \{1,\ldots,k\}$ such that $(\somestate,\somestate') \in
  \parallelinst[\idxact]{\absrel}{k}$. By \equationref[]{DR transitions},
  we have
  \begin{equation}
    \forall \idxpas \in \{1,\ldots,k\} \setminus \{\idxact\} \ \substp{\substp{\absrel}{\abslocalvars[\idxact]}{\abspvarsact}}{\abslocalvars[\idxpas]}{\abspvarspas} \mathperiod
    \label{equation: instantiated lemma}
  \end{equation}
  Since $k \atl 2$, the quantification in \equationref[]{instantiated
    lemma} is not empty and hence satisfies the left-hand side of
  \equationref[]{monotone absrel repeat}. By the right-hand side, there
  exists a valuation $l_{k+1}'$ of all $\abspvarspas'$ variables such that,
  replacing the $\abspvarspas$ variables by the valuation $l_{k+1}$,
  $\absrel$ still holds,
  i.e.\ $\substp{\substp{\absrel}{\abslocalvars[\idxact]}{\abspvarsact}}{\abslocalvars[k+1]}{\abspvarspas}$.
  Merging this with \equationref[]{instantiated lemma}, we obtain
  \[
    \forall \idxpas \in \{1,\ldots,k+1\} \setminus \{\idxact\} \ \substp{\substp{\absrel}{\abslocalvars[\idxact]}{\abspvarsact}}{\abslocalvars[\idxpas]}{\abspvarspas} \mathcomma
  \]
  and thus $(\slice{\somestate}{l_{k+1}},\slice{\somestate',l_{k+1}'}) \in
  \parallelinst[\idxact]{\absrel}{k+1} \subset \absrel^{k+1}$, establishing
  the right-hand side of \equationref[]{monotone absrel-n repeat} with the
  identity permutation $\pi$.
\end{proof}

\section{Proof of \lemmaref{adding non-monotone fragment}}
\label{appendix: Proof of lemma: adding non-monotone fragment}

\setcounter{DEF}{\thelemmaAddingNonMonotoneFragment}
\begin{LEM}
  \lemmaAddingNonMonotoneFragment[ repeat]
\end{LEM}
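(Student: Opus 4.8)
The plan is to avoid reasoning about the counter semantics directly and instead invoke the locally checkable sufficient condition of \theoremref{monotone absrel}, applied with the transition relation $\absrel \orop \relreset(\absrel)$ in place of $\absrel$. Thus it suffices to show that the formula over $\abslocalvars \times \abslocalvars'$
\[
  \exists \abspvarspas \abspvarspas' \suchthat \bigl(\absrel \orop \relreset(\absrel)\bigr) \wbox{$\impliesop$} \forall \abspvarspas \exists \abspvarspas' \suchthat \bigl(\absrel \orop \relreset(\absrel)\bigr)
\]
is valid. The structural observation I would use throughout is that, by \definitionref{non-monotone fragment}, $\relreset(\absrel)$ is a formula over $\abslocalvars \times \abslocalvars[P] \times \abslocalvars'$ only: it contains no next-state passive variable, and $\exists \abslocalvars[P] \abslocalvars[P]' \suchthat \absrel$ is literally its second conjunct.

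First I would fix a valuation $(\localvar,\localvarp)$ of $\abslocalvars,\abslocalvars'$ witnessing the left-hand side and prove the auxiliary fact that then $(\localvar,\localvarp) \models \exists \abslocalvars[P] \abslocalvars[P]' \suchthat \absrel$. Indeed the left-hand side supplies passive values $(\localvarpas,\localvarpasp)$ with $(\localvar,\localvarpas,\localvarp,\localvarpasp) \models \absrel \orop \relreset(\absrel)$; if the first disjunct holds this is immediate, and if the second disjunct holds it follows because $\exists \abslocalvars[P] \abslocalvars[P]' \suchthat \absrel$ is a conjunct of $\relreset(\absrel)$ (and does not mention $\abslocalvars[P]$).

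Next, to establish the right-hand side, I would pick an arbitrary valuation $\localvarpas$ of the current-state passive variables and case-split on whether $(\localvar,\localvarpas,\localvarp) \models \exists \abslocalvars[P]' \suchthat \absrel$. If it does, a witness $\localvarpasp$ for $\abslocalvars[P]'$ gives $(\localvar,\localvarpas,\localvarp,\localvarpasp) \models \absrel$, hence the first disjunct, so $\exists \abspvarspas'$ is satisfied. If it does not, then $\neg\,\exists \abslocalvars[P]' \suchthat \absrel$ holds at $(\localvar,\localvarpas,\localvarp)$, which together with the auxiliary fact gives $(\localvar,\localvarpas,\localvarp) \models \relreset(\absrel)$; since $\relreset(\absrel)$ does not depend on $\abslocalvars[P]'$, any choice of $\localvarpasp$ makes $(\localvar,\localvarpas,\localvarp,\localvarpasp) \models \relreset(\absrel)$, i.e.\ the second disjunct, so again $\exists \abspvarspas'$ is satisfied. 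Hence $\forall \abspvarspas \exists \abspvarspas' \suchthat (\absrel \orop \relreset(\absrel))$ holds, and \theoremref{monotone absrel} yields that the program with transition relation $\absrel \orop \relreset(\absrel)$ is monotone.

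I do not anticipate a genuine obstacle: the argument is a direct unfolding of \definitionref{non-monotone fragment} and \theoremref{monotone absrel}. The only point that must not be glossed over is that $\relreset(\absrel)$ has no free primed passive variable, which is precisely what makes the existential quantifier over $\abspvarspas'$ on the right trivially satisfiable in the second case of the split.
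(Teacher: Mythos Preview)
Your proposal is correct and follows essentially the same approach as the paper: both reduce monotonicity to the sufficient condition of \theoremref{monotone absrel} applied to $\absrel \orop \relreset(\absrel)$, and both hinge on the fact that $\relreset(\absrel)$ contains no primed passive variables. The only cosmetic difference is that the paper performs an algebraic simplification of the right-hand side (pushing $\exists \abslocalvars[P]'$ inward, using $A \orop (\neg A \andop B) \equiv A \orop B$, and collapsing to $\exists \abslocalvars[P]\abslocalvars[P]' \st \absrel$), whereas you carry out the equivalent semantic case split; the underlying reasoning is identical.
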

\begin{proof}
  We show that $\absrel \orop \relreset(\absrel)$ satisfies
  \equationref[]{monotone absrel repeat}, i.e.
  \begin{equation}
    \exists \abspvarspas \abspvarspas' \suchthat (\absrel \orop \relreset(\absrel)) \wbox{$\impliesop$} \forall \abslocalvars[P] \exists \abslocalvars[P]' \st (\absrel \orop \relreset(\absrel))
    \label{equation: proof obligation}
  \end{equation}
  Monotonicity then follows using \theoremref{monotone absrel}.

  We first simplify the right-hand side of \equationref[]{proof
    obligation}:
  \[
  \begin{array}{rl}
      & \forall \abslocalvars[P] \exists \abslocalvars[P]' \st (\absrel \orop (\negop \exists \abslocalvars[P]' \st \absrel \; \andop \; \exists \abslocalvars[P] \abslocalvars[P]' \st \absrel)) \\
    = & \forall \abslocalvars[P] \st (\exists \abslocalvars[P]' \st \absrel \orop (\negop \exists \abslocalvars[P]' \st \absrel \; \andop \; \exists \abslocalvars[P] \abslocalvars[P]' \st \absrel)) \\
    = & \forall \abslocalvars[P] \st (\exists \abslocalvars[P]' \st \absrel \orop \exists \abslocalvars[P] \abslocalvars[P]' \st \absrel) \\
    = & \forall \abslocalvars[P] \st (\exists \abslocalvars[P] \abslocalvars[P]' \st \absrel) \\
    = & \exists \abslocalvars[P] \abslocalvars[P]' \st \absrel \mathperiod
  \end{array}
  \]
  \Equationref{proof obligation} now becomes
  \[
    \exists \abspvarspas \abspvarspas' \suchthat (\absrel \orop (\negop \exists \abslocalvars[P]' \st \absrel \; \andop \; \exists \abslocalvars[P] \abslocalvars[P]' \st \absrel)) \wbox{$\impliesop$} \exists \abslocalvars[P] \abslocalvars[P]' \st \absrel
  \]
  which trivially reduces to $\true$, in both cases of the disjunction.
\end{proof}

\section{Proof of \theoremref{safety equivalence}}
\label{appendix: Proof of theorem: safety equivalence}

\setcounter{DEF}{\thetheoremSafetyEquivalence}
\begin{THE}
  \theoremSafetyEquivalence[ repeat]
\end{THE}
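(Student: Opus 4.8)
The plan is to prove the two claims of \Theoremref{safety equivalence} separately. \textbf{Monotonicity of $\absprogrammon$.} By \theoremref{monotone absrel} it suffices to check that the transition relation $\absrelmon \reldef \absrel \orop (\relreset(\absrel) \andop \pcp' = \labelend)$ from \definitionref{monotone closure} satisfies the validity condition \equationref{monotone absrel}, namely $\exists \abspvarspas\abspvarspas' \suchthat \absrelmon \impliesop \forall \abspvarspas \exists \abspvarspas' \suchthat \absrelmon$. I would rerun the computation from the proof of \lemmaref{adding non-monotone fragment}: the extra conjunct $\pcp' = \labelend$ is immaterial, because $\relreset(\absrel)$ constrains no primed passive-thread variable and $\pcp' = \labelend$ is on its own satisfiable; hence the right-hand side again simplifies to $\exists \abspvarspas\abspvarspas' \suchthat \absrel$, and the left-hand side --- through either disjunct of $\absrelmon$, using the second conjunct of $\relreset(\absrel)$ in \equationref{non-monotone fragment} --- again implies that formula. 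So \equationref{monotone absrel} holds and $\absprogrammon$ is monotone.

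\textbf{Safety, easy direction.} Since $\absrel \impliesop \absrelmon$ and the instantiation operator of \equationref{n thread DR instantiation} is monotone in its template relation (it is built from $\absrel$ only by substitution, conjunction over passive threads, and disjunction over active threads), every execution of $\parallelinst{\absprogram}{\threadcount}$ is an execution of $\parallelinst{(\absprogrammon)}{\threadcount}$. Therefore, if $\parallelinst{\absprogram}{\threadcount}$ reaches an error location then so does $\parallelinst{(\absprogrammon)}{\threadcount}$, which gives one half of the claimed equivalence.

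\textbf{Safety, hard direction.} The remaining task is to turn an error-reaching execution of $\parallelinst{(\absprogrammon)}{\threadcount}$ into one of $\parallelinst{\absprogram}{\threadcount}$. I would first record two facts about the fresh sink label $\labelend$: (i) it is a genuine sink --- $\labelend$ occurs in no subformula of $\absrel$, so no instantiated transition fires from a state whose active or passive pc is $\labelend$, and a passive thread already at $\labelend$ always satisfies $\absrelmon$ via the NMF disjunct with $\pcp' = \labelend$, hence never leaves $\labelend$ and never blocks a transition; (ii) since the error location differs from $\labelend$, no thread that has been redirected to $\labelend$ can be the one exhibiting the error. I would then induct on the number of \emph{closure-transitions} in the execution (instantiated transitions taking the NMF disjunct for at least one passive thread), the base case of zero such transitions being already a $\parallelinst{\absprogram}{\threadcount}$ execution. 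For the inductive step, take the earliest closure-transition, with active thread $\idxact$; by the second conjunct of $\relreset(\absrel)$ the move $\idxact$ performs is enabled for \emph{some} passive configuration, and I would combine this with the hypothesis that $\program$ is asynchronous --- hence monotone, a property inherited through the existential and parametric abstraction steps (\corollaryref{saturation}) --- to argue that the passive threads blocking $\idxact$ in the abstraction cannot veto that move in the concrete program, so the blockade is spurious and both the move and the remainder of the execution can be replayed inside $\parallelinst{\absprogram}{\threadcount}$ (leaving the formerly blocking passive threads in place), reaching the same error with strictly fewer closure-transitions.

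The step I expect to be the main obstacle is making this last replacement rigorous: unwinding the construction of $\absrel$ (equations \equationref{rpa} and \equationref{DR transitions}) to identify precisely which concrete transitions underlie a closure-transition, and showing that asynchrony of $\program$ always supplies a matching non-closure computation of $\parallelinst{\absprogram}{\threadcount}$ reaching the same error. It is exactly here that the hypothesis ``$\program$ is asynchronous'' is indispensable: for a general, genuinely non-monotone \drabb\ program one obtains only the weaker, error-preserving inclusion $\absrel \orop \relreset(\absrel)$ of \lemmaref{adding non-monotone fragment} (as in the monotonic abstractions of Abdulla et al.\ \cite{DBLP:journals/entcs/AbdullaDR08}), whereas the closure built here is \emph{safety-equivalent} precisely because it lifts only those blockades that the monotonicity of the concrete program exposes as artefacts of the abstraction.
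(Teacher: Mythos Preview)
Your treatment of monotonicity and of the easy safety direction matches the paper's proof essentially verbatim.

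For the hard safety direction there is a genuine gap. You propose to replay the first closure-transition inside $\parallelinst{\absprogram}{\threadcount}$ ``leaving the formerly blocking passive threads in place''. But that is precisely what cannot be done: by definition of the non-monotone fragment~\equationref{non-monotone fragment}, a passive thread that triggers the NMF disjunct satisfies $\neg\exists\abslocalvars[P]'\st\absrel$, i.e.\ for its current local state there is \emph{no} choice of $\abslocalvars[P]'$ making $\absrel$ hold. Hence with that thread present the $\absrel$-transition is genuinely disabled in the abstract, regardless of what the concrete program allows; monotonicity of $\program$ does not resurrect an abstract transition that the predicate abstraction has cut. Your induction on the number of closure-transitions therefore does not go through as stated, and the sentence you flag as ``the main obstacle'' is in fact the whole argument.

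The paper takes a different route that avoids this trap. It does \emph{not} attempt to rebuild a single $\absprogram$-execution reaching the error. Instead it decomposes the $\absprogrammon$-trace into maximal $\absrel$-segments $t_i,\ldots,r_i$ and proves, by induction on $i$, that no segment contains an error state. The crucial lemma is that \emph{state safety in $\absprogram$ is $\strictlycoveredby$-downward closed}: if $r$ is safe in $\absprogram$ and $r'\strictlycoveredby r$ then $r'$ is safe. This is where the monotonicity of the asynchronous input $\program$ enters, via the concretization: the concrete image of $r'$ lies in the $\strictlycoveredby$-downward closure of the concrete image of $r$, which is safe by conservativeness (\corollaryref{saturation}) and $\strictlycoveredby$-closed by monotonicity of $\program$. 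For the inductive step one then \emph{removes} the threads redirected to $\labelend$ at the junction, obtaining $r_i'\strictlycoveredby r_i$ with $(r_i',t_{i+1})\in\absrel$; an error in segment $i{+}1$ would make $r_i'$ unsafe, contradicting downward-closedness since $r_i$ is safe. The missing idea in your proposal is exactly this: pass to a $\strictlycoveredby$-smaller state rather than keeping the blocking threads, and use downward-closedness of safety (a semantic property of $\absprogram$ inherited from $\program$) instead of trying to force an abstract transition that is, by construction, blocked.
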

\begin{proof}
  \ 
  \begin{enumerate}

  \item Monotonicity of $\absprogrammon$: employing \theoremref{monotone
    absrel}, we prove that the following formula is valid:
    \[
      \exists \abspvarspas \abspvarspas' \suchthat \absrelmon \wbox{$\impliesop$} \forall \abspvarspas \exists \abspvarspas' \suchthat \absrelmon \mathperiod
    \]
    Let $(\localvar,\localvar') \in \abslocalvars \times \abslocalvars'$ be
    arbitrary, and suppose there exist $(\localvarpas,\localvarpas') \in
    \abslocalvarspas \times \abslocalvarspas'$ such that
    $(\localvar,\localvarpas,\localvar',\localvarpas') \in \absrelmon$. Let
    further $\localvarotherpas \in \abslocalvarspas$. We construct
    $\localvarotherpas' \in \abslocalvarspas'$ such that
    $(\localvar,\localvarotherpas,\localvar',\localvarotherpas') \in
    \absrelmon$.

    Since $\absrelmon = \absrel \orop (\relreset(\absrel) \andop (\pcp' =
    \labelend))$, we have either
    $(\localvar,\localvarpas,\localvar',\localvarpas') \in \absrel$ or
    $(\localvar,\localvarpas,\localvar',\localvarpas') \in
    \relreset(\absrel) \andop (\pcp' = \labelend)$. In both cases,
    $(\localvar,\localvarpas,\localvar',\localvarpas') \in \absrel \orop
    \relreset(\absrel)$. The latter relation is monotone by
    \lemmaref{adding non-monotone fragment}. Hence there exists some
    $\localvaryetotherpas'$ such that
    $(\localvar,\localvarotherpas,\localvar',\localvaryetotherpas') \in
    \absrel \orop \relreset(\absrel)$.

    This element $\localvaryetotherpas'$ is almost the element
    $\localvarotherpas' \in \abslocalvarspas'$ we are looking for: if we
    have $(\localvar,\localvarotherpas,\localvar',\localvaryetotherpas')
    \in \absrel \subset \absrelmon$, then the choice $\localvarotherpas' =
    \localvaryetotherpas'$ ensures
    $(\localvar,\localvarotherpas,\localvar',\localvarotherpas') \in
    \absrelmon$. Otherwise
    $(\localvar,\localvarotherpas,\localvar',\localvaryetotherpas') \in
    \relreset(\absrel)$. Formula $\relreset(\absrel)$ does not contain
    $\abslocalvarspas'$ variables, however; the latter can thus be replaced
    freely without affecting membership in $\relreset(\absrel)$. Let
    therefore $\localvarotherpas' \reldef (\exists \pcp'
    \localvaryetotherpas') \andop (\pcp' = \labelend)$. The latter
    expression denotes the replacement of the value of $\pcp'$ in
    $\localvaryetotherpas'$ by $\labelend$. Now we have
    $(\localvar,\localvarotherpas,\localvar',\localvarotherpas') \in
    \relreset(\absrel)$ and in fact
    $(\localvar,\localvarotherpas,\localvar',\localvarotherpas') \in
    \relreset(\absrel) \andop (\pcp' = \labelend) \subset \absrelmon$.

  \item Safety equivalence: from \definitionref{monotone closure} (applied
    to $\absprogram$) we
conclude $\absrel \impliesop \absrelmon$ is valid, and thus
every execution of $\absprogram$ is an execution of $\absprogrammon$.
Thus if $\absprogrammon$ is safe, so is~$\absprogram$.

For the converse argument observe that every infinite trace $\pi$ of
$\absprogrammon$ gives rise to a sequence of $j$ traces of $\absprogram$ as
follows:
\[
\pi = t_1,\ldots,r_1,t_2,\ldots,r_2,\ldots,t_j,\ldots
\]
such that
for all~$i$, subtrace $t_i,\ldots,r_i$ is pairwise related by~$\absrel$,
$(r_i,t_{i+1}) \notin \absrel$, yet $(r_i,t_{i+1}) \in \absrelmon$.
(If~$\pi$ is finite it is of the form~$t_1,\ldots,r_1,\ldots,t_j,\ldots,r_j$; the following remains valid.)

Call a state \inlinedef{safe} if it has no emanating execution ending in an
error state. Since the asynchronous input program~$\program$ is monotone
(``fewer threads can do less''), state safety
is~$\strictlycoveredby$-closed for~$\program$: if a state~$r$ is safe
in~$\program$ and~$s \strictlycoveredby r$ then~$s$ is also safe. In order
to see that the same is true for the (possibly non-monotone) abstract
\drabb\ program~$\absprogram$, let~$R$ be the concretization of a state~$r$
of~$\absprogram$, \ie\ a set of programs states of input
program~$\program$. Then $\absprogram$'s conservativeness
(\sectionref{Existential Inter-Thread Predicate Abstraction} and
\corollaryref{saturation}) guarantee the safety of states in~$R$, and
$\strictlycoveredby$-closedness of state safety in~$\program$ implies the
safety of states in the~$\strictlycoveredby$-downward closure of~$R$. From
the fact that~$s$'s concretization is in that closure we can conclude that
state safety is also~$\strictlycoveredby$-downward closed
for~$\absprogram$.

Using the previous result we next show that if a subtrace $t_i,\ldots,r_i$
of~$\pi$ contains no error state, then neither does
$t_{i+1},\ldots,r_{i+1}$; induction then gives us the desired result.
$t_1,\ldots,r_1$ contains no error state (otherwise~$\absprogram$ cannot be
safe). The proof of the induction step is by contradiction. Assume
$t_i,\ldots,r_i$ contains no error state, yet $t_{i+1},\ldots,r_{i+1}$ does
so. Let~$r_i'$ be a state such that $r_i' \strictlycoveredby r_i$ and
$(r_i',t_{i+1})\in\absrel$. Such a state is always guaranteed to exist.%
\footnote{Such a state can always be obtained from~$r_i$ by removing the threads that were redirected to an auxiliary
state in transition $(r_i,t_{i+1})\in\absrelmon$.
}
Hence
$r_i$ is safe,
$r_i' \strictlycoveredby r_i$, yet
$r_i'$ not safe,
which contradicts the property that state safety is~$\strictlycoveredby$-closed for~$\absprogram$ and gives the desired result.

\end{enumerate}

\end{proof}

\section{Backward Reachability Tree for the Ticket Algorithm}
\label{appendix: Backward Reachability Tree for the Ticket Algorithm}

\Figureref{ticketproof} shows the backward reachability tree for the Ticket
algorithm obtained using the $\breach$ infinite-state model checker
\cite{KKW12}.
\begin{figure}[ht]
  \centering
  \begin{scaletikzpicturetowidth}{0.96\columnwidth}
    \begin{tikzpicture}[reachtrees]

      \node (s1l25l25) at (341bp,241bp+74bp) [bwnode,minstate,align=center] {$\boldsymbol{\LOCALSTATE{25}}$\\$\boldsymbol{\LOCALSTATE{25}}$};

      \node (s0l18) at (247bp,241bp) [bwnode,minstate,align=center] {$\LOCALSTATE{18}$};
      \node (s0l20) at (247bp-94bp,241bp) [bwnode,minstate,align=center] {$\LOCALSTATE{20}$};
      \node (s0l16) at (435bp,241bp) [bwnode,minstate,align=center] {$\LOCALSTATE{16}$};
      \node (s0l22) at (435bp+94bp,241bp) [bwnode,minstate,align=center] {$\LOCALSTATE{22}$};

      \node (s0l19) at (341bp,241bp) [bwnode,minstate,align=center] {$\LOCALSTATE{19}$};
      \node (s0l1) at (220bp,19bp) [bwnode,minstate,align=center] {$\LOCALSTATE{1}$};
      \node (s0l21) at (435bp,167bp) [bwnode,minstate,align=center] {$\LOCALSTATE{21}$};
      \node (s0l11) at (341bp,167bp) [bwnode,minstate,align=center] {$\LOCALSTATE{11}$};
      \node (s0l10) at (341bp,93bp) [bwnode,minstate,align=center] {$\LOCALSTATE{10}$};
      \node (s0l13) at (435bp,93bp) [bwnode,minstate,align=center] {$\LOCALSTATE{13}$};
      \node (s0l0) at (302bp,19bp) [bwnode,minstate,align=center] {$\LOCALSTATE{0}$};
      \node (s0l5) at (466bp,19bp) [bwnode,minstate,align=center] {$\LOCALSTATE{5}$};
      \node (s0l4) at (384bp,19bp) [bwnode,minstate,align=center] {$\LOCALSTATE{4}$};
      \node (s0l9) at (250bp,93bp) [bwnode,minstate,align=center] {$\LOCALSTATE{9}$};
      \node (s0l17) at (250bp,167bp) [bwnode,minstate,align=center] {$\LOCALSTATE{17}$};
      \draw [covprereledge]       (s0l13) -- (s0l5)  ;
      \draw [covprereledge]       (s0l10) -- (s0l0)  ;
      \draw [covprereledge]       (s0l21) -- (s0l13) ;
      \draw [covprereledge]       (s0l10) -- (s0l4)  ;
      \draw [covprereledge]       (s0l13) -- (s0l0)  ;
      \draw [covprereledge]       (s0l19) -- (s0l11) ;
      \draw [covprereledge]       (s0l10) -- (s0l1)  ;
      \draw [covprereledge]       (s0l9)  -- (s0l0)  ;
      \draw [covprereledge]       (s0l13) -- (s0l4)  ;
      \draw [covprereledge]       (s0l17) -- (s0l9)  ;
      \draw [covprereledge]       (s0l10) -- (s0l5)  ;
      \draw [covprereledge]       (s0l9)  -- (s0l1)  ;
      \draw [covprereledge]       (s0l9)  -- (s0l4)  ;
    \end{tikzpicture}
  \end{scaletikzpicturetowidth}
  \caption[Minimal uncoverability proof for the ticket algorithm]{The algorithm
    used \cite{KKW12} attempts to prove uncoverability of \emph{smaller}
    ($\strictlycoveredby$) undecided elements first, which is why some
    (larger) elements are not expanded}
  \label{figure: ticketproof}
\end{figure}

\end{document}